\newcommand{\E}{\mathbb{E}}
\newcites{Ref}{Other References}
\newtheorem*{theorem*}{Theorem}
\newtheorem{theorem}{Theorem}[paragraph]
\newtheorem{claim}[theorem]{Claim}
\newtheorem{corollary}[theorem]{Corollary}
\newtheorem{example}[theorem]{Example}
\newtheorem*{example*}{Example}
\newtheorem{proposition}[theorem]{Proposition}
\newtheorem{definition}[theorem]{Definition}
\newtheorem{observation}[theorem]{Observation}
\title{Peer Expectation in Robust Forecast Aggregation:\\The Possibility/Impossibility}
\author{Yuqing Kong\\ Peking University}
\date{}
\begin{document}

\maketitle

\begin{abstract}
Recently a growing literature study a new forecast aggregation setting where each forecaster is additionally asked ``what's your expectation for the average of other forecasters' forecasts?''. However, most theoretic results in this setting focus on the scenarios where the additional second-order information helps optimally aggregate the forecasts. Here we adopt an adversarial approach and follow the robust forecast aggregation framework proposed by Arielia, Babichenkoa, and Smorodinsky 2018. We delicately analyze the possibility/impossibility of the new setting when there are two forecasters that either are refinement-ordered or receive conditionally independent and identically distributed (c.i.i.d.) signals. We also extend the setting to a higher level of expectation setting where we can additionally ask ``what's your expectation for the other forecaster's expectation for ...''. The results show that in the above settings, the additional second-order information can significantly improve the aggregation accuracy, and the higher the order, the higher the improvement. 
\end{abstract}

\section{Introduction}

We would like to know the forecast for tomorrow's weather and receive two conflicting forecasts ``70\% rainy'', ``30\% rainy'' from two reliable forecasters respectively. In general, we may receive conflicting forecasts every day regarding investment, health, and outcome of a paper submission, even if both sources are reliable. 

The forecasters provide different forecasts because they have different private information. However, most times we are not aware of the information structure of the forecasters' information, which leads to a challenge for forecast aggregation. The challenge remains even if we partially know the information structure. For example, in the above running example, even if we know that one of the two forecasters is more informative than the other\footnote{Two forecasters first observe the same signal. Then one forecaster observes an additional signal. }, defined as the refinement ordered setting, we still cannot identify who the more informed forecaster is. 


Arielia, Babichenkoa, and Smorodinsky 2018 propose a framework, the robust forecast aggregation, to analyze forecast aggregation without full knowledge of information structure. In the framework, the aggregator takes the forecasters' forecasts as input (e.g. 70\% rainy, 30\% rainy) and outputs an aggregated forecast (e.g. 50\% rainy). The aggregator may know the family $\mathcal{Q}$ the underlying information structure $Q$ belongs to, but does not know the exact $Q$ the forecasters share. The regret is defined as \[L(\text{aggregator})=\sup_{Q\in \mathcal{Q}} (\text{aggregator's forecast}-\text{optimal forecast})^2 \] where the optimal forecast is defined as the Bayesian posterior for the event ``tomorrow will be rainy'' conditioning on two forecasters' information and $Q$. \citet{arieli2018robust} prove that in the two refinement-ordered\footnote{\citet{arieli2018robust} use the terminology, Blackwell-ordered setting.} forecasters setting, even if the aggregator knows it is the refinement-ordered setting, no aggregator has regret $\leq 0.0225$.

To  circumvent the impossibility result, one can elicit the information structure $Q$ from the forecasters directly based on the idea of implementation theory~\citep{jackson2001crash}. However, it may not be practical to implement in the forecast aggregation setting. \citet{prelec2004bayesian} propose a setting that elicits only partial knowledge about the information structure but is more intuitive. Agents are asked to answer a single multi-choice question (e.g. Will it rain tomorrow? Yes/No) and provide their prediction of the distribution over the options (e.g. I predict 70\% people answer yes, 30\% people answer no). ~\citet{prelec2017solution} aggregate the answers to the surprisingly popular option. \citet{klzhw} focus on the open-response question (e.g. Which state in the United States of America is closest to Africa?) and ask the agent what does she think other people will answer (e.g. I think other people will answer ``Florida''). They rank answers based on a hypothesis ``More sophisticated people know the mind of less sophisticated people, but not vice versa.''~\citep{DBLP:conf/sigecom/KongS18a}. For example, if we receive two conflicting answers ``Florida'' and  ``Maine'' from two agents and both agents predict the other agent will answer ``Florida'', we will pick ``Maine'' as the top-ranking answer. In addition to asking voter's intention “If the election were held today, who would you vote for?”, \citet{rothschild2011forecasting} ask voters' expectations: “Regardless of who you plan to vote for, who do you think will win the upcoming election?” They show that voter expectations lead to more accurate forecasts. 

The above work shows that asking people's predictions about other people's answers is not only practical to implement but also significantly improves the accuracy of the aggregation empirically. The above work focuses on aggregating discrete signals. Recently, a growing literature, including \citet{palley2019extracting,martinie2020using,chen2021wisdom,wilkening2022hidden,palley2022boosting,peker2022extracting}, extend the above setting to forecast aggregation. Each forecaster is additionally asked about her expectation for the average of other forecasters' forecasts. In the setting when there are two forecasters, each forecaster is asked:

\begin{description}
	\item [Forecast] What's your prediction for the probability that tomorrow will be rainy?
	\item [Peer Expectation] What's your expectation for your peer's forecast?
\end{description}

Unlike the original forecast aggregation setting, the above setting has zero regrets in the refinement ordered setting. For example, if the first forecaster forecasts ``70\% rainy'' and expects that the other one forecasts ``30\% rainy'' while the second forecaster forecasts ``30\% rainy'' and expects that the other one forecasts ``30\% rainy'', then ``70\% rainy'' is from a more informative forecaster. 

In other scenarios, recent studies provide various schemes to employ the additional peer expectation\footnote{It is also called meta-prediction/higher ordered belief in the literature.}. Nevertheless, most theoretical results focus on specific information structures or assume the existence of a large number of forecasters, such that the additional information helps optimally aggregate the forecasts. 

We study the benefit of peer expectation more delicately under the framework of robust forecast aggregation. Adopting this framework allows us to 1) compare to the original setting directly; 2) delicately study the amount of improvement even if we cannot optimally aggregate the forecasts with the additional peer expectation; 3) study the limitation of this new setting. We can also compare the natural schemes proposed in previous work under this framework. 

The results show that in various settings, the additional peer expectation significantly improves the aggregation. The higher level of expectations can improve the aggregation more. One important insight is that the peer expectation, as well as the higher level of expectation, reveals more information related to the common prior shared by the forecasters. Therefore, we can use them to either construct a better approximation for the prior or a better estimation for the forecasters' expertise. Both ways can be used to improve the aggregation. We introduce the results in detail in the next section.

\subsection{Summary of Results \& Techniques}

We follow \citet{arieli2018robust} and first focus on the setting where two forecasters are asked to forecast a binary event's outcome $W=0,1$. This setting is sufficiently complicated to provide insightful results. When there are more forecasters, to describe the information structure more succinctly, it is more natural to assume an c.i.i.d. setting where forecasters' identities do not matter. Moreover, it's also reasonable to ask about the peer expectation in this setting because when there are more than two forecasters, it's easier for each forecaster to provide the peer expectation when other forecasters' identities do not matter. Thus, later we will discuss a special setting where there are many c.i.i.d. forecasters.


\paragraph{Expectation Hierarchy} To illustrate the results better, we first introduce a recursively defined concept, expectation hierarchy \citep{samet1998iterated} which is similar to belief hierarchy \citep{harsanyi1967games} and common knowledge \citep{aumann1976agreeing}. For each forecaster, level $\ell=1$ expectation is her forecast for the probability that outcome $W=1$, which is also the expectation for $W$. Level $\ell=2$ expectation is her expectation for the other forecasters' forecast\footnote{Level 1 expectation is derived from the first order belief and level 2 expectation is derived from the second order.}. Level $\ell=3$ expectation is her expectation for the other forecasters' expectation for her forecast. We can obtain higher level of expectations by asking ``What is your expectation for your peer's expectation for your forecast? what is your expectation for your peer's expectation for your expectation for your peer's forecast? '' and so forth. The original forecast aggregation is interpreted as the level $1$ setting. This paper's focus is the peer expectation setting $\ell=2$, the most practical setting with nontrivial level of expectations.

 We study the possibility/impossibility of peer expectation in various natural scenarios studied by \citet{arieli2018robust}. 

\paragraph{Refinement-ordered} As we explained before, the most practical peer expectation setting ($\ell=2$) implies zero regret, while the original setting $\ell=1$ has regret at least $0.0225$ \citep{arieli2018robust}. 


\paragraph{Conditionally Independent} Here two forecasters' information are independent conditioning on the outcome. A special case is the setting where the forecasters' signals are not only conditionally independent but also identically distributed (C.I.I.D.). 
	\begin{description}
	\item [Upper Bound (C.I.I.D.)] We propose a simple aggregator, C.I.I.D. aggregator $a_{ciid}$ (\Cref{def:iid}). We show that $a_{ciid}$'s regret is the maximum of a continuous function of 5 variables, where the numerical estimation is 0.00391*. We then improve the above results by providing a modified aggregator, the Hard Sigmoid aggregator $a_{hs}$ (\Cref{def:hs}), that is based on a hard sigmoid function, whose regret is also the maximum of a function of 5 variables, where the numerical estimation is 0.00211*.
	\item [Upper Bound] We reduce the computation of any aggregator's regret to computing the maximum of a function of 9 variables. We propose a simple aggregator, average expectation aggregator $a_{ae}$ (\Cref{def:ae}). $a_{ae}$ uses the average of the peer expectations to estimate the prior.  The numerical estimation for $a_{ae}$'s regret is 0.0072*\footnote{The star mark here and in the table shows that the maximum is numerically estimated by Matlab.}. The reduction result also allows us to compare with other natural aggregators that work for two forecasters proposed in other works (see \Cref{table:compare}). We then improve the above results by providing a modified aggregator, $a_{we}$ (\Cref{def:we}). $a_{we}$ uses the weighted average of the peer expectations to estimate the prior. The regret estimation of $a_{we}$ is 0.0040*. Finally, we combine the hard sigmoid trick proposed in the c.i.i.d. setting and propose the Weighted Hard Sigmoid aggregator $a_{whs}$ (\Cref{def:whs}) whose regret estimation is 0.00255*.
	\item [Lower Bound] We prove that no aggregator can have regret $\geq 0.00144$ in this setting (including the special c.i.i.d. setting). 
	\end{description}
	
\paragraph{Conditionally Independent Conditioning on a Shared Signal}  A slight generalization of the conditionally independent setting is the setting where the forecasters observe a common signal $S$ and their private signals $S_1,S_2$ are independent conditioning on $S$ and the outcome $W$. We show that this generalized setting has the same upper/lower bound as the conditionally independent setting. 
	
\paragraph{Many C.I.I.D.} When many forecasters observe identically distributed signals  which are independent conditioning on the outcome, each forecaster will be asked ``what's your expectation for the average of other forecasters' forecast?'' \citet{arieli2018robust} implicitly show that for all aggregator, there must exists c.i.i.d. $Q$ such that when the number of forecasters goes to infinite, the aggregator's regret $\geq 0.093$. In contrast, when we query their expectations for a random forecaster's forecast, we can easily construct an aggregator $a^n_{ciid}$ whose regret goes to zero when the number of forecasters $n$ goes to infinite \footnote{\label{manyiid}\citet{chen2021wisdom} also show this result with a slightly different analysis. They assume infinite forecasters initially while we write down the regret of finite n and shows that the regret converges to zero.}.

\paragraph{Higher Levels} In the conditionally independent setting, we can construct an aggregator whose regret goes to zero as the level $\ell$ goes to infinite. In general, we show that in a large family of information structures, including the conditionally independent setting, the level $\ell$ expectation converges to the two forecasters' prior forecast. In general, there exists counter examples that the regret is non-trivial even with infinite level of expectations.






\begin{table*}

	\begin{center}
\begin{tabular}{ c|c|c|c|c } 
 \hline
 & 2-Refinement & 2-Conditionally & 2-C.I.I.D. & $\infty$-C.I.I.D. \\
 & Ordered & Independent &  & (Fixed Q) \\
 \hline \hline  
 Original [ABS 2018] & 0.0225 & (0.0225,0.0250*) & (0.0225,0.0250*) & $\exists\ Q$, (0.093,-)  \\
($\ell =1$) & &  &   \\

 \hline 
 + Peer Expectation  & 0 & (0.00144,0.00255*) & (0.00144,0.00211*) & $0 ^{\ref{manyiid}}$  \\
($\ell =2$) &  &  &  &  \\
 \hline 
 + Exp of exp of ... & 0 & 0 & 0 & 0 \\
 ($\ell =\infty$) &  &  &  &  \\
 
 \hline
\end{tabular}
\end{center}
\caption{Regret Bounds: The first row is the original setting where the forecasters only report forecasts, the second row is the setting where the forecasters additionally report expectations for other forecasters' forecasts, the third row is the setting where the forecasters report an infinite level of expectations. Different columns indicate different information structures among the forecasters. (LB,UB) means that in the setting, no aggregator can have regret $<LB$, and there exists an aggregator whose regret is $\leq UB$.}
\end{table*}

\paragraph{Insights Behind the Aggregators} To analyze the conditionally independent setting, first note that with the prior forecast $\mu$, we can optimally aggregate the forecasts. Thus, constructing the aggregator is reduced to constructing a proxy for the unknown prior $\mu$. To design the proxy, we have the following insights. \begin{description}
	\item [Higher Level $\Rightarrow$ Better Proxy for Prior] In the original setting, \citet{arieli2018robust} use the average of forecasts to estimate $\mu$. In the new setting, we use the average of peer expectations to estimate $\mu$. We show that the peer expectation ($\ell=2$) is a better proxy for $\mu$ than the forecast ($\ell=1$). 
	As level $\ell$ goes to infinite, the estimation converges to $\mu$ thus the regret goes to zero. 
	\item [Distinct Forecasts $\Rightarrow$ Perfect Proxy (C.I.I.D.)] In the c.i.i.d. setting, when the forecaster reports distinct forecasts, we can recover $\mu$ directly with the forecasts and the additional peer expectations. This insight significantly eases the upper bound analysis in the c.i.i.d. setting, while the non-trivial upperbound analysis\footnote{A trivial upperbound analysis for c.i.i.d. directly uses the upperbound for the general two conditionally independent forecasters setting as the upperbound for c.i.i.d.} for c.i.i.d. remains to be an open question in the original setting. 
	\item[Same Forecast \& Expectation $\Rightarrow$ Forecast = Prior] When forecaster 1's forecast and peer expectation are the same, either the other forecaster is a perfect forecaster who knows $W$, or forecaster 1's forecast is the prior. In another case, when forecaster 1's peer expectation is far from her forecast, the unknown prior $\mu$ will be even further. This insight is more delicate compared to the above two. It inspires us to use weighted average to estimate the prior, combined with a hard sigmoid function. This significantly reduces the regret.
\end{description}


\paragraph{Main Techniques} We highlight some of the techniques especially the ones that are different from the original setting. 
\begin{description}
    \item[Relationship between Forecasts \& Expectations] The original setting only deals with forecasts. The forecast only depends on the relationship between her private signal and $W$. In contrast, expectation depends on the relationship among three random variables $S_1,S_2,W$, which may involve many factors especially when the signal's support size $|\Sigma|$ is large. 
One useful observation is that in the conditionally independent setting, forecaster $1$'s expectation for forecaster 2's forecast can be written as $f_1 o_2 + (1-f_1) z_2$ where $f_1$ is forecaster $1$'s forecast and $o_2,z_2$ describe the expectation for forecaster 2's forecast from the perspective of a third party who knows the outcome $W$. This simple observation simplifies the analysis.
	\item [Bilinear Programming] In the conditionally independent setting, aided by the above observation, we can reduce the regret computation to a bilinear programming with 3 linear constraints for each side. This helps reduce the size of $Q$'s support in the regret computation. In the c.i.i.d. setting, combined with the second insight (\emph{distinct forecasts...}), we can also reduce the regret computation of a large family of aggregators, while the non-trivial reduction for c.i.i.d. remains to be an open question in the original setting.
	\item [Improving with Weighting \& Hard Sigmoid] In the original setting, the natural forecasts average proxy for the prior has already achieved relatively good performance ($0.0260$) compared to the lowerbound ($0.0225$). Unlike the original setting, in the new setting, there exists a certain amount of gap between either the forecasts average proxy ($0.0260$) or the expectations average proxy ($0.0072$) and the lowerbound ($0.00144$). The major reason is that directly using forecasts or expectations average ignores the relationship between forecasts and expectations. We develop more sophisticated techniques, including weighting and hard sigmoid functions, to utilize the forecasts and expectations simultaneously. 
\end{description}

\subsection{Related Work}

Starting from \citet{bates1969combination}, a large literature studies forecast aggregation. Many consider empirical validation of the schemes \citep{makridakis1982accuracy}, a setting with historical data \citep{10.2307/2344546}, a repeated setting \citep{vovk1990aggregating}, a setting the aggregator knows the information structure \citep{bunn1975bayesian}. See survey of many other works in \citet{clemen1989combining, timmermann2006forecast}. 

\paragraph{One-shot Robust Forecast Aggregation} This work's setting focuses on the setting of one-shot robust forecast aggregation \citep{arieli2018robust,neyman2022you,LEVY2022105075}. \citet{arieli2018robust} propose an analysis framework of robust forecast aggregation where the aggregator aims to aggregate the forecasts for a binary event's outcome, with very limited knowledge about the information structure. \citet{neyman2022you} consider the setting where the outcome can be continuous and propose a family of information structure where aggregating by simply averaging forecasts is sufficiently good. \citet{LEVY2022105075} focus on the setting where the aggregator knows the marginal prior distributions of the forecasters and wants to design scheme that is robust to the correlation structure among the forecasters. \citet{de2021robust} explore a similar setting. This work's settings follow from the setting of \citet{arieli2018robust}. 


In contrast to the above work, we analyze a new setting where we have access to each forecaster' expectation for the other forecaster' forecast. The next paragraph shows that this setting has drawn increasing attention recently. 

\paragraph{One-shot Information Aggregation with Second Order Belief} \citet{prelec2004bayesian} proposes the framework where agents are asked to provide both their answers and predictions for a single multi-choice question. \citet{prelec2017solution} aggregate people's answers to the option which is surprisingly popular compared to the prior constructed by the predictions. \citet{klzhw} focus on open-response question and ask agents what do they think other people will answer. They use the information to rank the answers without any prior knowledge. \citet{hosseini2021surprisingly} and \citet{schoenebeck2021wisdom} employ this framework to rank a set of predetermined candidates (e.g. which painting is better? which presidential candidate is better?). To forecast the election outcome, \citet{rothschild2011forecasting} show that voters' expectations for other people's votes provide a more accurate forecast. 

The above work focuses on aggregating discrete signals. A growing literature, including \citet{palley2019extracting,martinie2020using,chen2021wisdom,wilkening2022hidden,palley2022boosting,peker2022extracting}, considers the one-shot forecast aggregation with a second order belief. 

This work differs from the prior work in two aspects. First, most prior work \citep{palley2019extracting,martinie2020using,wilkening2022hidden,palley2022boosting,peker2022extracting} focus on the setting where the goal is to estimate a continuous outcome, e.g. a coin's bias. Ideally, each agent can flip the coin several times independently and the optimal aggregated estimation is the simple average of their individual estimations. However, agents share a certain amount of draws. \citet{palley2019extracting} show that with partial knowledge about the relationship between the shared information and individual information, the additional second-order belief recovers the shared information. The optimal aggregation is still linear with a calibration for the shared information. We focus on another natural setting where agents' private signals are independent conditioning on the outcome. In contrast, here the optimal aggregation is non-linear. We pick this setting such that we can compare it to the original setting \citep{arieli2018robust} directly. \citet{chen2021wisdom} also consider this setting while they assume an infinite number of agents where the second-order belief helps optimally aggregate the forecasts. 

Secondly and more importantly, we analyze the benefit of the higher level information by extending the adversarial framework in \citet{arieli2018robust}. This allows us to delicately quantify the possibility/impossibility of the second-order belief, even when the second-order belief cannot help optimally aggregate the forecasts. 

\paragraph{Higher Order Belief} \citet{chen2021wisdom} consider the setting where the higher order belief can be elicited. In this case, each forecaster is asked to predict the joint distribution over their peers' private signals and outcome, such that the perfect aggregation is possible with the full hierarchy. We consider the setting where the forecasters only need to report their higher-level expectations \cite{samet1998iterated}, which can be seen as a compression of their higher-order beliefs. This setting is more simplified, while there exist situations where the perfect aggregation is not possible even with the full hierarchy. For example, when two agents' signals are independent bits where each bit is 0 with probability 0.5 and $W$ is their xor, the forecasts and all levels of expectations are 1/2. Thus, in this case, the perfect aggregation is not possible even with the full hierarchy of expectations. 

Belief hierarchies \citep{harsanyi1967games} play an important role in incomplete information game theory. A similar concept is also used in the notion of common knowledge \citep{aumann1976agreeing}. The higher-order belief can provide better aggregation by revealing the correlation between the agents' private signals. \citet{wang2021forecast} empirically show that estimating the agents' expertise according to their forecasts' correlation can improve the results of forecasts aggregation consistently. Hopefully, the one-shot information aggregation with higher-order belief can provide more scenarios for the application of belief hierarchies.

\paragraph{Prior-Independent Mechanism Design \& Implementation Theory} Robust forecast aggregation shares the same spirit as the prior-independent mechanism design~\citep{dhangwatnotai2010revenue,devanur2011prior,chawla2013prior,fu2015randomization,allouah2020prior, hartline2020benchmark}. The problem is formulated as a min-max game between the designer, who designs the aggregator/mechanism, and nature, who picks the prior distribution from a class. The designer does not have access to the prior distribution but knows the class. Without prior distribution, implementation theory~\citep{jackson2001crash} elicits the prior distribution from agents who participate in the mechanism. This work focuses on eliciting second-order information, i.e., agents' prediction for other agents. As we mentioned before, this setting has been implemented empirically and proved to be practical by prior studies. Nevertheless, the second-order information only reveals partial information about the prior distribution. This work analyzes the possibility/impossibility of the additional second-order information delicately in a natural setting.

 %




\section{Model}

\paragraph{Forecasts} Two agents\footnote{We also call them forecasters.} are asked to forecast tomorrow's weather $W$ whose outcome is $0$ (sunny) or $1$ (rainy). Each agent $i=1,2$ receives a private signal $s_i\in \Sigma$, which is the realization of random variable $S_i$. Agents share the same prior $Q$ which is the joint distribution over agents' private signals $(S_1,S_2)$ and tomorrow's weather $W$. Agent $i$ who receives $s_i$ will report her forecast $f_i=\Pr_Q[W=1|S_i=s_i]$. We assume $\mu=\Pr_Q[W=1]\in(0,1)$ because otherwise both agents will report 0 or 1 and there is no need for forecast aggregation. 

\paragraph{Peer Expectation} We also ask each agent $i$ to report her expectation for the other agent's forecast, $p_i$. $p_1 = \E_Q[f_2|S_1=s_1]=\sum_x \Pr_Q[S_2=x|S_1=s_1] \Pr_Q[W=1|S_2 = x]$. $p_2$ is defined analogously. 

\paragraph{Aggregator} An aggregator $a(\cdot)$ takes $f_1,f_2,p_1,p_2$ as input and outputs a forecast $a(f_1,f_2,p_1,p_2)\in [0,1]$ for the probability that tomorrow will rain $W=1$. Note that the aggregator does not have the knowledge of $Q$. The naive aggregator is the average of forecasts $a_{naive}(f_1,f_2,p_1,p_2)=\frac{f_1+f_2}{2}$. 

Given agents' private signals $s_1,s_2$ explicitly and with the knowledge $Q$, the optimal forecast is $f^*=\Pr_Q[W=1|S_1=s_1,S_2=s_2]$. We care about the regret compared to the optimal forecast

\[L(a)=\sup_Q L(a,Q) = \sup_Q \E_Q(a(f_1,f_2,p_1,p_2)-f^*)^2.\]

The problem will be $\min_a L(a)$. 

\paragraph{Strategic Agents} In the above model, agents are non-strategic and report their forecasts and expectations truthfully. However, even if agents are strategic, if we can pay them later when $W$ is revealed, we can easily incentivize truthful reports for both forecasts and peer expectations with proper scoring rules (\Cref{sec:strategic}).

\subsection{Warm-up: Two Refinement-Ordered Agents}

Two agents are refinement-ordered if they both observe the same signal $s$ which is a realization of r.v. $S$, and one agent, say agent 1, observes an additional signal $s'$ which is a realization of r.v. $S'$. Formally, $S_1=(S,S')$ and $S_2=S$. However, the aggregator does not know which agent is more informative. In this case, we use the following aggregation scheme:
\[ a(f_1,f_2,p_1,p_2) = \begin{cases}
	f_1 & \text{if $p_1=f_2$}\\
	f_2 & \text{otherwise}
\end{cases} \]

\begin{observation}\label{obs:blackwell}
When two agents are refinement-ordered, they will have the same peer expectations $p_1=p_2$, and the above aggregator has zero regret.\end{observation}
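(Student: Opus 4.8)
The plan is to first pin down the four reported numbers in closed form using measurability and the tower property, and then check case by case that the displayed aggregator reproduces the optimal forecast $f^*$ exactly, so that $L(a,Q)=0$ for every refinement-ordered $Q$. Writing $S_1=(S,S')$ and $S_2=S$ for the ordering in which agent $1$ is the more informed one, the pooled information $(S_1,S_2)$ equals $(S,S')$, so $f^*=\Pr_Q[W=1\mid S=s,S'=s']=f_1$.

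For the equality $p_1=p_2$, the key point is that $f_2=\Pr_Q[W=1\mid S=s]$ is a function of $S$ alone. Since agent $1$ observes $S$ as part of her signal, she knows $f_2$ with certainty, giving $p_1=\E_Q[f_2\mid S_1=s_1]=f_2=\Pr_Q[W=1\mid S=s]$. For $p_2$, agent $2$ averages $f_1=\Pr_Q[W=1\mid S=s,S'=s']$ over the conditional law of $S'$ given $S=s$, and by the law of total probability this collapses to $p_2=\E_Q[\,\Pr_Q[W=1\mid S=s,S']\mid S=s\,]=\Pr_Q[W=1\mid S=s]=f_2$. Hence $p_1=p_2$, both equal to the less-informed forecast. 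The same computation with the roles of the agents swapped gives $p_1=p_2=f_1$, so $p_1=p_2$ holds in either ordering.

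For zero regret I would then feed these values into the rule. In the ordering above $p_1=f_2$, so the test $p_1=f_2$ succeeds and the aggregator returns $f_1=f^*$. Since the aggregator is a fixed, label-asymmetric function that does not know which agent is more informed, I would also verify the opposite ordering (agent $2$ more informed, $f^*=f_2$): there $p_1=\E_Q[f_2\mid S=s]=f_1$, so the test $p_1=f_2$ holds iff $f_1=f_2$; when $f_1\neq f_2$ the rule takes the ``otherwise'' branch and returns $f_2=f^*$, and when $f_1=f_2$ we have $f^*=f_1=f_2$ so either branch returns $f^*$.

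The main (mild) obstacle is precisely this interaction between the degenerate tie $f_1=f_2$ and the asymmetry of the rule: one must rule out that a single realized profile $(f_1,f_2,p_1,p_2)$ is consistent with two refinement orderings that disagree on $f^*$. This is settled by noting that any such coincidence forces $p_1=p_2=f_1=f_2$, in which case $f^*$ is pinned to the common value independently of the ordering. Granting this, $a(f_1,f_2,p_1,p_2)=f^*$ holds for every realization, whence $L(a,Q)=\E_Q(a-f^*)^2=0$ for all refinement-ordered $Q$ and $L(a)=\sup_Q L(a,Q)=0$.
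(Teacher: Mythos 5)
Your proof is correct and follows essentially the same route as the paper's: compute $p_1=f_2$ from the more-informed agent's knowledge of $S$, derive $p_2=f_2$ by the tower property, and then run the two-case analysis (including the tie $f_1=f_2$ under the reversed ordering) to show the aggregator always outputs the more informative forecast $f^*$. The extra remark about profiles consistent with both orderings is harmless but unnecessary, since the pointwise case analysis under each fixed $Q$ already yields $a=f^*$ everywhere.
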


\begin{proof}[Proof of \Cref{obs:blackwell}]
	We first consider the case when agent 1 is the more informative agent. When agent 1 receives $(s,s')$, she knows agent 2's private signal $S_2=S=s$, her expectation for agent 2's forecast is $p_1=f_2$. For agent 2, her expectation for agent $1$'s forecast is $p_2=\sum_x \Pr_Q[S'=x|S=s]\Pr_Q[W=1|S'=x,S=s]=\sum_x \Pr_Q[W=1,S'=x|S=s]=\Pr_Q[W=1|S=s]=f_2$. Thus, $p_1=p_2=f_2$, $a(f_1,f_2,p_1,p_2)=f_1$. 

When agent 2 is more informative, $p_1=p_2=f_1$, if $p_1=f_2$, then $f_1=f_2$ and $a(f_1,f_2,p_1,p_2)=f_1=f_2$, otherwise, $a(f_1,f_2,p_1,p_2)=f_2$ as well. Therefore, the aggregator always outputs the forecast of the more informative agent. 
\end{proof}







\section{Two Conditionally Independent Agents}

In this setting, $S_1$ and $S_2$ are independent conditioning on $W$. 

\subsection{A Useful Observation: Taking $W$'s Perspective}\label{sec:obs}

This section introduces a useful observation that will be used multiple times. Each agent's forecast only depends on the relationship between her private signal and $W$. However, her expectation depends on the joint distribution among three random variables $S_1,S_2,W$, which involve many parameters especially when the signal's support size $|\Sigma|$ is large. 

In contrast, in the conditionally independent setting where agents' private signals are independent conditioning on $W$, the peer expectation can be described succinctly. 

In detail, it's useful to have an imagined third party who knows the outcome $W$. It's sufficient to describe each agent's expectation using her forecast and the third party's expectations. 

\paragraph{Notations $\mu, f_i^x, q_i^x, \mathbf{f}_i,\mathbf{q}_i$} Given $Q$, let $\mu$ denote $\Pr_Q[W=1]$. For all $x\in\Sigma$, let $f_i^x$ denote agent i's forecast conditioning on she receives signal $x$, $\Pr_Q[W=1|S_i=x]$. Let $q_i^x$ denote the prior probability that agent $i$ receives $x$, i.e., $\Pr_Q[S_i=x]$. For $i=1,2$, let $\mathbf{f}_i=(f_i^x)_{x\in\Sigma}$ denote a $|\Sigma|$-dimensional \emph{forecast vector}. Let $\mathbf{q}_i=(q_i^x)_{x\in\Sigma}$ denote a $|\Sigma|$-dimensional \emph{prior vector}. 

\begin{definition}[$W$'s perspective $o_i,z_i$]
	From the perspective of a third party who knows the outcome $W$, let $z_i$ (\textbf{z}ero) denote his expectation for agent $i$'s forecast conditioning on $W=0$, $\sum_x f_i^x \Pr_Q[S_i=x|W=0]$ and $o_i$ (\textbf{o}ne) denote his expectation for agent $i$'s forecast conditioning on $W=1$, $\sum_x f_i^x \Pr_Q[S_i=x|W=1]$ for $i=1,2$. \end{definition}

\Cref{claim:fix} shows that an agent 1's expectation for agent 2 is a convex combination of the third party's expectations, where the convex coefficients are agent 1's forecast $f_1,1-f_1$. 

\begin{restatable}{observation}{claimfix}\label{claim:fix}
$p_1=(1-f_1) z_2 + f_1 o_2$, $p_2= f_2 o_1 + (1-f_2) z_1$, $\mu o_i + (1-\mu) z_i=\mu, i=1,2$ and the optimal forecast is $f^*=\frac{(1-\mu) f_1 f_2}{(1-\mu)f_1 f_2 + \mu(1-f_1)(1-f_2)}$. 
\end{restatable}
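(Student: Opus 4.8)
The plan is to derive all four identities directly from the definitions together with the conditional-independence assumption $S_1\perp S_2\mid W$; no part requires more than unfolding conditional probabilities and applying Bayes' rule, so the work is almost entirely bookkeeping.

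For $p_1=(1-f_1)z_2+f_1 o_2$, I would start from the definition $p_1=\sum_x \Pr_Q[S_2=x\mid S_1=s_1]\,f_2^x$ and rewrite the conditioning probability by introducing $W$: $\Pr_Q[S_2=x\mid S_1=s_1]=\sum_w \Pr_Q[S_2=x\mid W=w,S_1=s_1]\Pr_Q[W=w\mid S_1=s_1]$. Conditional independence collapses $\Pr_Q[S_2=x\mid W=w,S_1=s_1]$ to $\Pr_Q[S_2=x\mid W=w]$, and by definition of the forecast $\Pr_Q[W=1\mid S_1=s_1]=f_1$, $\Pr_Q[W=0\mid S_1=s_1]=1-f_1$. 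Substituting and separating the $w=0$ and $w=1$ terms gives $f_1\sum_x f_2^x\Pr_Q[S_2=x\mid W=1]+(1-f_1)\sum_x f_2^x\Pr_Q[S_2=x\mid W=0]=f_1 o_2+(1-f_1)z_2$, which is the claim; the identity for $p_2$ is symmetric.

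For the constraint $\mu o_i+(1-\mu)z_i=\mu$, I would note that $\mu\Pr_Q[S_i=x\mid W=1]+(1-\mu)\Pr_Q[S_i=x\mid W=0]=\Pr_Q[S_i=x]=q_i^x$, so that $\mu o_i+(1-\mu)z_i=\sum_x f_i^x q_i^x=\sum_x \Pr_Q[W=1,S_i=x]=\mu$. This is just the tower property $\E[\E[W\mid S_i]]=\E[W]=\mu$ written coordinatewise.

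For the optimal forecast I would apply Bayes' rule to $f^*=\Pr_Q[W=1\mid S_1=s_1,S_2=s_2]$ and use conditional independence to factor the likelihoods, $\Pr_Q[S_1=s_1,S_2=s_2\mid W=w]=\Pr_Q[S_1=s_1\mid W=w]\Pr_Q[S_2=s_2\mid W=w]$. The remaining step is to eliminate the likelihoods in favour of $f_1,f_2,\mu$ via $\Pr_Q[S_i=s_i\mid W=1]=f_i\Pr_Q[S_i=s_i]/\mu$ and $\Pr_Q[S_i=s_i\mid W=0]=(1-f_i)\Pr_Q[S_i=s_i]/(1-\mu)$, both obtained by rearranging Bayes' rule for $f_i$. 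The common factor $\Pr_Q[S_1=s_1]\Pr_Q[S_2=s_2]$ then cancels between numerator and denominator, and clearing the $\mu,1-\mu$ denominators yields the stated closed form. The only genuinely error-prone part is this last substitution and cancellation; every other piece is a one-line consequence of a definition, so I expect the bookkeeping in the $f^*$ computation to be the main place to take care.
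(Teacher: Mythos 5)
Your proposal is correct and follows essentially the same route as the paper's proof: conditioning on $W$ and using conditional independence for the $p_i$ identities, the tower property for $\mu o_i+(1-\mu)z_i=\mu$, and Bayes' rule with the likelihood substitutions $\Pr_Q[S_i=s_i\mid W=1]=f_iq_i^{s_i}/\mu$, $\Pr_Q[S_i=s_i\mid W=0]=(1-f_i)q_i^{s_i}/(1-\mu)$ followed by cancellation for $f^*$. No gaps; the bookkeeping you flag as error-prone is exactly the computation the paper carries out.
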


Proof of \Cref{claim:fix} is deferred to \Cref{sec:proofs}. The above succinct form allows us to reduce the support size later in the upper-bound analysis.

\subsection{Upperbound}

\paragraph{Average Expectation Aggregator}

In the conditionally independent case, if we are given $\mu=\Pr_Q[W=1]$, we can optimally aggregate the agents' forecast to $\frac{(1-\mu) f_1 f_2}{(1-\mu)f_1 f_2 + \mu(1-f_1)(1-f_2)}$ \citep{bordley1982multiplicative}. However, we do not know $\mu$. \citet{arieli2018robust} use the average of forecasts to estimate $\mu$. Let $a_{avg}$ denote this aggregator. We use the average of the expectation of forecasts as an estimation for $\mu$. 

\begin{definition}[Average Expectation Aggregator] \label{def:ae}
\[ a_{ae}(f_1,f_2,p_1,p_2) = \frac{(1-\hat{\mu}) f_1 f_2}{(1-\hat{\mu})f_1 f_2 + \hat{\mu}(1-f_1)(1-f_2)}\]
where $\hat{\mu} = \frac{p_1+p_2}{2}$. 

When one of $f_1,f_2$ is 0 (1), the aggregator will output 0 (1)\footnote{So do the following aggregators. We will omit this part in the descriptions of the following aggregators.}. 

\end{definition}

In $a_{avg}, \hat{\mu}=\frac{f_1+f_2}{2}$ and its regret estimation is 0.0260. \citet{arieli2018robust} modify the scheme slightly\footnote{In the modified version, $\hat{\mu}=\begin{cases}
	0.49(f_1+f_2)& f_1+f_2\leq 1\\
	0.49(f_1+f_2)+0.02 & f_1+f_2> 1\\
\end{cases}$} to reduce the regret estimation to 0.0250.

\paragraph{Reduction of Regret Computation}

The analysis of $a_{ae}$ is more complicated because it involves a higher level information $p_1,p_2$. We will use \Cref{claim:fix} to reduce the regret computation to bilinear programming with 3 linear constraints for each side, such that we can reduce the support size of $Q$. 

\begin{proposition}\label{lem:basis}
	When $S_1$ and $S_2$ are independent conditioning on $W$, given any aggregator $a$, for all $Q$ over $\Sigma^2\times\{0,1\}$, there exists a prior $Q'$\footnote{The prior $Q'$ can depend on the aggregator $a$ and the prior $Q$.} over $\{A,B,C\}^2\times\{0,1\}$ such that $\E_Q((a(f_1,f_2,p_1,p_2)-f^*)^2)\leq \E_{Q'}((a(f_1,f_2,p_1,p_2)-f^*)^2)$.
\end{proposition}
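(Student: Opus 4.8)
The plan is to reduce the prior to a three-symbol prior by a moment/support-reduction argument built on \Cref{claim:fix}. First I would reparametrize the prior. Since $S_1,S_2$ are independent given $W$, the prior $Q$ is determined by $\mu=\Pr_Q[W=1]$ together with, for each agent $i$, the marginal distribution $\nu_i$ of the forecast $f_i$ on $[0,1]$; note $\E_{\nu_i}[f_i]=\mu$ by the tower rule, and the conditional signal laws are recovered as $\Pr_Q[S_i=\cdot\mid W=1]\propto f_i\,\nu_i$ and $\Pr_Q[S_i=\cdot\mid W=0]\propto(1-f_i)\,\nu_i$. By \Cref{claim:fix}, for a realized pair the reported numbers are $f_1,f_2$, $p_1=z_2+f_1(o_2-z_2)$, $p_2=z_1+f_2(o_1-z_1)$, and $f^*$ depends only on $(f_1,f_2,\mu)$, where $z_i=\tfrac{\mu-m_i}{1-\mu}$ and $o_i=\tfrac{m_i}{\mu}$ with $m_i=\E_{\nu_i}[f_i^2]$. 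Hence the per-realization squared loss is a fixed bounded function $g(f_1,f_2)$ once $(\mu,m_1,m_2)$ are fixed, and expanding the conditional-independence weights gives
\[
\E_Q\big[(a-f^*)^2\big]=\int\!\!\int\Big(\tfrac{f_1 f_2}{\mu}+\tfrac{(1-f_1)(1-f_2)}{1-\mu}\Big)g(f_1,f_2)\,d\nu_1(f_1)\,d\nu_2(f_2)=:\E_{\nu_1\times\nu_2}[h],
\]
a functional that is \emph{bilinear} in $(\nu_1,\nu_2)$, whose integrand $h$ depends on $\nu_1,\nu_2$ only through the scalars $\mu,m_1,m_2$.

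Second, I would reduce each side separately. Fix $\nu_2$; then $z_2,o_2,m_2$ are fixed, and to keep $h$ unchanged while varying $\nu_1$ it suffices to hold $\mu$ and $m_1$ fixed (these determine $z_1,o_1$). Writing $\psi(f_1):=\int h(f_1,f_2)\,d\nu_2(f_2)$, a bounded function, the regret equals the linear functional $\int\psi\,d\nu_1$, to be increased over probability measures $\nu_1$ subject to the three linear constraints $\int d\nu_1=1$, $\int f_1\,d\nu_1=\mu$, $\int f_1^2\,d\nu_1=m_1$. I would argue support reduction directly: if $\nu_1$ has atoms at $x_1,\dots,x_n$ with $n\ge 4$, the system $\sum_j c_j=\sum_j c_j x_j=\sum_j c_j x_j^2=0$ has a nonzero solution $c$; moving the masses in the direction $c$ preserves all three moments and changes $\int\psi\,d\nu_1$ linearly, so choosing the sign to not decrease it and increasing the step until an atom's mass vanishes yields a measure with one fewer atom, the same $(\mu,m_1)$, and weakly larger regret. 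Iterating leaves $\nu_1$ with at most three atoms. The identical argument with the roles swapped (now $\nu_1$ fixed with $\le 3$ atoms, holding $\mu,m_2$ fixed) reduces $\nu_2$ to at most three atoms, never decreasing the regret.

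Finally, I would reassemble the reduced data $(\mu,\nu_1',\nu_2')$ into a conditionally independent prior $Q'$ on $\{A,B,C\}^2\times\{0,1\}$ by setting $\Pr_{Q'}[W=1]=\mu$ and the conditional signal laws $\propto f_i\,\nu_i'$ and $\propto(1-f_i)\nu_i'$ as above; the mass and mean constraints guarantee these are valid distributions, and by construction they reproduce exactly the forecasts, peer expectations, and $f^*$ encoded by $(\mu,\nu_1',\nu_2')$. Thus $\E_{Q'}[(a-f^*)^2]=\E_{\nu_1'\times\nu_2'}[h]\ge\E_{\nu_1\times\nu_2}[h]=\E_Q[(a-f^*)^2]$, as claimed.

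The main obstacle is the bookkeeping of which statistics of $\nu_i$ must be frozen so that the integrand $h$ is literally unchanged during the reduction: \Cref{claim:fix} is exactly what makes this finite, showing the only couplings are through $\mu$ and the second moments $m_1,m_2$ (equivalently $z_i,o_i$), so that precisely three linear constraints per side appear and the vertex solutions carry at most three atoms. A secondary point is that the aggregator $a$ — hence $g,\psi$ — may be discontinuous, so I deliberately avoid any compactness or attainment claim and instead argue improvement by the elementary atom-elimination step, which needs only that $\nu_1$ is finitely supported to begin with (true when $\Sigma$ is finite; for infinite $\Sigma$ one first applies a Richter--Rogosinski reduction against the four functions $1,f_1,f_1^2,\psi$ to reach a finite support).
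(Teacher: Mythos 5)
Your proposal is correct and takes essentially the same route as the paper: fixing $\mu$ and the second moments $m_i=\mu o_i$ so that the per-realization loss is unchanged (the paper's \Cref{claim:multilinear}), exploiting bilinearity of the regret in the two forecast/prior distributions subject to exactly three linear constraints per side (the paper's \Cref{claim:linear}), reducing each side to at most three atoms, and reconstructing $Q'$ via the same converse construction. The only real difference is presentational: your one-side-at-a-time atom-elimination step is an inline proof of the basic-feasible-solution decomposition that the paper simply cites from linear programming.
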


\begin{corollary}
	In the conditionally independent setting, for all aggregator $a$, $L(a) = \sup_{Q\in \mathcal{Q}_{9}} L(a,Q)$ where $\mathcal{Q}_{9}$ is the set of all conditionally independent priors over $\{A,B,C\}^2\times\{0,1\}$. 
\end{corollary}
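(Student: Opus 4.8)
The plan is to turn the regret of a fixed aggregator into a bilinear form in the two agents' forecast-value distributions, and then to shrink the support of each distribution by a linear-algebra perturbation argument. First I would record the consequences of \Cref{claim:fix}. Writing $m_i=\E_Q[f_i^2]$ for the second moment of agent $i$'s forecast, Bayes' rule gives $o_i=m_i/\mu$ and $z_i=(\mu-m_i)/(1-\mu)$; hence the whole tuple $(p_1,p_2,f^*)$ is a fixed function of the realized forecasts $(f_1,f_2)$ once $\mu,m_1,m_2$ are fixed. Let $G_i$ denote the distribution of agent $i$'s forecast value. Every distribution on $[0,1]$ with mean $\mu$ is realizable as such, and conditional independence makes the joint law of the forecasts a product under each value of $W$.

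Next I would write the regret as a bilinear form. Combining the two conditional product laws yields $\Pr_Q[f_1=u,f_2=v]=(\frac{uv}{\mu}+\frac{(1-u)(1-v)}{1-\mu})\,G_1(u)G_2(v)$, so
\[
L(a,Q)=\sum_{u,v} h(u,v)\,G_1(u)\,G_2(v),\qquad h(u,v)=\Big(\tfrac{uv}{\mu}+\tfrac{(1-u)(1-v)}{1-\mu}\Big)\big(a(u,v,p_1,p_2)-f^*\big)^2,
\]
where $h$ is a fixed (possibly discontinuous, which does no harm) function as soon as $\mu,m_1,m_2$ are fixed. The objective is linear in $G_1$ for fixed $G_2$ and vice versa.

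The core step is support reduction. Holding $G_2$ fixed, suppose $G_1$ is supported on $k\ge 4$ values $u_1,\dots,u_k$. Preserving normalization, the mean $\mu$, and the second moment $m_1$ imposes three linear equations on a perturbation $\beta\in\mathbb{R}^k$; since $k>3$ there is a nonzero $\beta$ in their common kernel. Moving $G_1$ along $\pm\beta$ keeps all three quantities fixed (so $h$ is unchanged) and changes $L$ linearly; choosing the non-decreasing direction and pushing until an atom vanishes reduces the support size without decreasing $L$. Iterating brings $G_1$ to at most three atoms, after which I repeat the identical argument on $G_2$ with $G_1$ now fixed. Because each step preserves $(\mu,m_1,m_2)$, the function $h$ never changes, so both reductions are genuinely linear and the regret only goes up.

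Finally I would reassemble a prior $Q'$ on $\{A,B,C\}^2\times\{0,1\}$ from the reduced three-atom distributions, labelling the (at most three) forecast values $A,B,C$ and setting $\Pr[S_i=x\mid W=1]=\tfrac{u_x}{\mu}G_i(u_x)$ and $\Pr[S_i=x\mid W=0]=\tfrac{1-u_x}{1-\mu}G_i(u_x)$, with $W$ and the two signals conditionally independent; a one-line check shows this reproduces the intended forecasts, second moments, and expectations, hence the same bilinear value, giving $L(a,Q')\ge L(a,Q)$. The main obstacle worth stating carefully is the realization that the objective is only bilinear after one \emph{also} fixes the second moments $m_1,m_2$ (not merely the means), since $p_1,p_2$ depend on the opposite agent's $o_i,z_i$; this is exactly why there are three linear constraints per side rather than two, and why the bound is three signals rather than two. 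A secondary technical point is that the perturbation must stay within the current finite support, so that no continuity assumption on $a$ is ever required.
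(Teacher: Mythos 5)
Your proposal is correct and takes essentially the same route as the paper: fixing $\mu$ and the second moments $m_1,m_2$ (equivalent to the paper's fixing of $o_1,o_2$, since $o_i=m_i/\mu$ and $z_i=(\mu-m_i)/(1-\mu)$) makes the regret a bilinear form in the two prior/forecast distributions subject to exactly three linear constraints per side, and the support reduction plus reconstruction of a valid conditionally independent prior yields the three-signal bound. Your kernel-perturbation (purification) step is simply an inline proof of the decomposition into basic feasible solutions that the paper invokes by citation, so the two arguments coincide in substance.
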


We use $\mathcal{Q}_{9}$ because each $Q\in \mathcal{Q}_{9}$ is controlled by 9 parameters $\mu, \Pr[S_i=\sigma|W=w], i=1,2,\sigma=A,B,w=0,1$. Note that $\Pr[S_i=C|W=w]=1-\Pr[S_i=A|W=w]-\Pr[S_i=B|W=w]$.

\paragraph{Proof sketch} $Q'$ is a ``basic'' prior compared to $Q$. We will decompose the prior into convex combination of multiple basic priors and show that the regret of $Q$ must be smaller than the regret of a basic prior according to a multi-linear property of the regret.

To guarantee the multi-linear property of the regret (\Cref{claim:multilinear}), the decomposition can only happen when we fix multiple parameters of $Q$ such that given two agents' forecasts, $f_1,f_2$, both the optimal forecast $f^*$ and their expectations $p_1,p_2$ are fixed. In other words, we can only decompose $Q$ in a restricted space. 

\Cref{claim:fix} shows that the restricted space should have fixed $W$'s expectations, $o_i,z_i,i=1,2$. We then show that the restrictions can be translated into linear constraints for prior vectors $\mathbf{q}_i,i=1,2$. In detail, the constraints of the restricted space are 3 linear equations (\Cref{claim:linear}). This is why we pick the support size $|\Sigma|$ of the simple prior to be 3. 

Finally, we use the fact that any feasible solution of a linear system can be written as a convex combination of basic feasible solutions to decompose $Q$ into multiple basic priors.

\begin{proof}[Proof of \Cref{lem:basis}]




We first illustrate the multi-linearity. 

\begin{restatable}{claim}{claimmultilinear}\label{claim:multilinear}
	Fixing $\mu,o_1,o_2,\mathbf{f}_1,\mathbf{f}_2$, the regret is a multilinear function of $\mathbf{q}_1,\mathbf{q}_2$. Formally, there exists a function $\psi$ such that the regret is $\mathbf{q}_1^{\top} \mathbf{\Psi} \mathbf{q}_2$ where $\Psi_{s_1,s_2}=\psi(f_1^{s_1},f_2^{s_2},\mu,o_1,o_2),\forall s_1,s_2$.
\end{restatable}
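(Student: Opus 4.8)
The plan is to exhibit the bilinear form directly by separating, for each signal pair $(s_1,s_2)$, the \emph{pointwise} squared loss from the \emph{probability weight} that $Q$ assigns to that pair. The first step is to notice that the summand $(a(f_1,f_2,p_1,p_2)-f^*)^2$, evaluated at a fixed realization $(s_1,s_2)$, does not depend on the realized outcome $W$: both the aggregator's output and $f^*$ are numbers determined by the reports, not by which $w$ occurs. Hence I may marginalize $W$ out of the expectation and write $L(a,Q)=\sum_{s_1,s_2}\Pr_Q[S_1=s_1,S_2=s_2]\,(a-f^*)^2$.

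Next I would argue that, once $\mu,o_1,o_2$ are held fixed, this pointwise loss depends on the realization only through the two forecast values $f_1^{s_1},f_2^{s_2}$. By \Cref{claim:fix} the optimal forecast is $f^*=\frac{(1-\mu)f_1^{s_1}f_2^{s_2}}{(1-\mu)f_1^{s_1}f_2^{s_2}+\mu(1-f_1^{s_1})(1-f_2^{s_2})}$, and the peer expectations are $p_1=(1-f_1^{s_1})z_2+f_1^{s_1}o_2$ and $p_2=f_2^{s_2}o_1+(1-f_2^{s_2})z_1$, where $z_i=\frac{\mu(1-o_i)}{1-\mu}$ is pinned down by $\mu$ and $o_i$ through the identity $\mu o_i+(1-\mu)z_i=\mu$. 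Thus, with $\mu,o_1,o_2$ fixed, the quantities $p_1,p_2,f^*$, and therefore $a(f_1,f_2,p_1,p_2)$ and the squared loss, are a single deterministic function of $(f_1^{s_1},f_2^{s_2})$; call it $\psi_0(f_1^{s_1},f_2^{s_2},\mu,o_1,o_2)$.

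The crux is the factorization of the probability weight. Conditional independence gives $\Pr_Q[S_1=s_1,S_2=s_2]=\mu\Pr_Q[S_1=s_1|W=1]\Pr_Q[S_2=s_2|W=1]+(1-\mu)\Pr_Q[S_1=s_1|W=0]\Pr_Q[S_2=s_2|W=0]$, and Bayes' rule turns the conditionals into $\Pr_Q[S_i=s_i|W=1]=f_i^{s_i}q_i^{s_i}/\mu$ and $\Pr_Q[S_i=s_i|W=0]=(1-f_i^{s_i})q_i^{s_i}/(1-\mu)$. Substituting and pulling out the common factor $q_1^{s_1}q_2^{s_2}$ yields
\[
\Pr_Q[S_1=s_1,S_2=s_2]=q_1^{s_1}q_2^{s_2}\Bigl(\tfrac{f_1^{s_1}f_2^{s_2}}{\mu}+\tfrac{(1-f_1^{s_1})(1-f_2^{s_2})}{1-\mu}\Bigr),
\]
so the weight is $q_1^{s_1}q_2^{s_2}$ times a function of $(f_1^{s_1},f_2^{s_2},\mu)$ alone. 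Defining $\Psi_{s_1,s_2}:=\psi(f_1^{s_1},f_2^{s_2},\mu,o_1,o_2)$ as the product of this weight function and $\psi_0$, the regret collapses to $\sum_{s_1,s_2}q_1^{s_1}\Psi_{s_1,s_2}q_2^{s_2}=\mathbf{q}_1^{\top}\mathbf{\Psi}\mathbf{q}_2$, which is exactly the claimed form, and each entry depends only on the stated arguments.

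The only genuine content is the factorization above; the rest is bookkeeping. The point I would double-check is that the entry $\Psi_{s_1,s_2}$ really carries no hidden dependence on $\mathbf{q}_1,\mathbf{q}_2$: the outcome-conditioned expectations $o_1,o_2$ (hence $z_1,z_2$) are treated as fixed parameters, even though each is itself a quadratic functional $o_i=\mu^{-1}\sum_x (f_i^x)^2 q_i^x$ of the prior vector. This is legitimate precisely because the claim \emph{fixes} $o_1,o_2$; the implicit price is that fixing $o_i$ together with $\mu=\sum_x f_i^x q_i^x$ and $\sum_x q_i^x=1$ imposes three linear constraints on each $\mathbf{q}_i$. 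I would flag that the multilinearity is the ingredient that makes the subsequent decomposition of \Cref{lem:basis} valid only inside this constrained feasible region, where $\mathbf{\Psi}$ stays constant as $\mathbf{q}_1,\mathbf{q}_2$ vary.
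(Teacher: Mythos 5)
Your proof is correct and follows essentially the same route as the paper's: expand the regret over signal pairs, observe via \Cref{claim:fix} that the pointwise squared loss is a function of $(f_1^{s_1},f_2^{s_2},\mu,o_1,o_2)$ alone, and factor the joint probability $\Pr_Q[S_1=s_1,S_2=s_2]=q_1^{s_1}q_2^{s_2}\bigl(\tfrac{f_1^{s_1}f_2^{s_2}}{\mu}+\tfrac{(1-f_1^{s_1})(1-f_2^{s_2})}{1-\mu}\bigr)$ using conditional independence and Bayes' rule. Your closing caveat---that the claim is only meaningful because $o_1,o_2$ are held fixed, which constrains $\mathbf{q}_1,\mathbf{q}_2$ to the linear feasible region used in \Cref{lem:basis}---is exactly the point the paper makes immediately after the claim (``Multi-linearity holds only in the restricted space'').
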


Proof of \Cref{claim:multilinear} is deferred to \Cref{sec:proofs}. Multi-linearity holds only in the restricted space. Let the restricted set of priors $\mathcal{Q}_{\mu,o_1,o_2,\mathbf{f}_1,\mathbf{f}_2}$ be the set of priors whose $\mu,o_1,o_2,\mathbf{f}_1,\mathbf{f}_2$ are the same as $Q$. Note that the prior vector $\tilde{\mathbf{q}}_i,i=1,2$ of $\tilde{Q}\in \mathcal{Q}_{\mu,o_1,o_2,\mathbf{f}_1,\mathbf{f}_2}$ can be different from the prior vector $\mathbf{q}_i,i=1,2$ of $Q$, though we cannot pick the prior vector $\tilde{\mathbf{q}}_i,i=1,2$ arbitrarily. \Cref{claim:linear} shows that the restrictions for the prior vectors can be translated into linear constraints.


\begin{restatable}{claim}{claimlinear}\label{claim:linear}

For all $\tilde{Q}\in \mathcal{Q}_{\mu,o_1,o_2,\mathbf{f}_1,\mathbf{f}_2}$, for $i=1,2$, $\tilde{\mathbf{q}}_i$ satisfy the following linear constraints:
\[\begin{cases}1) \sum_x \tilde{q}_i^x = 1, & 2) \sum_x f_i^x \tilde{q}_i^x = \mu \\
3) \sum_x \tilde{q}_i^x \frac{(f_i^x)^2 }{\mu} = o_i \end{cases}\]
and the constraint that $\tilde{\mathbf{q}}_i$'s elements are non-negative. Let $C_{\mu,o_i,\mathbf{f}_i}$ denote the set of all possible solutions that satisfy the above constraints. Any pair of prior vectors that satisfies the above constraints, i.e., any $(\hat{\mathbf{q}}_1,\hat{\mathbf{q}}_2)\in C_{\mu,o_1,\mathbf{f}_1}\times C_{\mu,o_2,\mathbf{f}_2}$, corresponds to a joint distribution $\hat{Q}$ in $\mathcal{Q}_{\mu,o_1,o_2,\mathbf{f}_1,\mathbf{f}_2}$. 
\end{restatable}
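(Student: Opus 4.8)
The plan is to prove both directions of the stated equivalence: that every $\tilde Q$ in the restricted family $\mathcal{Q}_{\mu,o_1,o_2,\mathbf{f}_1,\mathbf{f}_2}$ has prior vectors obeying constraints (1)--(3) (necessity), and conversely that every pair of non-negative vectors satisfying (1)--(3) is realized by some $\hat Q$ in that family (sufficiency). The whole argument is a translation between the distributional quantities $\mu,o_i,\mathbf{f}_i$ and linear functionals of the prior vector $\mathbf{q}_i$, carried out separately for $i=1,2$ because conditional independence lets each marginal be handled on its own; this is exactly why the feasible set factors as the product $C_{\mu,o_1,\mathbf{f}_1}\times C_{\mu,o_2,\mathbf{f}_2}$.

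For necessity I would simply read off each constraint from a definition. Constraint (1) is normalization of $\tilde{\mathbf{q}}_i$. Constraint (2) is the law of total probability: since $f_i^x=\Pr[W=1\mid S_i=x]$ and $\tilde q_i^x=\Pr[S_i=x]$, the sum $\sum_x f_i^x\tilde q_i^x=\Pr[W=1]=\mu$. Constraint (3) follows from Bayes' rule applied to the definition of $o_i$: since $\Pr[S_i=x\mid W=1]=f_i^x\tilde q_i^x/\mu$, we get $o_i=\sum_x f_i^x\Pr[S_i=x\mid W=1]=\sum_x (f_i^x)^2\tilde q_i^x/\mu$. I would note here that a separate constraint on $z_i$ is unnecessary: by \Cref{claim:fix} the identity $\mu o_i+(1-\mu)z_i=\mu$ already pins down $z_i$ once $\mu$ and $o_i$ are fixed, so (1)--(3) capture everything the restricted family fixes on the $i$-th marginal.

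For sufficiency, given a feasible pair $(\hat{\mathbf{q}}_1,\hat{\mathbf{q}}_2)$ I would explicitly build a conditionally independent $\hat Q$ by declaring $\Pr[W=1]=\mu$ and the conditional signal laws
\[
\Pr[S_i=x\mid W=1]=\frac{f_i^x\hat q_i^x}{\mu},\qquad \Pr[S_i=x\mid W=0]=\frac{(1-f_i^x)\hat q_i^x}{1-\mu},
\]
and then factorizing $\Pr[S_1=x_1,S_2=x_2,W=w]=\Pr[W=w]\prod_i\Pr[S_i=x_i\mid W=w]$. Each conditional is a valid distribution: non-negativity comes from $\hat q_i^x\ge 0$ and $f_i^x\in[0,1]$; the $W=1$ row normalizes by constraint (2); and the $W=0$ row normalizes because $\sum_x(1-f_i^x)\hat q_i^x=1-\mu$ follows from (1) and (2). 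It then remains to confirm $\hat Q\in\mathcal{Q}_{\mu,o_1,o_2,\mathbf{f}_1,\mathbf{f}_2}$: marginalizing recovers $\Pr_{\hat Q}[S_i=x]=\hat q_i^x$; $\Pr_{\hat Q}[W=1\mid S_i=x]=f_i^x$ by construction; $\Pr_{\hat Q}[W=1]=\mu$; and $o_i(\hat Q)=\sum_x(f_i^x)^2\hat q_i^x/\mu=o_i$ by constraint (3).

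I expect the heart of the argument to be bookkeeping rather than a single hard step, so the main obstacle is really the edge-case hygiene: signals with $\hat q_i^x=0$ (where the forecast is only nominally equal to $f_i^x$ and the matching is vacuous) and forecasts $f_i^x\in\{0,1\}$ (where one conditional row places zero mass on $x$) must be shown not to break the construction. One must also confirm that fixing $\mu,o_1,o_2,\mathbf{f}_1,\mathbf{f}_2$ genuinely fixes the three data $f^*,p_1,p_2$ per realization --- again via \Cref{claim:fix}, using $p_1=(1-f_1)z_2+f_1 o_2$ and the analogous expression for $p_2$ --- so that the linearization feeding \Cref{claim:multilinear} is legitimate. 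Once necessity and sufficiency are both in place, the count of three equalities is precisely what will license the subsequent restriction to support size $|\Sigma|=3$ via basic feasible solutions.
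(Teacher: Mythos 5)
Your proposal is correct and follows essentially the same route as the paper: necessity by reading constraints (1)--(3) off the definitions via the law of total probability and Bayes' rule, and sufficiency by constructing the conditionally independent joint distribution $\Pr[S_1,S_2,W=w]=\Pr[W=w]\prod_i\Pr[S_i\mid W=w]$ with $\Pr[S_i=x\mid W=1]=f_i^x\hat q_i^x/\mu$ (the paper writes this same product out explicitly) and then verifying normalization and the recovered marginals, forecasts, and $o_i$. The edge cases you flag ($\hat q_i^x=0$, $f_i^x\in\{0,1\}$) are handled automatically by the construction and are likewise not treated specially in the paper, since $\mu\in(0,1)$ is assumed in the model.
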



Proof of \Cref{claim:linear} is deferred to \Cref{sec:proofs}.

Finally, we decompose the prior $Q$ into basic priors. Any solution in $C_{\mu,o_i,\mathbf{f}_i}$ can be written as a convex combination of basic feasible solutions which have $\leq 3$ non-zero entries \citep{luenberger1973introduction}. We call the prior vector that corresponds to a basic feasible solution the basic prior vector. 

According to \Cref{claim:multilinear}, there exists two basic prior vectors $\mathbf{b}_1,\mathbf{b}_2$ where the regret of $Q$ will be $\leq$ $\mathbf{b}_1^{\top} \mathbf{\Psi} \mathbf{b}_2$. 

According to \Cref{claim:linear}, we can use $\mathbf{b}_1,\mathbf{b}_2$ to construct a valid prior $Q'\in \mathcal{Q}_{\mu,o_1,o_2,\mathbf{f}_1,\mathbf{f}_2}$ whose prior vectors are  $\mathbf{b}_1,\mathbf{b}_2$. Thus, $Q'$'s regret is $\mathbf{b}_1^{\top} \mathbf{\Psi} \mathbf{b}_2$. For each $i=1,2$, we define the support of $Q'$ for agent $i$'s private signals as the set of all signals $\sigma$ where $\Pr_{Q'}[S_i=\sigma]>0$. Because the prior vectors of $Q'$ are the basis vectors  $\mathbf{b}_1,\mathbf{b}_2$. The size of $Q'$'s support for agent $i=1,2$ is $\leq 3$. We can relabel the signals in the support as elements in $\{A,B,C\}$. 

Therefore, we have constructed a prior $Q'$ over $\{A,B,C\}^2\times\{0,1\}$ such that $\E_Q((a(f_1,f_2,p_1,p_2)-f^*)^2)\leq \E_{Q'}((a(f_1,f_2,p_1,p_2)-f^*)^2)$.

\end{proof}

\paragraph{Numerical Results}

Aided by the above reduction results, we can estimate the regret of Average Expectation $a_{ae}$ and compared to other schemes. This may not be a fair comparison for previous work because they propose the schemes in a different setting and some of them work well for at least moderate number of forecasters. Thus, in addition to the Simple Average and Average-prior, for reference, we illustrate the results for two additional natural schemes, Minimal Pivoting \citep{palley2019extracting}, Meta-prob Weighting\footnote{If both $|f_1-p_1|$ and $|f_2-p_2|$ are zero, the scheme reduces to simple average scheme.} \citep{martinie2020using}, that do not heavily rely on the number of forecasters and the type of the information structure. We use Matlab to estimate the maximum of their corresponding functions. We defer the estimating details to appendix.

\paragraph{Asymmetric vs. Symmetric} It is worth noting that the first two schemes, Simple Average and Average-prior, obtain the maximum in the most asymmetric case when one forecaster has no information. In contrast, the schemes that involve peer expectation, i.e., Average Expectation, Minimal Pivoting and Meta-prob Weighting, obtain the maximum in the symmetric case. In fact, unlike Simple Average and Average-prior, there is no regret in the most asymmetric case for Average Expectation, Minimal Pivoting and Meta-prob Weighting. In the most asymmetric case, the uninformative forecaster's forecast and expectation are prior. The informative forecaster's expectation for the uninformative forecaster's forecast is also the prior. That is, if agent 1 is uninformative, $f_1=p_1=\mu$, $p_2=\mu$. Average Expectation, Minimal Pivoting and Meta-prob Weighting will aggregate the forecasts to $f_2$, which is the optimal forecast as well. Thus, they have no regret in the most asymmetric case.

\begin{table*}
\caption{Regret: Two conditionally independent forecasters}\label{table:compare}
	\begin{center}
\begin{tabular}{ c|c|c } 
 \hline
Aggregator &  Formula &  Regret Estimation\\
 \hline 
 \hline
Simple Average &  $\frac{f_1+f_2}{2}$ &  0.0625* \\
 &   &   \\
 \hline 
Average-prior &  $\frac{(1-\frac{f_1+f_2}{2}) f_1 f_2}{(1-\frac{f_1+f_2}{2})f_1 f_2 + \frac{f_1+f_2}{2}(1-f_1)(1-f_2)}$ &   \\
\citep{arieli2018robust} &   &  0.0260* \\
 \hline 
Average Expectation &  $\frac{(1-\frac{p_1+p_2}{2}) f_1 f_2}{(1-\frac{p_1+p_2}{2})f_1 f_2 + \frac{p_1+p_2}{2}(1-f_1)(1-f_2)}$ &  0.0072* \\
 &   &   \\
 \hline 
Minimal Pivoting &  $f_1+f_2-\frac{p_1+p_2}{2}$ & 0.0394* \\
\citep{palley2019extracting} &   &   \\

 \hline 
 Meta-prob Weighting &  $\frac{|f_1-p_1|}{|f_2-p_2|+|f_1-p_1|}f_1+\frac{|f_2-p_2|}{|f_2-p_2|+|f_1-p_1|}f_2$ &  0.0556* \\
 \citep{martinie2020using} &   &   \\

 \hline
\end{tabular}
\end{center}

\end{table*}


\subsection{Upperbound for C.I.I.D.} This section considers a natural setting where the two agents' private signals are i.i.d conditioning on $W$. The above mutli-linearity technique cannot be extended to the c.i.i.d. setting because multi-linearity does not hold when $Q$ must be symmetric for the two agents. Therefore, a non-trivial upperbound analysis for c.i.i.d. remains to be an open question in the original setting \citep{arieli2018robust}. 


Luckily, in the c.i.i.d. setting with peer expectations, we can construct aggregators whose regret is zero when agents' forecasts are different. This allows a tractable upper bound analysis. 

\paragraph{C.I.I.D. Aggregator}
\begin{restatable}{claim}{claimiid}\label{claim:iid}
	In the c.i.i.d. setting, when $f_1\neq f_2$, $\mu =\frac{f_2 p_1 - f_1 p_2}{f_2 - f_1 - p_2 + p_1}$.  \end{restatable}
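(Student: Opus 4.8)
The plan is to reduce everything to the two expectation parameters $o$ and $z$ supplied by \Cref{claim:fix} and then exploit the symmetry forced by the c.i.i.d. assumption. First I would observe that because $S_1$ and $S_2$ are i.i.d.\ conditioning on $W$, the two agents share the same forecast vector and the same prior vector, so the third party's conditional expectations coincide: $o_1=o_2=:o$ and $z_1=z_2=:z$. Substituting this symmetry into \Cref{claim:fix} yields the two scalar equations $p_1=(1-f_1)z+f_1 o$ and $p_2=(1-f_2)z+f_2 o$, together with the prior-consistency identity $\mu o+(1-\mu)z=\mu$.

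Next I would treat $p_1=(1-f_1)z+f_1 o$ and $p_2=(1-f_2)z+f_2 o$ as a $2\times 2$ linear system in the unknowns $(z,o)$. Subtracting the two equations gives $p_1-p_2=(f_1-f_2)(o-z)$, so the hypothesis $f_1\neq f_2$ is exactly what guarantees the system is nonsingular and lets me solve $o-z=\frac{p_1-p_2}{f_1-f_2}$. Writing $p_1=z+f_1(o-z)$ then recovers $z=p_1-f_1\cdot\frac{p_1-p_2}{f_1-f_2}$.

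Finally I would plug these into the prior-consistency identity. Rewriting $\mu o+(1-\mu)z=\mu$ as $\mu(o-z)+z=\mu$ and solving for $\mu$ gives $\mu=\frac{z}{1-(o-z)}$; substituting the expressions for $z$ and $o-z$ and clearing the common factor $(f_1-f_2)$ from numerator and denominator collapses everything to $\mu=\frac{f_2 p_1-f_1 p_2}{f_2-f_1-p_2+p_1}$, as claimed.

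The computation is short, and the only real subtlety is bookkeeping: I would make sure the condition $f_1\neq f_2$ is invoked both to divide by $f_1-f_2$ when isolating $o-z$ and to justify that the stated denominator $f_2-f_1-p_2+p_1=(f_1-f_2)\bigl(1-(o-z)\bigr)$ is the cleared form rather than an extra assumption. I expect the main thing to watch is sign discipline when passing from the intermediate form $\frac{f_1 p_2-f_2 p_1}{f_1-f_2-p_1+p_2}$ to the stated orientation $\frac{f_2 p_1-f_1 p_2}{f_2-f_1-p_2+p_1}$, which is just multiplying top and bottom by $-1$.
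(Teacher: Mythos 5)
Your proof is correct and takes essentially the same route as the paper's: both invoke the c.i.i.d.\ symmetry $o_1=o_2$, $z_1=z_2$, apply \Cref{claim:fix}, and reduce the target to the identity $\mu=\frac{z}{1-(o-z)}$ (which the paper writes as $\frac{z}{1-o+z}$) before cancelling the common factor $f_2-f_1$. The only nit is a sign slip in your intermediate factorization---it should read $f_2-f_1-p_2+p_1=(f_2-f_1)\bigl(1-(o-z)\bigr)$, not $(f_1-f_2)\bigl(1-(o-z)\bigr)$---which you yourself flag as a bookkeeping point and which washes out in the final formula.
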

	
Proof of \Cref{claim:iid} is deferred to \Cref{sec:proofs}.

\begin{definition}[C.I.I.D. Aggregator] \label{def:iid}
\[ a_{ciid}(f_1,f_2,p_1,p_2) = \frac{(1-\hat{\mu}) f_1 f_2}{(1-\hat{\mu})f_1 f_2 + \hat{\mu}(1-f_1)(1-f_2)}\]
where $\hat{\mu} = \begin{cases}\frac{f_2 p_1 - f_1 p_2}{f_2 - f_1 - p_2 + p_1} & f_1\neq f_2\\ \frac{p_1+p_2}{2}=p_1=p_2 & f_1=f_2 \end{cases}$
\end{definition}

\paragraph{Reduction of Regret Computation} We will reduce the support size of $Q$  in the regret computation of the above aggregator. The reduction works because that $a_{ciid}$ has no regret when two forecasters report distinct forecasts (\Cref{claim:iid}).

\begin{proposition}\label{lem:iid}
	For all $Q$ over $\Sigma^2\times\{0,1\}$, there exists a prior $Q'$ over $\{A,B,C\}^2\times\{0,1\}$ such that $\E_Q((a_{ciid}(f_1,f_2,p_1,p_2)-f^*)^2)\leq \E_{Q'}((a_{ciid}(f_1,f_2,p_1,p_2)-f^*)^2)$.
\end{proposition}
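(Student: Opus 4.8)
The plan is to mirror the structure of \Cref{lem:basis}, but to replace its multilinearity argument with a convexity argument, since the c.i.i.d.\ constraint forces the two prior vectors to coincide ($\mathbf{q}_1=\mathbf{q}_2=:\mathbf{q}$, $o_1=o_2=:o$, $z_1=z_2=:z$) and the bilinear form $\mathbf{q}_1^{\top}\mathbf{\Psi}\mathbf{q}_2$ degenerates into a quadratic $\mathbf{q}^{\top}\mathbf{\Psi}\mathbf{q}$ that need not be multilinear. The enabling observation is \Cref{claim:iid}: whenever $f_1\neq f_2$ the aggregator sets $\hat\mu=\mu$, so $a_{ciid}=f^*$ and those terms contribute nothing to the regret. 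First I would merge any two signals sharing the same forecast value into a single signal (summing their conditional probabilities); this preserves the c.i.i.d.\ structure and leaves every forecast, peer expectation, optimal forecast, and the \emph{total} regret unchanged, because all of these depend on a signal only through its forecast value. After merging, all forecast values $f^x$ are distinct, so $f_1=f_2$ holds exactly when $S_1=S_2$, and the regret concentrates on the diagonal.

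Next I would write the regret explicitly on the diagonal. Using conditional independence and the i.i.d.\ assumption, $\Pr_Q[S_1=x,S_2=x]=\mu\Pr[S{=}x|W{=}1]^2+(1-\mu)\Pr[S{=}x|W{=}0]^2$, and Bayes' rule gives $\Pr[S{=}x|W{=}1]=f^x q^x/\mu$ and $\Pr[S{=}x|W{=}0]=(1-f^x)q^x/(1-\mu)$. Moreover, by \Cref{claim:fix}, when $S_1=S_2=x$ both peer expectations equal $\hat\mu=f^x o+(1-f^x)z$, which depends on $x$ only through $f^x$ once $\mu,o$ are fixed (hence $z$, via $\mu o+(1-\mu)z=\mu$). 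Consequently the per-signal error $E(f^x,\mu,o):=a_{ciid}-f^*$ is a fixed function of $f^x$, and the regret becomes $L(a_{ciid},Q)=\sum_x (q^x)^2\,h(f^x)$, where $h(f):=\bigl(\tfrac{f^2}{\mu}+\tfrac{(1-f)^2}{1-\mu}\bigr)E(f,\mu,o)^2\ge 0$. The crucial point is that the diagonal concentration turns the weight into the perfect square $(q^x)^2$: the off-diagonal cross terms that would have rendered the quadratic indefinite have been annihilated by \Cref{claim:iid}, leaving a sum of squares that is convex in $\mathbf{q}$.

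Finally I would fix $\mu,o$ and the forecast vector $\mathbf{f}$ and vary only the weights $\mathbf{q}$. As in \Cref{claim:linear}, specialized to a single agent by symmetry, the admissible $\mathbf{q}$ form the compact polytope cut out by the three linear equalities $\sum_x q^x=1$, $\sum_x f^x q^x=\mu$, $\sum_x (f^x)^2 q^x=\mu o$ together with $\mathbf{q}\ge 0$. Since $L=\sum_x h(f^x)(q^x)^2$ is a convex function of $\mathbf{q}$, its maximum over this polytope is attained at an extreme point, i.e.\ a basic feasible solution with at most three nonzero coordinates \citep{luenberger1973introduction}. Keeping those (at most three) forecast values and using the recovered weights to set $\Pr[S{=}x|W{=}1]=f^x q^x/\mu$ and $\Pr[S{=}x|W{=}0]=(1-f^x)q^x/(1-\mu)$ reconstructs a valid c.i.i.d.\ prior $Q'$ over $\{A,B,C\}^2\times\{0,1\}$ whose regret is at least that of $Q$. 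The main obstacle is the second step: one must verify that \Cref{claim:iid} genuinely eliminates every off-diagonal term so that the objective is a pure sum of squares, hence convex, because otherwise the quadratic $\mathbf{q}^{\top}\mathbf{\Psi}\mathbf{q}$ could be indefinite and its maximum need not occur at a vertex. A secondary check is that the extreme-point weights still yield nonnegative, properly normalized conditional distributions, so that $Q'$ is a bona fide c.i.i.d.\ prior realizing the prescribed forecasts.
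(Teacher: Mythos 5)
Your proof is correct, and it reaches the same reduction as the paper but by a genuinely different argument at the decisive step. Both proofs share the same skeleton: merge signals with equal forecasts, invoke \Cref{claim:iid} to annihilate all off-diagonal terms, write the remaining regret as $\sum_x (q^x)^2\,\phi(f^x,\mu,o,z)$ with $\phi\ge 0$, reduce to a basic feasible solution of the three-constraint polytope $C_{\mu,o,\mathbf{f}}$, and rebuild a symmetric prior via \Cref{claim:linear}. Where you differ is how the vertex reduction is justified: the paper first \emph{relaxes} the symmetric problem to an asymmetric bilinear program over $C_{\mu,o,\mathbf{f}}\times C_{\mu,o,\mathbf{f}}$, obtains a vertex pair $(\mathbf{b},\mathbf{b}')$, and then re-symmetrizes with the inequality $\mathbf{v}\cdot\mathbf{v}'\le\max\{\|\mathbf{v}\|^2,\|\mathbf{v}'\|^2\}$, finally keeping the better of the two vertices; you instead observe that the diagonal, nonnegatively weighted quadratic form $\sum_x h(f^x)(q^x)^2$ is convex in $\mathbf{q}$, so its maximum over the compact polytope is attained at an extreme point directly --- no relaxation or symmetrization needed. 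The two arguments rest on the same underlying fact (a nonnegative diagonal quadratic form is positive semidefinite), but yours is more economical: a single appeal to convex maximization replaces the paper's two-step bilinear detour, and it makes transparent exactly why the obstruction that blocks extending \Cref{lem:basis} to c.i.i.d.\ (loss of multilinearity under the symmetry constraint $\mathbf{q}_1=\mathbf{q}_2$) is harmless here. The paper's route has the minor virtue of reusing the bilinear machinery of \Cref{claim:multilinear} verbatim and of exhibiting explicitly that the symmetric and asymmetric maxima coincide, which is how the proof sketch frames the idea. Your two flagged ``obstacles'' are both non-issues: \Cref{claim:iid} gives $\hat\mu=\mu$ exactly (hence $a_{ciid}=f^*$ and pointwise zero error) whenever $f_1\neq f_2$, and the reverse direction of \Cref{claim:linear}, instantiated with $\hat{\mathbf{q}}_1=\hat{\mathbf{q}}_2=\mathbf{b}$, $o_1=o_2=o$, $\mathbf{f}_1=\mathbf{f}_2=\mathbf{f}$, produces a bona fide c.i.i.d.\ prior with the same $\mu$, $o$, $z$ and forecast values, so the per-signal error function is unchanged on $Q'$ and its regret is exactly $\sum_x h(f^x)(b^x)^2$.
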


\begin{corollary}
	In the c.i.i.d. setting, $L(a_{ciid}) = \sup_{Q\in \mathcal{Q}_{5}} L(a_{ciid},Q)$ where $\mathcal{Q}_{5}$ is the set of all c.i.i.d. conditionally independent priors over $\{A,B,C\}^2\times\{0,1\}$. 
\end{corollary}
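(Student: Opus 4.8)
The plan is to follow the same template as \Cref{lem:basis} --- fix a few functionals of $Q$, reduce the regret to an optimization over a low-dimensional polytope of prior vectors, and invoke the basic-feasible-solution theorem to cap the support at three --- but the multilinearity of \Cref{claim:multilinear} is unavailable here, since the c.i.i.d.\ requirement forces $\mathbf{q}_1=\mathbf{q}_2$ and $\mathbf{f}_1=\mathbf{f}_2$, so the regret becomes a genuine quadratic form in a single prior vector rather than a bilinear one. The key substitute is \Cref{claim:iid}: because $a_{ciid}$ is exact whenever $f_1\neq f_2$, all of the regret concentrates on the ``diagonal'' events, and this collapses the quadratic form to a \emph{diagonal} one, which is convex and hence maximized at an extreme point.

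Concretely, I would first argue that we may assume distinct signals carry distinct forecasts: merging any two signals $x,x'$ with $f^x=f^{x'}$ into one signal (for both agents) leaves $\mu$, the $W$-perspective quantities $o,z$ of \Cref{claim:fix}, the joint law of $(f_1,f_2,p_1,p_2,f^*)$, and hence the regret unchanged. Writing $\alpha_x=\Pr_Q[S=x\mid W=1]$ and $\beta_x=\Pr_Q[S=x\mid W=0]$, conditional independence gives $\Pr_Q[S_1=S_2=x]=\mu\alpha_x^2+(1-\mu)\beta_x^2$, and by \Cref{claim:iid}, which makes every off-diagonal term vanish,
\[ L(a_{ciid},Q)=\sum_x\big(\mu\alpha_x^2+(1-\mu)\beta_x^2\big)\,e(v_x), \]
where $v_x=f^x$ and $e(v)=(a_{ciid}-f^*)^2$ evaluated at $f_1=f_2=v$. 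On the diagonal one checks $p_1=p_2=(1-v)z+vo$, so $e(v)$ depends on $Q$ only through $\mu,o,z$ and the forecast value $v$.

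Next I would fix $\mu,o$ (which fixes $z$, since $\mu o+(1-\mu)z=\mu$ by \Cref{claim:fix}) together with the set of forecast values $\{v_x\}$ realized by $Q$, and optimize over the conditional probabilities. The identity $f^x=\tfrac{\mu\alpha_x}{\mu\alpha_x+(1-\mu)\beta_x}=v_x$ forces $\beta_x=\rho(v_x)\alpha_x$ with $\rho(v)=\tfrac{\mu(1-v)}{(1-\mu)v}$, so each signal is described by its forecast $v_x$ and a single weight $\alpha_x$, and the regret becomes the diagonal quadratic $L=\sum_x\alpha_x^2\,h(v_x)$ with $h(v)=\big(\mu+(1-\mu)\rho(v)^2\big)e(v)\ge 0$. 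Crucially, reweighting $\alpha$ does not disturb the forecasts ($f^x=v_x$ holds for any $\alpha_x$ once $\beta_x=\rho(v_x)\alpha_x$), and the three linear constraints $\sum_x\alpha_x=1$, $\sum_x\rho(v_x)\alpha_x=1$ (normalization of $\beta$), and $\sum_x v_x\alpha_x=o$ keep $\hat{\mu}$, $f^*$, and therefore every $h(v_x)$ fixed; indeed $z$ is then automatic, because $\sum_x v_x[\mu+(1-\mu)\rho(v_x)]\alpha_x=\mu\sum_x\alpha_x=\mu$.

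Finally, with the coefficients $h(v_x)\ge 0$ fixed, $L$ is a convex function of the weight vector $(\alpha_x)$ over the polytope cut out by the three equalities and $\alpha_x\ge 0$, so its maximum is attained at a vertex, i.e.\ a basic feasible solution with at most three positive coordinates \citep{luenberger1973introduction}. Discarding the zero coordinates yields a c.i.i.d.\ prior $Q'$ supported on at most three signals, sharing $\mu,o,z$ with $Q$, so that $L(a_{ciid},Q')\ge L(a_{ciid},Q)$; relabelling the surviving signals as $\{A,B,C\}$ gives the proposition, and counting the free parameters ($\mu$ and the four conditional probabilities $\alpha_A,\alpha_B,\beta_A,\beta_B$) yields the $\mathcal{Q}_5$ corollary. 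The main obstacle --- and the point where the argument genuinely departs from \Cref{lem:basis} --- is the reduction to a \emph{diagonal} form: one must verify that the off-diagonal contributions truly vanish via \Cref{claim:iid}, since only then is the objective convex (rather than an indefinite quadratic), which is exactly what lets the maximum sit at an extreme point. The boundary forecasts $v\in\{0,1\}$, where $\rho$ degenerates, require a separate but trivial check, as they contribute zero regret by the convention defining $a_{ciid}$ there.
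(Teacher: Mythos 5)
Your proof is correct, and it shares the paper's overall skeleton --- merge signals with equal forecasts, use \Cref{claim:iid} to kill every off-diagonal term, fix $\mu,o$ (hence $z$) and the forecast values, and reduce to an optimization over weight vectors subject to three linear equalities plus non-negativity, finishing with basic feasible solutions --- but the decisive step is argued by a genuinely different mechanism. The paper (proof of \Cref{lem:iid}) \emph{relaxes} the symmetric problem to a bilinear program over two independent prior vectors $\mathbf{q}_1,\mathbf{q}_2\in C_{\mu,o,\mathbf{f}}$, lands on a pair of basic feasible solutions $\mathbf{b},\mathbf{b}'$, and then \emph{symmetrizes back} via the inequality $\mathbf{v}\cdot\mathbf{v}'\le\max\{\|\mathbf{v}\|^2,\|\mathbf{v}'\|^2\}$ applied to $\mathbf{v}(s)=b^s\sqrt{\phi(f^s,\mu,o,z)}$, which is where the non-negativity of the diagonal coefficients enters. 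You instead stay within the symmetric (c.i.i.d.) family throughout: after the reparameterization $\beta_x=\rho(v_x)\alpha_x$, the regret is the diagonal quadratic $\sum_x\alpha_x^2h(v_x)$ with $h\ge 0$, hence a convex function of the single weight vector, and a convex function on a compact polytope attains its maximum at an extreme point, i.e.\ a basic feasible solution with at most three positive coordinates \citep{luenberger1973introduction}. Both routes use the same two essential facts (the diagonal structure from \Cref{claim:iid} and non-negativity of the diagonal coefficients), but yours avoids the relax-then-symmetrize detour and is arguably cleaner; the paper's version has the virtue of reusing the bilinear-programming machinery already built for \Cref{lem:basis}, keeping the two reductions uniform. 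Your handling of the side issues is also sound: the automatic fixing of $z$ via $\mu o+(1-\mu)z=\mu$ from \Cref{claim:fix} is verified correctly, and the boundary forecasts $v\in\{0,1\}$ (where $\rho$ degenerates and one of $\alpha_x,\beta_x$ is forced to zero) indeed contribute zero regret by the aggregator's convention, so they only add harmless extra variables with zero objective coefficient to the linear system.
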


We use $\mathcal{Q}_{5}$ because each $Q\in \mathcal{Q}_{5}$ is controlled by 5 parameters $f^A,f^B,f^C$ and $q^A, q^B$. Note that $q^C=1-q^A-q^B$. We omit the subscript $i$ because agent's identity does not matter in the c.i.i.d. setting. The regret estimation over $\mathcal{Q}_5$ for $a_{ciid}$ is 0.00391*. 

\paragraph{Proof sketch} Because the regret only happens when two forecasters report the same forecast, the $\mathbf{\Psi}$ matrix used in \Cref{claim:multilinear} is a diagonal matrix with non-negative elements. In this case, maximizing over symmetric prior vectors is equivalent to maximizing over asymmetric prior vectors, which allows us to reduce the regret maximization to a bilinear programming again. 

\begin{proof}[Proof of \Cref{lem:iid}] In the c.i.i.d. setting, because agents' identities does not matter, we omit the subscript $i$ in notations defined in \Cref{sec:obs}. Without loss of generality, we assume that $f^s\neq f^{s'}$ for $s\neq s'$. Otherwise, we can merge the two signals because $f^*$ only depends on the forecasts and $\mu$ in the conditionally independent case.

\Cref{claim:iid} shows that when $f_1\neq f_2$, $\hat{\mu}=\mu$. Thus, when $f_1\neq f_2$, $(a_{ciid}(f_1,f_2,p_1,p_2)-f^*)^2=0$.

	\begin{align*}
		 & \E_Q (a_{ciid}(f_1,f_2,p_1,p_2)-f^*)^2 \\ \tag{zero regret for $f_1\neq f_2$ and we have assumed $f^s\neq f^{s'}$ for $s\neq s'$ w.l.o.g.}
		= & \sum_s \Pr_Q[S_1=S_2=s] (a_{ciid}(f_1,f_2,p_1,p_2)-f^*)^2	\\
	\end{align*}
	
	Note that 
	
	\begin{align*}
		&\Pr_Q[S_1=S_2=s]\\
		=&\Pr_Q[S_1=S_2=s|W=1]\Pr_Q[W=1]+\Pr_Q[S_1=S_2=s|W=0]\Pr_Q[W=0]\\
		= & (\Pr_Q[S_1=s|W=1])^2\mu+(\Pr_Q[S_1=s|W=0])^2(1-\mu)\\
		= & (\frac{q^s f^s}{\mu})^2\mu+(\frac{q^s (1-f^s)}{1-\mu})^2(1-\mu)
	\end{align*}

Thus, 	
	\begin{align*}
	& \E_Q (a_{ciid}(f_1,f_2,p_1,p_2)-f^*)^2 \\
		= & \sum_s (q^s)^2 (\frac{(f^s)^2}{\mu}+\frac{(1-f^s)^2}{1-\mu})\\ \tag{$\hat{\mu}=p_1=p_2=f^s o+ (1-f^s)z$ (\Cref{claim:fix})}
		&\times (\frac{(1-(f^s o + (1-f^s) z))(f^s)^2}{(1-(f^s o + (1-f^s) z))(f^s)^2+(f^s o + (1-f^s) z) (1-f^s)^2}\\
		&\ \ \ \ \ \ -\frac{(1-\mu)(f^s)^2}{(1-\mu)(f^s)^2+\mu (1-f^s)^2})\\
		= & \sum_s (q^s)^2 \phi(f^s,\mu,o,z)\\ \tag{$C_{\mu,o,\mathbf{f}}$	denotes the set of $\mathbf{\tilde{q}}$ where 
	$\begin{cases}1) \sum_x \tilde{q}^x = 1, & 2) \sum_x f^x \tilde{q}^x = \mu \\
3) \sum_x f^x \frac{f^x \tilde{q}^x }{\mu} = o
\end{cases}$}
		\leq & \max_{\mathbf{q}_1,\mathbf{q}_2\in C_{\mu,o,\mathbf{f}}} \sum_s q_1^s q_2^s \phi(f^s,\mu,o,z)\\ \tag{$\mathbf{b},\mathbf{b}'$ are basis vectors in $C_{\mu,o,\mathbf{f}}$}
		\leq & \sum_s b^s b'^s \phi(f^s,\mu,o,z)\\ \tag{$\mathbf{v}(s)=b^s \sqrt{\phi(f^s,\mu,o,z)},\mathbf{v'}(s)=b'^s \sqrt{\phi(f^s,\mu,o,z)},\forall s$ }
		= & \mathbf{v} \cdot \mathbf{v}'\\
		\leq & \max\{ ||\mathbf{v}||^2, ||\mathbf{v}'||^2 \}\\
		= &\max\{\sum_s ( b^s)^2  \phi(f^s,\mu,o,z),\sum_s ( b'^s)^2  \phi(f^s,\mu,o,z)\}
		\end{align*}
		
Without loss of generality, we assume $\sum_s ( b'^s)^2  \phi(f^s,\mu,o,z)\geq \sum_s ( b^s)^2  \phi(f^s,\mu,o,z)$. According to \Cref{claim:linear}, we can pick both prior vectors as $\mathbf{b}'$ and use $\mu,\mathbf{f}_i=\mathbf{f},o_i=o,i=1,2$ to construct symmetric $Q'$ over $\{A,B,C\}^2\times \{0,1\}$ and \[\E_Q((a_{ciid}(f_1,f_2,p_1,p_2)-f^*)^2)\leq \sum_s ( b_1^s)^2  \phi(f^s,\mu,o,z)=\E_{Q'}((a_{ciid}(f_1,f_2,p_1,p_2)-f^*)^2).\]

\end{proof}

\subsection{Lowerbound}

With the above upperbound results, it's left to analyze the limitation. We follow \citet{arieli2018robust} and model the setting as a zero-sum game between nature and aggregator. We construct two symmetric priors $Q,Q'$ such that nature will play a mixed strategy to choose the prior. Any aggregator will have a non-trivial regret even if the mixed strategy is given. This regret will provide a lower bound for all possible aggregators. 

We want $Q,Q'$ to be simple such that the analysis is tractable. On the other hand, $Q,Q'$ should have sufficient delicacy. This allows us to design them carefully such that the aggregator cannot infer the prior nature actually chose in some scenarios, even if the aggregator has the information of the forecasts and the peer expectations. This will be more complicated than the original setting where the aggregator only has the forecasts. 

\paragraph{Zero-sum Game} In detail, nature will pick $Q$ with probability $x$ and $Q'$ with probability $1-x$. Let $x Q + (1-x) Q'$ denote the mixed prior. Nature's choice is a private message shared by agents. That is, agents know nature's choice while the aggregator does not. 

In this setting, without knowing nature's choice, the optimal aggregator is the Bayesian aggregator $\Pr_{x Q + (1-x) Q'}[W=1|f_1,f_2,p_1,p_2]$ who predicts $W$ based on $f_1,f_2,p_1,p_2$ and the fact that nature will pick $Q$ with probability $x$ and $Q'$ with probability $1-x$. 

\begin{observation}\label{obs:mix}
	For all $a$, the regret of $a$ is at least the regret of the Bayesian aggregator under the mixed prior $x Q + (1-x) Q'$ setting:
	\begin{align*}
		L(a)&\geq \E_{x Q + (1-x) Q'} (\Pr_{x Q + (1-x) Q'}[W=1|f_1,f_2,p_1,p_2]-f^*)^2\\
		&=x \E_{Q} \left(\Pr_{x Q + (1-x) Q'}[W=1|f_1,f_2,p_1,p_2]-\Pr_{Q}[W=1|f_1,f_2,p_1,p_2]\right )^2 \\
		 & + (1-x) \E_{Q'} \left(\Pr_{x Q + (1-x) Q'}[W=1|f_1,f_2,p_1,p_2]-\Pr_{Q'}[W=1|f_1,f_2,p_1,p_2]\right)^2
	\end{align*}
\end{observation}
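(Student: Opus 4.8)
The plan is to prove the two displayed relations separately. Writing $M=xQ+(1-x)Q'$ for the mixed prior and $T=(f_1,f_2,p_1,p_2)$ for the aggregator's observable, I would first establish the inequality $L(a)\ge \E_{M}(\Pr_M[W=1\mid T]-f^*)^2$, and then the identity rewriting this right-hand side as the stated convex combination of two conditional variances. The inequality is a standard Bayes-risk lower bound: a worst case over priors dominates any average over priors, and among all aggregators the average squared error is minimized by the Bayesian aggregator. The identity is a decomposition obtained by conditioning on nature's choice.

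First I would note that since $L(a)=\sup_{\tilde Q}L(a,\tilde Q)$ dominates every convex combination of its arguments, we get $L(a)\ge x\,L(a,Q)+(1-x)\,L(a,Q')$. The role of $M$ is that this convex combination is exactly the expected squared error of $a$ under the mixed model: since $f^*$ is the optimal forecast for whichever prior nature actually drew, conditioning on nature's choice gives $x\,L(a,Q)+(1-x)\,L(a,Q')=\E_M\bigl(a(T)-f^*\bigr)^2$. Next I would invoke that the $L^2$-minimizer of $\E_M(a(T)-f^*)^2$ over all functions of $T$ is $\E_M[f^*\mid T]$, so $\E_M(a(T)-f^*)^2\ge \E_M\bigl(\E_M[f^*\mid T]-f^*\bigr)^2$ for every $a$. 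The crucial identification is $\E_M[f^*\mid T]=\Pr_M[W=1\mid T]$: under the mixed model $T$ is a deterministic function of $(S_1,S_2,\text{nature's choice})$ while $f^*=\E_M[W\mid S_1,S_2,\text{nature's choice}]$, so the tower property with $\sigma(T)\subseteq \sigma(S_1,S_2,\text{choice})$ yields $\E_M[f^*\mid T]=\E_M[W\mid T]=\Pr_M[W=1\mid T]$. This gives the first displayed inequality with the Bayesian aggregator in the role of $a$.

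For the identity on the last two lines, I would condition on nature's choice again to split $\E_M(\Pr_M[W=1\mid T]-f^*)^2$ into $x\,\E_Q(\Pr_M[W=1\mid T]-f^*)^2+(1-x)\,\E_{Q'}(\Pr_M[W=1\mid T]-f^*)^2$, and then replace $f^*$ by $\Pr_Q[W=1\mid T]$ (resp. $\Pr_{Q'}[W=1\mid T]$) inside each term. This replacement is an \emph{equality}, not merely an inequality, precisely because in the conditionally independent setting $f^*=\frac{(1-\mu)f_1f_2}{(1-\mu)f_1f_2+\mu(1-f_1)(1-f_2)}$ (\Cref{claim:fix}) is a deterministic function of $(f_1,f_2,\mu)$; since $\mu$ is fixed once nature's choice is fixed, $f^*$ is $\sigma(T)$-measurable under each of $Q,Q'$, whence $\Pr_Q[W=1\mid T]=\E_Q[f^*\mid T]=f^*$, and likewise for $Q'$.

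The step I expect to be the main obstacle is the careful bookkeeping of two sigma-algebras: the finer one generated by $(S_1,S_2,\text{nature's choice})$, under which $f^*$ is the Bayes posterior, versus the coarser one generated by the observable $T$, under which the aggregator must act. Making the tower-property argument rigorous requires being explicit that $T$ is measurable with respect to the finer algebra (so that $\E_M[f^*\mid T]=\Pr_M[W=1\mid T]$), and separately that $f^*$ is measurable with respect to the \emph{coarser} algebra once the prior is fixed (so that the final decomposition is an equality rather than an inequality with a leftover nonnegative term). Conflating these, or forgetting that $f^*$ tracks nature's choice, is the easy place to slip.
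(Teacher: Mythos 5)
Your proposal is correct and follows essentially the same route as the paper: a mixed-prior Bayes-risk bound built from (i) the sup over priors dominating the convex combination $x\,L(a,Q)+(1-x)\,L(a,Q')$ and (ii) the $L^2$-optimality of the Bayesian posterior $\Pr_{xQ+(1-x)Q'}[W=1\mid f_1,f_2,p_1,p_2]$ among all functions of the observables. The only difference is bookkeeping: the paper runs the chain backwards (from the decomposed expression up to $L(a)$, never mentioning $f^*$ in the proof), whereas you additionally make explicit the identification $f^*=\Pr_Q[W=1\mid f_1,f_2,p_1,p_2]$ under each fixed prior (via \Cref{claim:fix}), which the paper's final step $\sup_{P}\E_{P}\left(a-\Pr_{P}[W=1\mid f_1,f_2,p_1,p_2]\right)^2=L(a)$ and the equality asserted in the statement silently rely on.
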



\begin{proof}[Proof of \Cref{obs:mix}]
	\begin{align*}
		&x \E_{Q} \left(\Pr_{x Q + (1-x) Q'}[W=1|f_1,f_2,p_1,p_2]-\Pr_{Q}[W=1|f_1,f_2,p_1,p_2]\right )^2 \\
		 & + (1-x) \E_{Q'} \left(\Pr_{x Q + (1-x) Q'}[W=1|f_1,f_2,p_1,p_2]-\Pr_{Q'}[W=1|f_1,f_2,p_1,p_2]\right)^2\\
		\leq &x \E_{Q} \left(a(f_1,f_2,p_1,p_2)-\Pr_{Q}[W=1|f_1,f_2,p_1,p_2]\right)^2 \\
		 & + (1-x) \E_{Q'} \left(a(f_1,f_2,p_1,p_2)-\Pr_{Q'}[W=1|f_1,f_2,p_1,p_2]\right)^2\\		\leq & \sup_{P} \E_{P}\left(a(f_1,f_2,p_1,p_2)-\Pr_{P}[W=1|f_1,f_2,p_1,p_2] \right)^2 = L(a)
	\end{align*}
\end{proof}

Both $Q,Q'$ are symmetric and have the same support $\{A,B,C\}$ for each agent's private signals.

\begin{restatable}{lemma}{lemmamix}\label{lemma:mix}
	There exists proper $Q,Q'$ such that \[\E_{x Q + (1-x) Q'} (\Pr_{x Q + (1-x) Q'}[W=1|f_1,f_2,p_1,p_2]-f^*)^2> 0.00144\]  \end{restatable}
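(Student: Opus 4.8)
The plan is to use \Cref{obs:mix}: it suffices to exhibit two symmetric conditionally independent priors $Q,Q'$ over $\{A,B,C\}^2\times\{0,1\}$ and a mixing weight $x$ for which the Bayesian aggregator under $xQ+(1-x)Q'$ already has regret exceeding $0.00144$. The regret of the Bayesian aggregator lives entirely on \emph{confusable} observations — tuples $(f_1,f_2,p_1,p_2)$ that occur with positive probability under both priors but with different optimal forecasts $f^*$. On a tuple generated by only one prior the mixed posterior equals that prior's true posterior, and on a tuple whose $f^*$ agrees across priors the gap is zero, so in both cases the contribution vanishes; only the confusable tuples matter. First I would therefore isolate exactly which tuples can be confusable, then build $Q,Q'$ placing as much mass as possible on a single confusable tuple with as large a gap in $f^*$ as possible.

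To characterize confusability I would invoke \Cref{claim:fix}. For a symmetric prior the observation produced by a signal pair is $f_i=f^{s_i}$ and $p_i=(1-f^{s_i})z+f^{s_i}o$, while $\mu$ is pinned down by the identity $\mu o+(1-\mu)z=\mu$, i.e.\ $\mu=z/(1+z-o)$. Hence if two priors yield the same $(f_1,f_2,p_1,p_2)$ with $f_1\neq f_2$, matching both expectations gives two linear equations in $(o,z)$ whose coefficient matrix has determinant $f_2-f_1\neq 0$; this forces $(o,z)$ — and therefore $\mu$ and $f^*$ — to coincide, so no confusion is possible. The only room for confusion is an \emph{equal-forecast} pair $f_1=f_2=f$, where matching the single value $p=(1-f)z+fo$ leaves a one-parameter family of $(o,z)$, so two priors can share $(f,f,p,p)$ while having different $\mu$ and hence different $f^*$.

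Guided by this, I would take $Q,Q'$ with an \emph{identical} forecast vector $\mathbf{f}=(f^A,f^B,f^C)$ of distinct entries but \emph{different} prior vectors $\mathbf{q},\mathbf{q}'$, so that $\mu_Q=\sum_x f^x q^x$ and $\mu_{Q'}=\sum_x f^x q'^x$ differ. Imposing the single linear equation $(1-f^C)(z_Q-z_{Q'})+f^C(o_Q-o_{Q'})=0$ makes the expectation at signal $C$ agree across the two priors, while the same linear form is nonzero at $f^A,f^B$, so the expectations at $A,B$ disagree. Since the forecasts are distinct, a tuple with both forecasts equal to $f^C$ can only arise from the pair $(C,C)$; thus the lone confusable tuple is $(f^C,f^C,p^C,p^C)$, generated by both agents observing $C$, with distinct optimal forecasts $a=f^*_{Q}(C,C)$ and $b=f^*_{Q'}(C,C)$ (distinct because $f^*$ is strictly decreasing in $\mu$). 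Writing $P_Q=\Pr_Q[S_1=S_2=C]$ and $P_{Q'}$ analogously, a short computation of the mixed posterior $w_Q a+w_{Q'}b$ yields the clean single-term regret
\[
(b-a)^2\cdot\frac{x(1-x)\,P_Q P_{Q'}}{x P_Q+(1-x)P_{Q'}} .
\]

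It then remains to maximize this expression over the free parameters — the three forecast values, the two prior vectors subject to the $p^C$-matching equation and nonnegativity, and the weight $x$ — and to verify that the optimum clears $0.00144$; I would exhibit one explicit feasible choice (with $x$ near $\tfrac12$ and $P_Q\approx P_{Q'}$) and check the inequality directly. The main obstacle is the tension built into the matching constraint: enlarging the posterior gap $|b-a|$ requires $\mu_Q,\mu_{Q'}$ to separate, which tends to shrink the probabilities $P_Q,P_{Q'}$ that both agents land on the shared signal $C$, so the product $(b-a)^2 P_Q P_{Q'}$ must be balanced carefully. Pinning down a configuration that simultaneously keeps the gap and the mass large enough is exactly where a numerical search is needed, and it is what determines the final constant $0.00144$.
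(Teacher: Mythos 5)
Your construction is the same as the paper's: both invoke \Cref{obs:mix} and build two symmetric conditionally independent priors that share one forecast vector with distinct entries but have different prior vectors, tuned so that the diagonal event at one signal produces identical reports $(f,f,p,p)$ under both priors; the paper's witness is exactly an instance of your family, with the confusable signal labeled $B$, $f^A=0$, $f^C=1$, and the lower bound taken from that single confusable event. Two pieces of your write-up are genuinely nice and correct: the determinant argument showing that tuples with $f_1\neq f_2$ can never be confusable (the paper never isolates this), and the closed-form single-term regret $(b-a)^2\,\frac{x(1-x)P_QP_{Q'}}{xP_Q+(1-x)P_{Q'}}$, which is cleaner than the paper's Matlab-generated $(p-b)^2\psi(\mu,\mu',b,x)$. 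Indeed, plugging the paper's parameters ($\mu=0.7427$, $\mu'=0.9100$, $b=0.62$, $p=0.7224$, $x=0.1991$, giving $q^B\approx0.677$, $q'^B\approx0.123$, hence $P_Q\approx0.495$, $P_{Q'}\approx0.031$, posterior gap $\approx0.271$) into your formula yields $\approx 0.00144$, confirming it reproduces the paper's bound.

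The gap is the final step, and it is not a formality here. The lemma's entire content is the explicit constant, and you never exhibit $(Q,Q',x)$ or verify the inequality; worse, the region you say you would search --- $x$ near $\tfrac12$ with $P_Q\approx P_{Q'}$ --- is essentially infeasible for this construction. Both $P_Q$ and $P_{Q'}$ scale as $(p-b)^2$, so their ratio is independent of $p$ and (with the extreme forecasts $0,1$ at the other two signals) is strictly monotone in $\mu'$; forcing $P_Q\approx P_{Q'}$ therefore forces $\mu'\approx\mu$, which collapses the posterior gap $|b-a|$ and sends the regret to $0$. The true optimum is strongly asymmetric: for fixed $P_Q,P_{Q'}$ your expression is maximized at $x^*=\sqrt{P_{Q'}}/(\sqrt{P_Q}+\sqrt{P_{Q'}})$ with value $(b-a)^2\bigl(1/\sqrt{P_Q}+1/\sqrt{P_{Q'}}\bigr)^{-2}$ --- the paper's $x=0.1991$ is exactly this $x^*$ --- and even at the optimum the bound clears $0.00144$ only barely (with the paper's priors but $x=\tfrac12$ the value drops to about $0.00106$). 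So the numerical search you defer is not routine bookkeeping: it must also handle the constraint structure the paper works out (e.g.\ $p-b\le b(1-b/\mu)$ under $b<\mu<\mu'$), and done in the region you propose it would fail to establish the stated constant.
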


With \Cref{obs:mix}, we directly obtain the following lower bound result. 

	\begin{corollary}
		In the conditionally independent setting and the c.i.i.d. setting, for all aggregator $a$, $L(a)\geq 0.00144$. 
	\end{corollary}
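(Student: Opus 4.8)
The plan is to exhibit a single pair of symmetric priors and lower-bound the regret of the Bayesian aggregator against their mixture, invoking \Cref{obs:mix}. Since a c.i.i.d.\ prior is in particular conditionally independent, it suffices to build $Q,Q'$ that are c.i.i.d.\ over $\{A,B,C\}^2\times\{0,1\}$; the same construction then yields the bound for both settings in the corollary. First I would fix the aggregator to be the Bayesian posterior $\Pr_{xQ+(1-x)Q'}[W=1\mid f_1,f_2,p_1,p_2]$, so that by \Cref{obs:mix} every aggregator's regret is at least the two-term expression stated there, reducing the whole task to a good choice of $Q,Q'$ and of the mixing weight $x$.

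The key structural observation is that, by \Cref{claim:iid}, in the c.i.i.d.\ setting any observed tuple with $f_1\neq f_2$ pins down $\mu$ exactly, hence pins down $f^*$; therefore two priors can only produce the \emph{same} observable tuple with \emph{different} optimal forecasts on the ``diagonal'' $f_1=f_2$. So I would engineer the confusion there: arrange a common signal (say $A$) whose forecast $f^A$ is identical under $Q$ and $Q'$ and whose induced peer expectation $p^A=f^A o+(1-f^A)z$ (via \Cref{claim:fix}, with $z=\mu(1-o)/(1-\mu)$) also coincides across the two priors, while the two priors use $\mu_Q\neq\mu_{Q'}$. Because distinct signals have distinct forecasts (equal-forecast signals may be merged, as in the proof of \Cref{lem:iid}), the tuple $(f^A,f^A,p^A,p^A)$ is produced only by the event $S_1=S_2=A$ under each prior, making it a clean ``collision'' tuple on which $\mu_Q\neq\mu_{Q'}$ forces $f^*_Q\neq f^*_{Q'}$. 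Allowing three signals is precisely what supplies the extra degree of freedom (tuning $o$, equivalently the split of the remaining mass over $B,C$) needed to match $p^A$ while keeping $\mu_Q\neq\mu_{Q'}$.

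Next I would compute this collision's contribution to the Bayesian regret. Writing $w_1=x\Pr_Q[S_1=S_2=A]$ and $w_2=(1-x)\Pr_{Q'}[S_1=S_2=A]$, the mixture posterior on that tuple is the $(w_1,w_2)$-weighted average of $f^*_Q$ and $f^*_{Q'}$, so a short calculation gives a contribution of exactly $(f^*_Q-f^*_{Q'})^2\,\tfrac{w_1w_2}{w_1+w_2}$ to the two-term expression of \Cref{obs:mix}. Since every tuple contributes non-negatively and this tuple is reached only by $(A,A)$ under each prior, the total regret is at least this single term, so I never need to analyze the remaining events, which can only add more. Here $\Pr_Q[S_1=S_2=A]=\mu_Q\bigl(\tfrac{q^A f^A}{\mu_Q}\bigr)^2+(1-\mu_Q)\bigl(\tfrac{q^A(1-f^A)}{1-\mu_Q}\bigr)^2$ exactly as in the proof of \Cref{lem:iid}, and similarly for $Q'$, and $f^*_Q=\tfrac{(1-\mu_Q)(f^A)^2}{(1-\mu_Q)(f^A)^2+\mu_Q(1-f^A)^2}$.

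Finally I would optimize. Treating $f^A,\mu_Q,\mu_{Q'},q^A_Q,q^A_{Q'}$ (subject to the matching equation on $p^A$ and to non-negativity and consistency of the full c.i.i.d.\ priors) together with $x$ as free parameters, I maximize $(f^*_Q-f^*_{Q'})^2\tfrac{w_1w_2}{w_1+w_2}$ and exhibit one concrete feasible choice whose value exceeds $0.00144$. The main obstacle is exactly this last step: one must pick parameters that simultaneously (i) keep a sizeable gap between $f^*_Q$ and $f^*_{Q'}$, (ii) satisfy the peer-expectation matching so the tuple genuinely collides, and (iii) remain a valid c.i.i.d.\ prior (non-negative conditional distributions over $\{A,B,C\}$ realizing the prescribed $f^A,o,\mu$), all while keeping the collision event heavy in both priors. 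Balancing the $f^*$-gap against the collision probabilities is the delicate trade-off, and clearing the explicit threshold $0.00144$ is where the careful, numerically assisted choice of the design is required.
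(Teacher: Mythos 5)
Your proposal is correct and takes essentially the same route as the paper's own proof: a mixture of two symmetric c.i.i.d.\ priors on three signals, engineered so that one signal's (forecast, peer-expectation) pair collides across the two priors while $\mu_Q\neq\mu_{Q'}$, a lower bound via \Cref{obs:mix} restricted to that single collision event, and a final numerically assisted parameter search (your closed form $(f^*_Q-f^*_{Q'})^2\,\tfrac{w_1w_2}{w_1+w_2}$ is exactly the quantity the paper evaluates symbolically as $(p-b)^2\psi(\mu,\mu',b,x)$). The only thing the paper adds is the explicit witness completing that last step ($\mu=0.7427$, $\mu'=0.9100$, $b=0.62$, $p=0.7224$, $x=0.1991$, with the forecasts at the three signals fixed to $0,b,1$ so the feasibility constraints reduce to $p-b\leq b(1-b/\mu)$), which your plan defers to the optimizer.
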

 
\paragraph{Proof sketch}  We first illustrate our choice of priors which are simple yet sufficiently delicate for tuning. 

\paragraph{Prior Design} Let $\mu=\Pr_{Q}[W=1]$ and $\mu'=\Pr_{Q'}[W=1]$. We design the priors such that $f^A_1=f^A_2=0$, $f^B_1=f^B_2=b$, $f^C_1=f^C_2=1$ for both priors. Conditioning on agent 1 receives signal $B$ and the prior is $Q$, agent 1's expectation for agent 2's forecast is $p^B_1|Q$. $Q$ and $Q'$ will be designed carefully such that $p^B_1|Q=p^B_1|Q'=p$. Because the priors are symmetric, $p^B_2|Q=p^B_2|Q'=p$. The above requirements will be reduced to restrictions for $q^A,q^B,q^C, q'^{A},q'^{B},q'^{C}$.
 

Because of the above design, when both agents receive B, they will report $(f_1=b,f_2=b,p_1=p,p_2=p)$ regardless of the prior. In this case, no aggregator, including the Bayesian aggregator, can identify the prior nature chose. Thus, we use the regret of the Bayesian aggregator in the case when both agents receive $B$ as a lower bound.

\paragraph{Optimal Parameters} We illustrate how we find the optimal $\mu,\mu',b,p,x$ to maximize the regret of the Bayesian aggregator in the above case.  
\begin{description}
	\item [Step 1] We show that the lower bound is a function of $\mu,\mu',b,p,x$ which can be written as the following format: $(p-b)^2\psi(\mu,\mu',b,x)$;
	\item [Step 2] by assuming $b< \mu<\mu'$, we show that $Q,Q'$ can be valid only if $p-b\leq b(1-b\frac{1}{\mu})$. It's left to maximize $(b(1-b\frac{1}{\mu}))^2 \psi (\mu,\mu',b,x)$;
	\item [Step 3] we then fix $\mu,\mu',b$ and substitute $x$ by the solution of $\frac{\partial\psi}{\partial x}(x)=0 $;
	\item [Step 4] finally, we optimize over $\mu,\mu',b$.
\end{description}

After the above steps, we pick $\mu = 0.7427, \mu' = 0.9100, b=0.62, p=0.7224, x = 0.1991$ and construct the following $Q,Q'$ that satisfies the above conditions:
\begin{center}
\begin{tabular}{ |c| c c |c| c c |}
\hline
 $Q\approx$  & 1 & 0 & $Q'\approx$ & 1 & 0\\ 
  \hline 
 A  & 0 & 0 & A & 0 &  0.0443\\  
 B & 0.6770 $\times$ 0.62 & 0.6770 $\times$ 0.38 & B & 0.1228 $\times$ 0.62 & 0.1228 $\times$ 0.38  \\
 C  & 0.3230 & 0 & C &  0.8339 & 0\\
 \hline 
\end{tabular}
\end{center}

Finally, the above parameters lead to regret $>0.00144$, which provides a valid lower bound according to \Cref{obs:mix}. Formal proof is deferred to \Cref{sec:proofs}.





\subsection{Better Upperbound with Weights \& Hard Sigmoid}

In the original setting, using average forecasts as a proxy for prior has already performed relatively well (0.0260) compared to the lower bound (0.0225). However, unlike the original setting, there exists a certain amount of gap between the performance (0.0072) of $a_{ae}$, which uses the average expectation as a proxy for prior, and the lower bound (0.00144). Therefore, we need a more sophisticated proxy for prior in the new setting. 

We first try to use the trick in \citet{arieli2018robust} to improve the upperbound. They change $\hat{\mu}=\frac{f_1+f_2}{2}$ to $\hat{\mu}=\begin{cases}
	0.49(f_1+f_2)& f_1+f_2\leq 1\\
	0.49(f_1+f_2)+0.02 & f_1+f_2> 1\\
\end{cases}$ and reduce the regret estimation from $0.0260$ to $0.0250$. We replace $f_i$ by $p_i$. However, this improves slightly (e.g. in c.i.i.d. setting: 0.0039 $\rightarrow$ 0.0037). 

Both the forecast based proxy and the peer expectation based proxy ignore the relationship between the forecast and the expectation. Thus, we will introduce a new aggregator that utilizes the relationship. The following observation inspires the new aggregator. 

\begin{observation}\label{obs:hs}
	When $|o_i-z_i|< 1$ for $i=1,2$, $f_i=p_i\Rightarrow f_i=p_i =\mu$ for $i=1,2$. Moreover, 
	\[p_1-\mu = (f_1-\mu)(o_2-z_2)\Rightarrow \mu = \frac{p_1-(o_2-z_2) f_1}{1-(o_2-z_2)}\] and we have analogous result for $p_2,f_2$. When $|o_i-z_i|= 1$, agent $i$ is a perfect forecaster who knows $W$. 
\end{observation}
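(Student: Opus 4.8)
The plan is to derive everything from the succinct expressions in \Cref{claim:fix}, namely $p_1=(1-f_1)z_2+f_1 o_2$ together with the normalization $\mu o_2+(1-\mu)z_2=\mu$ (and the index-swapped analogues $p_2=f_2 o_1+(1-f_2)z_1$, $\mu o_1+(1-\mu)z_1=\mu$ for the other agent).

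First I would establish the key identity $p_1-\mu=(f_1-\mu)(o_2-z_2)$ by direct algebra. Expanding the right-hand side gives $f_1 o_2-f_1 z_2-\mu o_2+\mu z_2$, while $p_1-\mu=(1-f_1)z_2+f_1 o_2-\mu$; after cancelling the common terms $f_1 o_2-f_1 z_2$ the identity reduces to $z_2-\mu=\mu(z_2-o_2)$, i.e. $(1-\mu)z_2=\mu(1-o_2)$, which is exactly the normalization constraint rearranged. The identity $p_2-\mu=(f_2-\mu)(o_1-z_1)$ follows by the same computation with indices swapped.

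Next I would dispatch the first two claims, which are then purely formal. Setting $f_1=p_1$ in the identity yields $(p_1-\mu)\bigl(1-(o_2-z_2)\bigr)=0$; under the hypothesis $|o_2-z_2|<1$ the factor $1-(o_2-z_2)$ is nonzero, forcing $p_1=\mu$ and hence $f_1=p_1=\mu$, and the symmetric statement for agent $2$ uses $|o_1-z_1|<1$. The ``moreover'' part is just solving the same identity for $\mu$: rewriting $p_1-\mu=f_1(o_2-z_2)-\mu(o_2-z_2)$ as $p_1-f_1(o_2-z_2)=\mu\bigl(1-(o_2-z_2)\bigr)$ and dividing by $1-(o_2-z_2)\neq 0$ gives the stated formula, again valid precisely because $|o_2-z_2|<1$.

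The last part, characterizing $|o_i-z_i|=1$, is where the real work lies. I would first use the normalization to write $o_i-z_i=\frac{o_i-\mu}{1-\mu}$, and then express $o_i$ through the forecast and prior vectors: since $\Pr_Q[S_i=x\mid W=1]=f_i^x q_i^x/\mu$, one gets $o_i=\frac{1}{\mu}\sum_x (f_i^x)^2 q_i^x$ while $\mu=\sum_x f_i^x q_i^x$. The inequality $\sum_x (f_i^x)^2 q_i^x\ge\bigl(\sum_x f_i^x q_i^x\bigr)^2=\mu^2$ (Jensen, since $\sum_x q_i^x=1$) gives $o_i\ge\mu$, hence $o_i-z_i\ge 0$; and $f_i^x\le 1$ gives $(f_i^x)^2\le f_i^x$, so $o_i\le\frac1\mu\sum_x f_i^x q_i^x=1$, hence $o_i-z_i\le 1$. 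Thus $o_i-z_i\in[0,1]$, so the only way to attain $|o_i-z_i|=1$ is $o_i-z_i=1$, i.e. $o_i=1$, which by the equality condition $(f_i^x)^2=f_i^x$ forces $f_i^x\in\{0,1\}$ for every signal $x$ in the support. Because $f_i^x=\Pr_Q[W=1\mid S_i=x]$, this means every signal determines $W$ with certainty, i.e. agent $i$ is a perfect forecaster. The main obstacle is precisely pinning down that $o_i-z_i$ never exceeds $1$ in absolute value and that equality is equivalent to the forecast being $\{0,1\}$-valued; this requires the Jensen and term-wise bounds above rather than the purely linear manipulations used in the earlier parts.
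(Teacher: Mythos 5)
Your proof is correct, and for the first two assertions it coincides with the paper's argument: both derive the key identity $p_1-\mu=(f_1-\mu)(o_2-z_2)$ from \Cref{claim:fix} together with the normalization $\mu o_2+(1-\mu)z_2=\mu$, then obtain the implication $f_1=p_1\Rightarrow f_1=p_1=\mu$ and the formula for $\mu$ by linear algebra on that identity. Where you genuinely diverge is the characterization of $|o_i-z_i|=1$. The paper's proof is a one-liner: since $o_i,z_i\in[0,1]$, $o_i-z_i=1$ forces $o_i=1$ and $z_i=0$, and then a $[0,1]$-valued forecast whose conditional expectations given $W=1$ and $W=0$ are $1$ and $0$ must equal $W$ itself. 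You instead write $o_i=\frac{1}{\mu}\sum_x (f_i^x)^2 q_i^x$, use Jensen to get $o_i\ge\mu$ and the termwise bound $(f_i^x)^2\le f_i^x$ to get $o_i\le 1$, so that $o_i-z_i\in[0,1]$, and then invoke the equality condition to conclude $f_i^x\in\{0,1\}$ on the support. Your route is longer but buys a point the paper glosses over: it explicitly rules out the case $o_i-z_i=-1$, which is formally allowed by the hypothesis $|o_i-z_i|=1$ but which the paper's argument never addresses (bounding $o_i,z_i$ in $[0,1]$ alone does not exclude $o_i=0$, $z_i=1$; one needs $o_i\ge\mu\ge z_i$, which is exactly your Jensen step). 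The paper's shortcut is cleaner if one is willing to treat that case as obviously vacuous; yours is self-contained.
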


Note that $|o_i-z_i|\leq 1$ for $i=1,2$. The above observation implies that $p_i$ is closer to $\mu$ than $f_i$, for $i=1,2$. This explains that why $\frac{p_1+p_2}{2}$ is a better proxy for $\mu$ than the proxy $\frac{f_1+f_2}{2}$ in the original setting. However, these two proxies ignore the relationship between $f_1,f_2$ and $p_1,p_2$. We utilize the fact $f_i=p_i\Rightarrow f_i=p_i =\mu$ by replacing the simple average expectation proxy to the weighted average expectation proxy. Note that we do not need to worry about the $|o_i-z_i|= 1$ case where $f_i=p_i\Rightarrow f_i=p_i =\mu$ does not hold. In this case, there is no regret because agent $i$ provides a perfect forecast 0 or 1. 

\begin{proof}[Proof of \Cref{obs:hs}]
	According to \Cref{claim:fix}, $p_1=f_1 o_2 + (1-f_1) z_2$ and $\mu = \mu o_2 + (1-\mu) z_2 $. Thus 

\[p_1-\mu = (f_1-\mu)(o_2-z_2).\] When $o_2-z_2\neq 1$, $f_1=p_1\Rightarrow f_1=p_1 =\mu$. 

By one more manipulation of the above equation, 
\[\mu = \frac{p_1-(o_2-z_2) f_1}{1-(o_2-z_2)}. \]

Additionally, when $o_i-z_i=1$, $o_i=1$ and $z_i=0$ because $o_i,z_i\in [0,1]$. In this case, when $W=1$, agent $i$ will report $f_i=1$, and when $W=0$, agent $i$ will report $f_i=0$. In other words, agent $i$ is a perfect forecaster. 

\end{proof}


\paragraph{Better Upperbound with Weighting}

Let's consider the following example. 

\begin{example}
	We receive $f_1=p_1=0.6$ and $f_2=0.3, p_2=0.4$. In this case, $\frac{f_1+f_2}{2}=0.45, \frac{p_1+p_2}{2}=0.5$, while based on \Cref{obs:hs}, $\mu = f_1=p_1=0.6$. 
\end{example}

The above example naturally implies the following aggregator:

\begin{definition}[Weighted Expectation Aggregator] \label{def:we}
\[ a_{we}(f_1,f_2,p_1,p_2) = \frac{(1-\hat{\mu}) f_1 f_2}{(1-\hat{\mu})f_1 f_2 + \hat{\mu}(1-f_1)(1-f_2)}\]
where $\hat{\mu} = \frac{|f_2-p_2|^\alpha}{|f_1-p_1|^\alpha+|f_2-p_2|^\alpha} p_1 + \frac{|f_1-p_1|^\alpha}{|f_1-p_1|^\alpha+|f_2-p_2|^\alpha} p_2$. If both $|f_1-p_1|$ and $|f_2-p_2|$ are zero, $\hat{\mu}=\frac{p_1+p_2}{2}$.\end{definition}

The format is similar to Meta-prob Weighting (\Cref{table:compare}). Nonetheless, note that Meta-prob Weighting assigns more weight to the forecaster with higher $|f_i-p_i|$, while we do the opposite because we are estimating the prior. We have tried different $\alpha$ numerically and $\alpha=0.5$ is the best. By picking $\alpha=0.5$, the regret estimation of $a_{wp}$ is 0.0040.

\paragraph{Better Upperbound with Hard Sigmoid (C.I.I.D.)} The above weighted average trick does not work for the c.i.i.d. case. In this c.i.i.d. case, the regret only happens when $f_1=f_2$, which also leads to the same expectations $p_1=p_2$ (\Cref{claim:fix}). In this case, either simple average or weighted average leads to the same value, $p_1=p_2$. Here we introduce another trick, the hard sigmoid function. 

At a high level, \Cref{obs:hs} implies that if we receive $f_1=0.3, p_1=0.5$, $\mu$ will be strictly greater than $p_1$. The higher the $|f_1-p_1|$ is, the further the $\mu$ is from $p_1$. Therefore, directly using $p_1$ to estimate $\mu$ may not be a good choice. 

In detail, in c.i.i.d., $o=o_i,z=z_i,i=1,2$. According to \Cref{obs:hs}, if we know $o-z$, we can perfectly obtain $\mu=\frac{p_1-(o-z) f_1}{1-(o-z)}$. Without $o-z$, fixing $f_1$, we also know that $\mu(p_1)$ is a linear function with slope $\geq 1$ and $f_1$ as the fixed point.  

To this end, we design another proxy for $\mu$ which involves both $p_1$ and $f_1$: $\frac{p_1-c_{oz}f_1}{1-c_{oz}}$ where $c_{oz}$, the proxy for $o-z$, is a constant we can pick later. To avoid the new proxy to obtain value outside $[0,1]$, we also bound it within $[c_{lb},c_{ub}]$. Finally, we have a family of proxy functions what are hard sigmoid functions:

\[ \max\{\min\{\frac{p_1-c_{oz}f_1}{1-c_{oz}},c_{ub}\},c_{lb}\}\]

We then pick proper constants to reduce the regret. 

\begin{definition}[Hard Sigmoid Aggregator] \label{def:hs}
\[ a_{hs}(f_1,f_2,p_1,p_2) = \frac{(1-\hat{\mu}) f_1 f_2}{(1-\hat{\mu})f_1 f_2 + \hat{\mu}(1-f_1)(1-f_2)}\]
where $\hat{\mu} = \begin{cases}\frac{f_2 p_1 - f_1 p_2}{f_2 - f_1 - p_2 + p_1} & f_1\neq f_2\\ \max\{\min\{\frac{p_1-0.265 f_1}{1-0.265},0.96\},0.04\} & f_1=f_2 \end{cases}$
\end{definition}

Note that $a_{hs}$ has no regret when $f_1\neq f_2$. Thus, \Cref{lem:iid} still holds for $a_{hs}$. The regret estimation over $\mathcal{Q}_5$ for $a_{hs}$ is 0.00211*.

\begin{center}
	\begin{tikzpicture}
\sffamily
\begin{axis}[ 
    title = {Hard Sigmoid's $\hat{\mu}$ vs. C.I.I.D.'s $\hat{\mu}$\\ $f_1=f_2=0.4$}, 
    title style={align=center,yshift=-.1in}, 
    legend style={font=\small, 
        nodes={scale=0.6, transform shape}, 
        at={(1.2,.2)},
        anchor=east},
    width = 2.75in, height = 2.0in, 
    ylabel near ticks,
    ylabel = {\small Proxy $\hat{\mu}$},
    xlabel near ticks,
    ymax=1, ymin=0,
    every tick label/.append style={font=\scriptsize},
    xmin=0,xmax=1,
    xlabel={\small $p_1$},
    xlabel style = {yshift=0.05in},
    ]
\node[fill, draw, circle, minimum width=3pt, inner sep=0pt] at (40,40) {}; 
\node[above right=10pt of {(18,35)}, outer sep=2pt,fill=white] {\small $(f_1,f_1)$};

\addplot [
    domain = 0:1,
    thick,
    samples = 100,
    color=black,
    ]
    {min(max((x-0.265*0.4)/(1-0.265),0.04),0.96)};
\addlegendentry{$\hat{\mu}=\max\{\min\{\frac{p_1-0.265 f_1}{1-0.265},0.96\},0.04\}$}

\addplot [
    domain = 0:1,
    thick,
    dashed,
    samples = 100,
    color=black!90,
    ]
    {x};
\addlegendentry{$\hat{\mu}=\frac{p_1+p_2}{2}=p_1$}
\end{axis}    
\end{tikzpicture} 

\end{center}

\paragraph{Better Upperbound with Weights + Hard Sigmoid} Finally, we combine the above two tricks in the conditionally independent setting. 

\begin{definition}[Weighted Hard Sigmoid Aggregator] \label{def:whs}

\[ a_{whs}(f_1,f_2,p_1,p_2) = \frac{(1-\hat{\mu}) f_1 f_2}{(1-\hat{\mu})f_1 f_2 + \hat{\mu}(1-f_1)(1-f_2)}\]
where $\hat{\mu} = \frac{|f_2-p_2|^\alpha}{|f_1-p_1|^\alpha+|f_2-p_2|^\alpha} \hat{\mu}_1 + \frac{|f_1-p_1|^\alpha}{|f_1-p_1|^\alpha+|f_2-p_2|^\alpha} \hat{\mu}_2$ and $\hat{\mu}_i = \max\{\min\{\frac{p_i-0.265 f_i}{1-0.265},0.96\},0.04\} $, $i=1,2$, $\alpha=0.5$. 	
\end{definition}

The regret estimation for the above aggregator is 0.00255*. Note that in the c.i.i.d. case, when $f_1=f_2$, the weighted hard sigmoid aggregator is the hard sigmoid aggregator.

\begin{table*}
\caption{Regret: Different Prior Proxy}\label{table:compare}
	\begin{center}
\begin{tabular}{ |c|c|c| } 
\hline
Aggregator &  Formula of Prior Proxy &  Regret Estimation\\
 \hline 
 \hline
Average-prior &  $\frac{f_1+f_2}{2}$ &   \\
\citep{arieli2018robust} &   &  0.0260* \\
 \hline 
Average Expectation &  $\frac{p_1+p_2}{2}$ &  0.0072* \\
 &   &   \\
 
 \hline 
Weighted Expectation&  $\frac{|f_2-p_2|^\alpha}{|f_1-p_1|^\alpha+|f_2-p_2|^\alpha} p_1 + \frac{|f_1-p_1|^\alpha}{|f_1-p_1|^\alpha+|f_2-p_2|^\alpha} p_2$ &  0.0040* \\
 &  $\alpha = 0.5$ &   \\
 \hline 
Weighted Hard Sigmoid  &  $\frac{|f_2-p_2|^\alpha}{|f_1-p_1|^\alpha+|f_2-p_2|^\alpha} \hat{\mu}_1 + \frac{|f_1-p_1|^\alpha}{|f_1-p_1|^\alpha+|f_2-p_2|^\alpha} \hat{\mu}_2$ &  0.0025* \\
 & $\hat{\mu}_i = \max\{\min\{\frac{p_i-0.265 f_i}{1-0.265},0.96\},0.04\},\alpha=0.5 $  &   \\

\hline
\hline 
C.I.I.D. &  $\frac{p_1+p_2}{2}, f_1=f_2$ &  0.0039* \\
 &  $\frac{f_2 p_1 - f_1 p_2}{f_2 - f_1 - p_2 + p_1},f_1\neq f_2$ &   \\

\hline 
Hard Sigmoid (C.I.I.D.) &  $\max\{\min\{\frac{p_i-0.265 f_i}{1-0.265},0.96\},0.04\}, f_1=f_2$ &  0.0021* \\
 & $\frac{f_2 p_1 - f_1 p_2}{f_2 - f_1 - p_2 + p_1},f_1\neq f_2$  &   \\

\hline
\end{tabular}
\end{center}

\end{table*}

\subsection{Extensions}

\paragraph{Conditionally Independent Conditioning on a Shared Signal} A slight generalization of the conditionally independent setting is the setting where the forecasters observe a common signal $S$ and their private signals $S_1,S_2$ are independent conditioning on $S$ and outcome $W$. This setting will have the same lower bound because it is more general. Regarding upper bound, it's easy to see that for all prior $Q$ in this setting, 
\[ L(a,Q)=\sum_s \Pr_{Q}[S=s] L(a,Q_s)\leq \max_s L(a,Q_s)\] where $Q_s$ denotes a joint distribution over $S_1,S_2,W$ conditioning on $S=s$. Thus, $Q_s$ is a prior in the conditionally independent setting. Therefore, this generalized setting's upper bound is the conditionally independent setting's upper bound as well.

\paragraph{Many C.I.I.D. Conditionally Independent Agents}

There are $n$ agents whose private signals are c.i.i.d. conditioning on $W$. With only agents' forecasts, \citet{arieli2018robust} implicitly show that for all aggregator $a$, there exists $Q$, such that even when the number of agents goes to infinite, $L(a,Q)\geq \frac{3}{32}$ (see \Cref{sec:manyregret}). In this setting, we can replace the peer expectation question by ``what's your expectation for the average of other forecasters' forecast?''. With peer expectations, for any fixed unknown $Q$, we can construct an aggregator whose regret on $Q$ goes to zero as the number of agents goes to infinite. 

\citet{chen2021wisdom} also show this result with a slightly different analysis. They assume infinite forecasters initially and do not need to consider the case when all forecasters report the same forecast. We consider the same forecast case and write down the regret of finite n. We show that the regret converges to zero. We still include the result for completeness. We find that even when all forecasters report the same forecast, the aggregated forecast using $a^n_{ciid}$ will converge to the optimal one. Formal analysis is deferred to \Cref{sec:proofs}.

\begin{definition}[c.i.i.d. aggregator for $n$ agents] 

\[ a^n_{ciid}(\{f_i,p_i\}) = \frac{(1-\hat{\mu})^{n-1}\Pi_i f_i}{(1-\hat{\mu})^{n-1}\Pi_i f_i + \hat{\mu}^{n-1}\Pi_i (1-f_i)}\]
where $\hat{\mu} = \begin{cases}\frac{f_j p_i - f_i p_j}{f_j - f_i - p_j + p_i} & \exists f_i\neq f_j\\ p_1 & \text{otherwise} \end{cases}$. \end{definition} 

\begin{restatable}{proposition}{propmany}\label{prop:many}
	For all $Q$, $\lim_{n\rightarrow \infty} L(a^n_{ciid},Q) = 0$.   \end{restatable}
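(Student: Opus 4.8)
The plan is to show that the aggregator $a^n_{ciid}$ is eventually exact on almost every signal profile, and that the remaining ``bad'' event—where all $n$ forecasters happen to report an identical forecast—has vanishing probability as $n\to\infty$. First I would invoke \Cref{claim:iid} generalized to $n$ agents: whenever there exist two agents $i,j$ with $f_i\neq f_j$, the formula $\hat\mu=\frac{f_j p_i - f_i p_j}{f_j - f_i - p_j + p_i}$ recovers the true prior $\mu$ exactly (this is the same two-agent computation, using that in the c.i.i.d. setting any pair of distinct-forecast agents pins down $\mu$ via \Cref{claim:fix}). Once $\hat\mu=\mu$, the product-form aggregator coincides with the optimal Bayesian posterior $f^*=\frac{(1-\mu)^{n-1}\Pi_i f_i}{(1-\mu)^{n-1}\Pi_i f_i+\mu^{n-1}\Pi_i(1-f_i)}$ (the multiplicative-weights aggregation of \citet{bordley1982multiplicative} specialized to conditionally independent signals), so the squared error is exactly $0$ on the event $E_n=\{\exists\, i,j:\ f_i\neq f_j\}$.

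Thus I would write
\begin{align*}
L(a^n_{ciid},Q) &= \E_Q\big[(a^n_{ciid}-f^*)^2\big]\\
&= \E_Q\big[(a^n_{ciid}-f^*)^2\,\mathbbm{1}[E_n^c]\big]\\
&\leq \Pr_Q[E_n^c],
\end{align*}
where the last bound uses that both $a^n_{ciid}$ and $f^*$ lie in $[0,1]$, so the squared difference is at most $1$. The complement event $E_n^c$ is exactly the event that all agents receive signals yielding the same forecast value. Since forecasts are a deterministic function of the signal and (w.l.o.g., after merging signals with equal forecasts as in the proof of \Cref{lem:iid}) distinct signals give distinct forecasts, $E_n^c$ is the event that $S_1=S_2=\cdots=S_n$.

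The core estimate is then to bound $\Pr_Q[S_1=\cdots=S_n]$. Conditioning on $W$, the signals are i.i.d., so
\[
\Pr_Q[S_1=\cdots=S_n] = \mu\sum_{s}\big(\Pr_Q[S_1=s\mid W=1]\big)^n + (1-\mu)\sum_{s}\big(\Pr_Q[S_1=s\mid W=0]\big)^n.
\]
Each inner sum is $\sum_s r_s^n$ for a probability vector $(r_s)$, and because we have merged signals so that the signal distribution is nondegenerate (unless an agent is a perfect forecaster, a case with zero regret anyway), every $r_s<1$, whence $\sum_s r_s^n\to 0$ geometrically as $n\to\infty$. Therefore $\Pr_Q[S_1=\cdots=S_n]\to 0$, and combined with the display above this gives $\lim_{n\to\infty}L(a^n_{ciid},Q)=0$. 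The main obstacle is the degenerate boundary case: if for some signal $s$ one has $\Pr_Q[S_1=s\mid W=w]=1$, the geometric decay fails. I would handle this by noting that such a $Q$ makes the agent a perfect forecaster (forecast equal to $W$, hence $f^*$ trivially recovered) or corresponds to a point mass where $f_i=\mu$ identically; in either situation the aggregator already matches $f^*$ and the regret is $0$ for every $n$, so the claim holds trivially. Care must also be taken that the ``otherwise'' branch $\hat\mu=p_1$ used on $E_n^c$ never actually contributes to the limit, precisely because $\Pr_Q[E_n^c]\to0$.
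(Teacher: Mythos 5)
Your first step---when two reported forecasts differ, \Cref{claim:iid} gives $\hat\mu=\mu$ and the aggregator coincides with $f^*$, so the regret is supported on the all-same-forecast event---matches the paper exactly. The gap is in how you dispose of that event. Your entire argument rests on the bound $L(a^n_{ciid},Q)\leq \Pr_Q[E_n^c]$ together with $\Pr_Q[E_n^c]\to 0$, and the latter is false for a nontrivial class of priors $Q$; moreover, your fallback dichotomy for the degenerate cases is not exhaustive. Take $\mu=\tfrac12$, $\Pr_Q[S=s_0\mid W=1]=1$, $\Pr_Q[S=s_0\mid W=0]=\Pr_Q[S=s_1\mid W=0]=\tfrac12$. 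The agent receiving $s_0$ has forecast $f^{s_0}=\tfrac23$, so she is neither a perfect forecaster nor reporting the prior; the agent receiving $s_1$ is perfect, but that never helps when $W=1$, because in that case \emph{every} agent receives $s_0$. Hence $\Pr_Q[E_n^c]\geq\tfrac12$ for every $n$, so your upper bound never drops below $\tfrac12$, and the regret really is positive at finite $n$: here $o=\tfrac23$, $z=\tfrac13$, $\hat\mu=p_1=\tfrac59$, so for $n=2$ the aggregator outputs $\tfrac{16}{21}$ while $f^*=\frac{\mu}{\mu+(1-\mu)(1/2)^2}=\tfrac45$. This also falsifies the claim in your last paragraph that in any such boundary case ``the regret is $0$ for every $n$'': the case $\Pr_Q[S=s\mid W=w]=1$ with $0<\Pr_Q[S=s\mid W=1-w]<1$ is neither a perfect forecaster nor an uninformative point mass, and nothing in your argument shows its (non-vanishing-probability) contribution tends to $0$.

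The paper's proof avoids controlling $\Pr_Q[E_n^c]$ altogether. On the all-same-forecast event it writes both the aggregator's output and $f^*$ as logistic-type expressions in $n$ and uses the identity $f^s-\hat\mu=\frac{1-o}{1-\mu}(f^s-\mu)$ (\Cref{claim:many} inside that proof), which says $f^s-\hat\mu$ and $f^s-\mu$ have the same sign. Consequently, when $f^s>\mu$ both forecasts converge to $1$; when $f^s<\mu$ both converge to $0$; and when $f^s=\mu$ they coincide exactly. So the conditional regret on the bad event vanishes pointwise even when that event retains constant probability---which is precisely what happens in the example above, where both forecasts tend to $1$. To repair your proof you would need to replace the probability bound by this sign argument (or prove convergence separately for every $Q$ with a conditional point mass), at which point you would essentially be reproducing the paper's proof.
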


\section{Higher Level Expectations}\label{sec:higher}

We have asked each agent ``what's your expectation for the other agent's forecast?''. Though it may not be practical, it's still interesting to ask what if we ask each agent ``what's your expectation for the other agent's expectation for your forecast?''. Formally, we define peer expectations of higher levels.  

\paragraph{Level $\ell$ Expectations \cite{samet1998iterated}} We define the level 1 expectation as agent's forecast, $p_i^1=f_i,i=1,2$. For all $\ell\geq 2$, we define $p_1^{\ell}$ as agent $1$'s expectation for $p_2^{\ell-1}$. $p_2^{\ell}$ is defined analogously. Note that the peer expectation $p_i = p_i^2, i=1,2$ is a level 2 expectation.

In the conditionally independent setting, with mild conditions, as $\ell$ goes to infinite, the expectation goes to the prior $\mu$. Thus, we can use the level $\ell$ expectation to construct an aggregator whose regret converges to zero. The idea is to use the average of the two agents' level $\ell$ expectation as the proxy for prior $\mu$. In general, as level $\ell$ goes to infinite, the expectation may not converge to $\mu$. For example, in the refinement-ordered setting, the expectation will converge to the less informed agent's forecast. But with proper non-degenerate assumptions, the level $\ell$ expectation also goes to the prior $\mu$ in general. Due to space limitations, details of the above results are deferred to Appendix~\ref{appendix:higher}.

\paragraph{Limitation} Even with an infinite level of expectations, we may not obtain optimal forecasts in some scenarios. When two agents' signals are independent bits where each bit is 0 with probability 0.5 and $W$ is their xor, the forecasts and all levels of expectations are 1/2. No scheme can aggregate optimally based on the forecasts and expectations. 

\section{Conclusion and Discussion}

We analyze a new setting for one-shot forecast aggregation where each forecaster is additionally asked her expectation for the other forecaster's forecast. We analyze the setting in a framework of robust forecast aggregation and illustrate the possibility/impossibility of the additional information. We also show that when we can ask for higher level expectations, we can construct an aggregator which converges to the optimal one in the conditionally independent setting. 

Real-world experiments are natural future directions. Filling the gap is another natural question. Other interesting directions include:

\paragraph{Automated design} We can develop automated method to design the aggregator. For example, we can pick a proper family of aggregators and learn the parameters by playing the zero-sum game with nature. 
\paragraph{General Structure} In the worst case of general structure, the expectation will not help. For example, when two agents' signals are independent bits and $W$ is their xor, the forecasts and expectations are all 1/2. However, it's still interesting to analyze the benefit of expectation in general by replacing the definition of regret by the regret to the best forecaster.

\paragraph{Other Settings} It will be interesting to analyze the benefit of high order belief with an adversarial framework in other natural settings including decision making \citep{de2021robust,LEVY2022105075}, continuous outcome \citep{neyman2022you}. 

We hope this work can bring the adversarial framework to one-shot information aggregation with high order belief.

\newpage

\bibliographystyle{ACM-Reference-Format}
\bibliography{sample-bibliography}

\appendix

\section{Higher Level Expectations}\label{appendix:higher}

We have asked each agent ``what's your expectation for the other agent's forecast?''. Though it may not be practical, it's still interesting to ask what if we ask each agent ``what's your expectation for the other agent's expectation for your forecast?''.

Formally, we define peer expectations of higher levels. 

\paragraph{Level $\ell$ Expectations \cite{samet1998iterated}} We define the level 1 expectation as agent's forecast, $p_i^1=f_i,i=1,2$. For all $\ell\geq 2$, we define $p_1^{\ell}$ as agent $1$'s expectation for $p_2^{\ell-1}$. $p_2^{\ell}$ is defined analogously. Note that the peer expectation $p_i = p_i^2, i=1,2$ is a level 2 expectation.

In the conditionally independent setting, with mild conditions, as $\ell$ goes to infinite, the expectation goes to the prior $\mu$. Thus, we can use the level $\ell$ expectation to construct an aggregator whose regret converges to zero. 

\begin{definition}[Average Expectation Aggregator$^\ell$] 
\[ a_{ae}^{\ell}(f_1,f_2,p_1,p_2) = \frac{(1-\hat{\mu}) f_1 f_2}{(1-\hat{\mu})f_1 f_2 + \hat{\mu}(1-f_1)(1-f_2)}\]
where $\hat{\mu} = \frac{p_1^{\ell}+p_2^{\ell}}{2}$.
\end{definition}

We focus on the setting where neither agent 1 or agent 2 is a perfect forecaster who knows $W$. Because if one of the forecaster is perfect, she will report 0 or 1 as forecast and we can easily achieve no regret by following her opinion. Moreover, note that if both agents are perfect, their peer expectations will always be $W$ rather than converge to $\mu$. 

\begin{restatable}{proposition}{prophigh}\label{prop:high}
	In the conditionally independent setting, where neither agent 1 or agent 2 is perfect, for $i=1,2$, $\lim_{\ell\rightarrow\infty} p_i^{\ell} = \mu $, thus \[\lim_{\ell\rightarrow\infty} L(a_{ciid}^{\ell},Q)=0.\] \end{restatable}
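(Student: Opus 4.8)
The plan is to track, for each level $\ell$ and each agent $i$, the two conditional expectations $o_i^{\ell} = \E_Q[p_i^{\ell}\mid W=1]$ and $z_i^{\ell} = \E_Q[p_i^{\ell}\mid W=0]$ taken from the third party's perspective of \Cref{claim:fix}, and to show that both converge to $\mu$ geometrically fast. First I would generalize the identity of \Cref{claim:fix} to every level: because $S_1$ and $S_2$ are independent conditioning on $W$, agent $1$'s posterior over $S_2$ factors through $W$, so
\[ p_1^{\ell} = f_1\, o_2^{\ell-1} + (1-f_1)\, z_2^{\ell-1}, \]
and symmetrically for $p_2^{\ell}$ (the case $\ell=2$ is exactly \Cref{claim:fix}). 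Taking $\E_Q[\cdot\mid W=w]$ of this identity yields the coupled linear recursion
\[ o_1^{\ell} = o_1\, o_2^{\ell-1} + (1-o_1)\, z_2^{\ell-1}, \qquad z_1^{\ell} = z_1\, o_2^{\ell-1} + (1-z_1)\, z_2^{\ell-1}, \]
where $o_i = o_i^1,\ z_i = z_i^1$, together with the mirror equations for agent $2$.

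The second step is to extract two consequences of this recursion. Subtracting the two displayed equations shows the gaps $d_i^{\ell} := o_i^{\ell} - z_i^{\ell}$ satisfy $d_1^{\ell} = (o_1-z_1)\,d_2^{\ell-1}$ and $d_2^{\ell} = (o_2-z_2)\,d_1^{\ell-1}$; since neither agent is perfect, \Cref{obs:hs} gives $|o_i - z_i| < 1$, so a one-line induction yields $|d_i^{\ell}| \le r^{\ell-1}$ with $r = \max\{|o_1-z_1|,|o_2-z_2|\} < 1$, hence $d_i^{\ell} \to 0$ geometrically. In parallel, forming the combination $\mu\, o_i^{\ell} + (1-\mu)\, z_i^{\ell}$ from the recursion and using the level-$1$ identity $\mu o_i + (1-\mu) z_i = \mu$ of \Cref{claim:fix} shows this combination is a level-invariant equal to $\mu$ for every $\ell$. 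Solving the two relations $\mu o_i^{\ell} + (1-\mu) z_i^{\ell} = \mu$ and $o_i^{\ell} - z_i^{\ell} = d_i^{\ell}$ gives $o_i^{\ell} = \mu + (1-\mu)d_i^{\ell}$ and $z_i^{\ell} = \mu - \mu d_i^{\ell}$, so both converge to $\mu$; plugging back into the displayed identity gives the clean estimate $p_i^{\ell} - \mu = (f_i - \mu)\,d_{3-i}^{\ell-1}$, whence $|p_i^{\ell}-\mu| \le |d_{3-i}^{\ell-1}| \to 0$.

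Finally I would pass from $p_i^{\ell}\to\mu$ to vanishing regret. The proxy $\hat\mu = (p_1^{\ell}+p_2^{\ell})/2 \to \mu \in (0,1)$, and by \Cref{claim:fix} the aggregation formula evaluated at $\hat\mu = \mu$ returns exactly $f^*$. For each signal profile with $f_1,f_2\in(0,1)$ the denominator $(1-\hat\mu)f_1 f_2 + \hat\mu(1-f_1)(1-f_2)$ is bounded away from $0$ for large $\ell$, so the aggregator's output is continuous in $\hat\mu$ there and converges to $f^*$; the boundary profiles (some $f_i\in\{0,1\}$) are handled by the definition, which returns $0$ or $1$, matching $f^*$ with zero error. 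Since the integrand $(a^{\ell}_{ae}-f^*)^2$ is bounded by $1$ and converges to $0$ pointwise, bounded convergence gives $L(a^{\ell}_{ae},Q)=\E_Q(a^{\ell}_{ae}-f^*)^2 \to 0$.

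I expect the main obstacle to be the bookkeeping of the first step---checking that the conditional-independence factorization really does produce the same linear recursion at every level and that the invariant $\mu o_i^\ell + (1-\mu) z_i^\ell = \mu$ is preserved---rather than the limiting argument, which is routine once the geometric decay of the gaps is in hand. A secondary care point is the final expectation-level convergence: one must confirm that the degenerate profiles where the aggregation formula's denominator could vanish contribute no regret, so that pointwise convergence can be upgraded to convergence of the regret.
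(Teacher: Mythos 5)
Your proposal is correct and takes essentially the same route as the paper: both track the $W$-perspective expectations $o_i^{\ell}, z_i^{\ell}$, establish the level-$\ell$ analogue of \Cref{claim:fix} ($p_i^{\ell} = f_i\, o_{3-i}^{\ell-1} + (1-f_i)\, z_{3-i}^{\ell-1}$), use the invariant $\mu o_i^{\ell} + (1-\mu) z_i^{\ell} = \mu$, and show the gaps $o_i^{\ell} - z_i^{\ell}$ vanish, with non-perfection ruling out the degenerate case. Your closed-form contraction $d_i^{\ell} = (o_i - z_i)\, d_{3-i}^{\ell-1}$, giving geometric decay at rate $\max_i |o_i - z_i| < 1$, is in fact a slight sharpening of the paper's argument, which derives only a strict decrease of the gaps via convex combinations (strict decrease alone would not force the limit to be zero, so your constant contraction factor closes that small gap cleanly).
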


To show the above result, we first observe that $p_1^1=f_1=f_1\times 1 + (1-f_1)\times 0$, $p_1^2=f_1 o_2 + (1-f_1) z_2$. Thus, $p_1^2$ can be seen as a scaled $p_1^1$ within a smaller interval. Recursively, the interval becomes smaller and smaller and finally converges to $\mu$. Formal proof is deferred to \Cref{sec:proofs}.

In general, as level $\ell$ goes to infinite, the expectation may not converge to $\mu$. For example, in the refinement-ordered setting, the expectation will converge to the less informed agent's forecast. The following proposition describes the expectation series in general. 

Let a $|\Sigma|\times |\Sigma|$ matrix $\mathbf{U}$ record agent 1's prediction for agent 2's private signal. Formally, $\forall s,s'$, $U_{s,s'}=\Pr_Q[S_2=s'|S_1=s]$. Analogously, we define $V_{s,s'}=\Pr_Q[S_1=s'|S_2=s]$. Note that both $\mathbf{U}$ and $\mathbf{V}$ are row-stochastic matrices. 

Recall that $\mathbf{f}_i$ is a $|\Sigma|$-dimensional vector where $\mathbf{f}_i(s)=f^s_i=\Pr_Q[W=1|S_i=s]$. Let $\mathbf{p}_i^{\ell}$ be a $|\Sigma|$-dimensional vector where $\mathbf{p}_i^{\ell}(s)$ is agent i's level $\ell$ expectation conditioning on she receives private signal $s$.

\begin{restatable}{proposition}{propgeneral}\label{prop:general}
	In general, for integer $k\geq 0$, \[ \mathbf{p}_1^{2k+1} = (\mathbf{U}\mathbf{V})^k \mathbf{f}_1 \qquad \mathbf{p}_1^{2k} = (\mathbf{U}\mathbf{V})^{k-1}\mathbf{U}\mathbf{f}_2  \] and \[ \mathbf{p}_2^{2k+1} = (\mathbf{V}\mathbf{U})^k \mathbf{f}_2 \qquad \mathbf{p}_2^{2k} = (\mathbf{V}\mathbf{U})^{k-1}\mathbf{V}\mathbf{f}_1  \]
	If either $rank(\mathbf{U}\mathbf{V}-\mathbf{I})=|\Sigma|-1$ or $rank(\mathbf{V}\mathbf{U}-\mathbf{I})=|\Sigma|-1$ , then $\lim_{\ell\rightarrow\infty} p_i^{\ell} = \mu,i=1,2$.  \end{restatable}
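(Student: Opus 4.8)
The plan is to split the argument into two parts: first the closed form, obtained from a one-step matrix recursion and induction, then the convergence, obtained from a spectral analysis of $\mathbf{U}\mathbf{V}$ and $\mathbf{V}\mathbf{U}$.

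\textbf{Closed form.} First I would unwind the definition of level-$\ell$ expectations into a single matrix recursion. By the law of total expectation, agent $1$'s level-$\ell$ expectation conditioned on signal $s$ is $\mathbf{p}_1^{\ell}(s)=\sum_{s'}\Pr_Q[S_2=s'\mid S_1=s]\,\mathbf{p}_2^{\ell-1}(s')$, i.e. $\mathbf{p}_1^{\ell}=\mathbf{U}\mathbf{p}_2^{\ell-1}$, and symmetrically $\mathbf{p}_2^{\ell}=\mathbf{V}\mathbf{p}_1^{\ell-1}$, with base case $\mathbf{p}_i^{1}=\mathbf{f}_i$. A short induction on $k$, using the identity $\mathbf{U}(\mathbf{V}\mathbf{U})^{k-1}=(\mathbf{U}\mathbf{V})^{k-1}\mathbf{U}$, then yields all four stated formulas.

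\textbf{Key structural fact.} The crux is to show that $(\mathbf{U}\mathbf{V})^k$ converges. Write $J_{s,s'}=\Pr_Q[S_1=s,S_2=s']$ for the joint law and $D_i=\mathrm{diag}(\pi_i)$ for the marginals (positive, since we restrict to signals of positive probability). Then $\mathbf{U}=D_1^{-1}J$ and $\mathbf{V}=D_2^{-1}J^{\top}$, so $\mathbf{U}\mathbf{V}=D_1^{-1}JD_2^{-1}J^{\top}$. Conjugating by $D_1^{1/2}$ gives $D_1^{1/2}(\mathbf{U}\mathbf{V})D_1^{-1/2}=KK^{\top}$ with $K=D_1^{-1/2}JD_2^{-1/2}$, which is symmetric positive semidefinite. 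Hence $\mathbf{U}\mathbf{V}$ is diagonalizable with real eigenvalues in $[0,1]$ (nonnegativity from the PSD similarity, the bound $1$ from row-stochasticity). I expect this to be the main obstacle to spot: the rank hypothesis alone controls only the geometric multiplicity of the eigenvalue $1$, and without ruling out other unit-modulus eigenvalues (e.g. $-1$ or complex ones, which would make the powers oscillate) the sequence need not converge; the symmetric-PSD similarity is exactly what forbids this.

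\textbf{Convergence.} Given diagonalizability and real nonnegative eigenvalues, the only eigenvalue on the unit circle is $1$, and the hypothesis $rank(\mathbf{U}\mathbf{V}-\mathbf{I})=|\Sigma|-1$ forces its multiplicity to be exactly $1$, with right eigenvector $\mathbf{1}$. Therefore $(\mathbf{U}\mathbf{V})^k\to\mathbf{1}\mathbf{w}^{\top}$, the spectral projection onto the $1$-eigenspace, where $\mathbf{w}$ is the left $1$-eigenvector normalized by $\mathbf{w}^{\top}\mathbf{1}=1$. I would then identify $\mathbf{w}=\pi_1$ by checking $\pi_1^{\top}\mathbf{U}=\pi_2^{\top}$ and $\pi_2^{\top}\mathbf{V}=\pi_1^{\top}$ (the column and row sums of $J$), so that $\mathbf{p}_1^{2k+1}=(\mathbf{U}\mathbf{V})^k\mathbf{f}_1\to\mathbf{1}(\pi_1^{\top}\mathbf{f}_1)$, and $\pi_1^{\top}\mathbf{f}_1=\sum_s\Pr_Q[S_1=s]\Pr_Q[W=1\mid S_1=s]=\mu$. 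The even-indexed and agent-$2$ subsequences are handled identically, using $\pi_1^{\top}\mathbf{U}\mathbf{f}_2=\pi_2^{\top}\mathbf{f}_2=\mu$ and the analogue for $\mathbf{V}\mathbf{U}$ with left eigenvector $\pi_2$. Finally, since $\mathbf{U}\mathbf{V}$ and $\mathbf{V}\mathbf{U}$ are both $|\Sigma|\times|\Sigma|$ they share a characteristic polynomial, and both are diagonalizable, so the two rank conditions are equivalent; hence either hypothesis drives all four subsequences to $\mu\mathbf{1}$, giving $\lim_{\ell\to\infty}p_i^{\ell}=\mu$.
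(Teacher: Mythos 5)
Your proposal is correct, and its skeleton matches the paper's proof: the same recursion $\mathbf{p}_1^{\ell}=\mathbf{U}\mathbf{p}_2^{\ell-1}$, $\mathbf{p}_2^{\ell}=\mathbf{V}\mathbf{p}_1^{\ell-1}$ with induction for the closed form, and the same identification of the limit through the left eigenvector $\mathbf{q}_1$ (your $\pi_1$) satisfying $\mathbf{q}_1^{\top}\mathbf{U}\mathbf{V}=\mathbf{q}_1^{\top}$ and $\mathbf{q}_1^{\top}\mathbf{f}_1=\mu$. Where you genuinely depart is the convergence step, and your version is more careful than the paper's. The paper argues: $\mathbf{U}\mathbf{V}$ is row-stochastic, the rank hypothesis gives a one-dimensional kernel of $\mathbf{U}\mathbf{V}-\mathbf{I}$, ``therefore $\mathbf{p}_1^{\ell}$ will converge to a vector proportional to $\mathbf{1}$.'' Read literally, that inference is incomplete: a row-stochastic matrix with a one-dimensional fixed space can still carry eigenvalue $-1$ or complex unit-modulus eigenvalues (a cyclic permutation matrix is the standard example), in which case the powers oscillate and nothing converges. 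Your key structural fact---that $\mathbf{U}\mathbf{V}=D_1^{-1}JD_2^{-1}J^{\top}$ is conjugate to the positive semidefinite matrix $KK^{\top}$, hence diagonalizable with real eigenvalues in $[0,1]$---is precisely what rules this out and makes the spectral projection $(\mathbf{U}\mathbf{V})^k\to\mathbf{1}\mathbf{w}^{\top}$ legitimate; the paper's write-up omits this step entirely (it defers to \citet{samet1998iterated} for the result itself). Your closing observation that the two rank hypotheses are equivalent (equality of the characteristic polynomials of $\mathbf{U}\mathbf{V}$ and $\mathbf{V}\mathbf{U}$ plus diagonalizability of both) also replaces the paper's ``without loss of generality,'' which implicitly relies on the symmetry of swapping the two agents. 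In short: same architecture, but your spectral analysis supplies the justification for convergence that the paper merely asserts, at the cost of a slightly longer argument.
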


Thus, even in general, $\frac{p^{\ell}_1+p^{\ell}_2}{2}$ can be seen as an approximation for prior $\mu$ in many scenarios. \citet{samet1998iterated} includes the above result (Proposition~\ref{prop:general}) and shows that the convergence of the expectation hierarchy is a necessary and sufficient condition of the existence of the common prior. Here we still present the proof of Proposition~\ref{prop:general} for completeness.   

Regarding the proof, once we realize that $\mathbf{p}^{\ell}_1 = \mathbf{U} \mathbf{p}^{\ell-1}_2$ and $\mathbf{p}^{\ell}_2 = \mathbf{V} \mathbf{p}^{\ell-1}_1$, the above result follows directly. We defer the formal proof to \Cref{sec:proofs}.

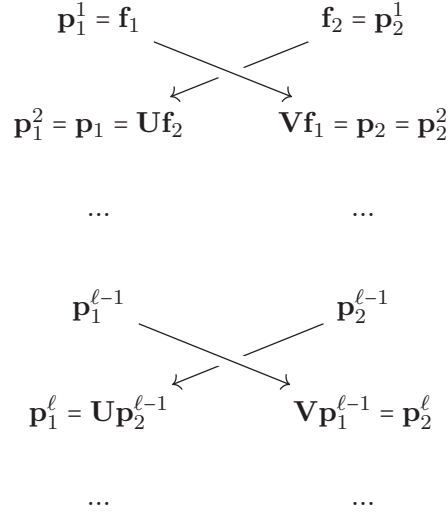
\begin{figure}
	\begin{center}
\begin{tikzcd}


  \mathbf{p}_1^{1}=\mathbf{f}_1 &\mathbf{f}_2=\mathbf{p}_2^{1}  \\
  \mathbf{p}_1^{2}=\mathbf{p}_1 =\mathbf{U} \mathbf{f}_2 \ar[from=1-2,crossing over] &\mathbf{V} \mathbf{f}_1=\mathbf{p}_2=\mathbf{p}_2^{2}
  \ar[from=1-1,crossing over]\\
 ... & ...\\
  \mathbf{p}_1^{\ell-1} & \mathbf{p}_2^{\ell-1} \\
  \mathbf{p}_1^{\ell}= \mathbf{U} \mathbf{p}_2^{\ell-1} \ar[from=4-2,crossing over] & \mathbf{V} \mathbf{p}_1^{\ell-1}=\mathbf{p}_2^{\ell} \ar[from=4-1,crossing over]\\
  ... & ...\\
\end{tikzcd}
	\end{center} 
	\caption{Higher Level of Expectations}\label{fig:higherlevel}
\end{figure}

\paragraph{Limitation} Even with infinite level of expectations, we may not obtain optimal forecast in some scenarios. When two agents' signals are independent bits where each bit is 0 with probability 0.5 and $W$ is their xor, the forecasts and all levels of expectations are 1/2. No scheme can aggregate optimally based on the forecasts and expectations. Moreover, the level of expectations only depend on $\mathbf{f}_i,\mathbf{U},\mathbf{V}$. Thus, it only depends on the marginal joint distribution between each forecaster's private signal and outcome $W$, and the joint distribution between the forecasters' private signals. However, the optimal aggregation of agents' private signals require the knowledge of the joint distribution over $S_1,S_2$ and $W$. Thus, the forecasts and all levels of expectations do not provide sufficient information for perfect aggregation.

\section{Additional Proofs}\label{sec:proofs}

\claimfix*

\begin{proof}[Proof of ~\Cref{claim:fix}]
In the conditionally independent setting, 

When agent 1 receives $s_1$, her expectation for agent 2's forecast is 
	 
	 \begin{align*}
	 	p_1=&\sum_x \Pr_Q[S_2=x|S_1=s_1]\Pr_Q[W=1|S_2=x]\\ \tag{conditionally independent}
	 	=&\sum_x (\Pr_Q[W=0|S_1=s_1] \Pr_Q[S_2=x|W=0]+\Pr_Q[W=1|S_1=s_1] \Pr_Q[S_2=x|W=1])f_2^x\\ 
	 	= & \sum_x( (1-f_1^{s_1}) \Pr_Q[S_2=x|W=0] + f_1^{s_1} \Pr_Q[S_2=x|W=1]) f_2^x\\
	 	= & (1-f_1^{s_1}) z_2 + f_1^{s_1} o_2\\ \tag{When agent 1 receives $s_1$, $f_1=f_1^{s_1}$}
	 	= & (1-f_1) z_2 + f_1 o_2
	 \end{align*}
	 
Analogously, agent 2's expectation for agent 1's forecast is $p_2 = f_2 o_1 + (1-f_2) z_1$. 

Moreover, 

\begin{align*}
	&\mu o_i + (1-\mu) z_i \\
	= & \Pr_Q[W=1] \sum_x f_i^x \Pr_Q[S_i=x|W=1] + \Pr_Q[W=0] \sum_x f_i^x \Pr_Q[S_i=x|W=0]\\
	= & \sum_x f_i^x \Pr_Q[S_i=x] \\
	= & \sum_x \Pr_Q[W=1|S_i=x] \Pr_Q[S_i=x] = \mu
\end{align*}

\begin{align*}
	f^*=&\Pr_Q[W=1|S_1=s_1,S_2=s_2]\\
	=&\frac{\Pr_Q[W=1|S_1=s_1,S_2=s_2]}{\Pr_Q[W=1|S_1=s_1,S_2=s_2]+\Pr_Q[W=0|S_1=s_1,S_2=s_2]}\\
	= & \frac{\Pr_Q[W=1,S_1=s_1,S_2=s_2]}{\Pr_Q[W=1,S_1=s_1,S_2=s_2]+\Pr_Q[W=0,S_1=s_1,S_2=s_2]}\\
	= & \frac{\mu \frac{q_1^{s_1}f_1^{s_1}}{\mu}\frac{q_2^{s_2}f_2^{s_2}}{\mu}}{\mu \frac{q_1^{s_1}f_1^{s_1}}{\mu}\frac{q_2^{s_2}f_2^{s_2}}{\mu}+(1-\mu) \frac{q_1^{s_1}(1-f_1^{s_1})}{1-\mu}\frac{q_2^{s_2}(1-f_2^{s_2})}{1-\mu}}\\
	= & \frac{(1-\mu) f_1 f_2}{(1-\mu)f_1 f_2 + \mu(1-f_1)(1-f_2)}
\end{align*}

\end{proof}

\claimmultilinear*
\begin{proof}[Proof of \Cref{claim:multilinear}]
	The regret regarding $Q$ is
\begin{align*}
	& \E_{Q} (a(f_1,f_2,p_1,p_2)-f^*)^2\\
	=& \sum_{s_1,s_2} \Pr_Q[S_1=s_1,S_2=s_2] \\
	&(a(f_1^{s_1},f_2^{s_2},(1-f_1^{s_1}) z_2 + f_1^{s_1} o_2,f_2^{s_2} o_1 + (1-f_2^{s_2}) z_1)-\frac{(1-\mu) f_1^{s_1} f_2^{s_2}}{(1-\mu)f_1^{s_1} f_2^{s_2} + \mu(1-f_1^{s_1})(1-f_2^{s_2})})^2 \end{align*}

Moreover, we have $z_i=\frac{\mu-\mu o_i}{1-\mu},i=1,2$ and $ \Pr_Q[S_1=s_1,S_2=s_2]=\mu \frac{q_1^{s_1}f_1^{s_1}}{\mu}\frac{q_2^{s_2}f_2^{s_2}}{\mu}+(1-\mu) \frac{q_1^{s_1}(1-f_1^{s_1})}{1-\mu}\frac{q_2^{s_2}(1-f_2^{s_2})}{1-\mu}$.

Thus, we have $\E_{Q} (a(f_1,f_2,p_1,p_2)-f^*)^2 = \sum_{s_1,s_2} q_1^{s_1} q_2^{s_2} \psi(f_1^{s_1},f_2^{s_2},\mu,o_1,o_2)$

\end{proof}

\claimlinear*

\begin{proof}[Proof of \Cref{claim:linear}]

We pick any $\tilde{Q}\in \mathcal{Q}_{\mu,o_1,o_2,\mathbf{f}_1,\mathbf{f}_2}$. 

Constraint 1) and the non-negative constraint are satisfied naturally. $ \mu = \Pr_{\tilde{Q}}[W=1]=\sum_{x} \Pr_{\tilde{Q}}[W=1|S_i=x] \Pr[S_i=x] =\sum_x f_i^x \tilde{q}_i^x$. Thus, constraint 2) is satisfied. 

We then connect the prior vector $\mathbf{\tilde{q}}_i$ with $o_i$. $o_i = \sum_x f_i^x \Pr_{\tilde{Q}}[S_i=x|W=1]=\sum_x f_i^x \frac{f_i^x \tilde{q}_i^x }{\mu}$. Thus, constraint 3) is also satisfied.


We have proved one direction. It's left to show the other direction: any pair of vectors that satisfies the constraints can be used to construct a prior in the set. 

Let $\mathbf{\hat{q}}_i,i=1,2$ be any pair of solutions. We will use them to construct $\hat{Q}\in\mathcal{Q}_{\mu,o_1,o_2,\mathbf{f}_1,\mathbf{f}_2}$. 

For all $(s_1,s_2)\in \Sigma \times \Sigma $, we set $\Pr_{\hat{Q}}(S_1=s_1,S_2=s_2,W=1)=\mu \frac{\hat{q}_1^{s_1}f_1^{s_1}}{\mu}\frac{\hat{q}_2^{s_2}f_2^{s_2}}{\mu}$, $\Pr_{\hat{Q}}(S_1=s_1,S_2=s_2,W=0)=(1-\mu) \frac{\hat{q}_1^{s_1}(1-f_1^{s_1})}{1-\mu}\frac{\hat{q}_2^{s_2}(1-f_2^{s_2})}{1-\mu}$. 

According to constraint 2), we have $\Pr_{\hat{Q}}[W=1]=\sum_{s_1,s_2} \Pr_{\hat{Q}}(S_1=s_1,S_2=s_2,W=1)=\sum_{s_1,s_2} \mu \frac{\hat{q}_1^{s_1}f_1^{s_1}}{\mu}\frac{\hat{q}_2^{s_2}f_2^{s_2}}{\mu}=\mu$. According to constraint 1) and 2),  $\sum_x (1-f_i^x) \hat{q}_i^x=\sum_x  \hat{q}_i^x-\sum_x f_i^x \hat{q}_i^x=1-\mu$. Thus, $\Pr_{\hat{Q}}[W=0]=\sum_{s_1,s_2} \Pr_{\tilde{Q}}(S_1=s_1,S_2=s_2,W=0)=\sum_{s_1,s_2} (1-\mu) \frac{\tilde{q}_1^{s_1}(1-f_1^{s_1})}{1-\mu}\frac{\hat{q}_2^{s_2}(1-f_2^{s_2})}{1-\mu}=1-\mu$.

The above results also imply that we have defined a valid distribution where $\sum_{s_1,s_2}\Pr_{\hat{Q}}(S_1=s_1,S_2=s_2,W=1)+ \Pr_{\hat{Q}}(S_1=s_1,S_2=s_2,W=0)=1$. 

According to constraint 2), $\Pr_{\hat{Q}}(S_2=s_2,W=1)=\sum_{s_1} \Pr_{\hat{Q}}(S_1=s_1,S_2=s_2,W=1)=\hat{q}_2^{s_2}f_2^{s_2}$. Thus, $\Pr_{\hat{Q}}(W=1|S_2=s_2)=f_2^{s_2}$. Analogously, $\Pr_{\hat{Q}}(W=1|S_1=s_1)=f_1^{s_1}$.

$\Pr_{\hat{Q}}(S_2=s_2,W=0)=\sum_{s_1} \Pr_{\hat{Q}}(S_1=s_1,S_2=s_2,W=0)=\hat{q}_2^{s_2}(1-f_2^{s_2})$. Analogously, $\Pr_{\hat{Q}}(S_1=s_1,W=0)=\hat{q}_1^{s_1}(1-f_1^{s_1})$.

Additionally, $\Pr_{\hat{Q}}(S_2=s_2)=\Pr_{\hat{Q}}(S_2=s_2,W=1)+\Pr_{\hat{Q}}(S_2=s_2,W=0)=\hat{q}_2^{s_2}$. Analogously, $\Pr_{\hat{Q}}(S_1=s_1)=\hat{q}_1^{s_1}$.

Therefore, $\hat{Q}$ defines a valid joint distribution, whose prior vectors are $\mathbf{\hat{q}}_i,i=1,2$.  $\hat{Q}$ also has the same $\mu,\mathbf{f}_1,\mathbf{f}_2$ as $Q$. 

Finally, with the above results, constraint 3) show that $\hat{Q}$ has the same $o_1,o_2$ as $Q$. Thus, we can use $\mathbf{\hat{q}}_i,i=1,2$ to construct a valid prior $\hat{Q}$ which belongs to $ \mathcal{Q}_{\mu,o_1,o_2,\mathbf{f}_1,\mathbf{f}_2}$.

\end{proof}

\lemmamix*

\begin{proof}[Proof of \Cref{lemma:mix}]

\begin{align}
	\notag &x \E_{Q} \left(\Pr_{x Q + (1-x) Q'}[W=1|f_1,f_2,p_1,p_2]-\Pr_{Q}[W=1|f_1,f_2,p_1,p_2]\right )^2 \\ \notag 
		 & + (1-x) \E_{Q'} \left(\Pr_{x Q + (1-x) Q'}[W=1|f_1,f_2,p_1,p_2]-\Pr_{Q'}[W=1|f_1,f_2,p_1,p_2]\right)^2\\ \notag
	\geq & x\Pr_Q [S_1 = B, S_2=B] ( \Pr_{x Q + (1-x) Q'}[W=1|f_1=f_2=b,p_1=p_2=p]-\frac{(1-\mu) b^2 }{(1-\mu)b^2 + \mu(1-b)(1-b)})^2
	\\&+(1-x)\Pr_{Q'} [S_1 = B, S_2=B] (\Pr_{x Q + (1-x) Q'}[W=1|f_1=f_2=b,p_1=p_2=p]-\frac{(1-\mu') b^2 }{(1-\mu')b^2 + \mu'(1-b)(1-b)})^2 \label{eq:lbf}
\end{align}

Moreover, the Bayesian posterior conditioning $f_1=f_2=b,p_1=p_2=p$ is the posteriors conditioning on both agents receive signal B. 

	\begin{align*}
	&\Pr_{x Q + (1-x) Q'}[W=1|f_1=f_2=b,p_1=p_2=p]\\
	= & \Pr_{x Q + (1-x) Q'}[W=1|S_1=B,S_2=B]\\
	= & \frac{\Pr_{x Q + (1-x) Q'}[W=1,S_1=B,S_2=B]}{\Pr_{x Q + (1-x) Q'}[S_1=B,S_2=B]}\\
	= & \frac{\Pr_{x Q + (1-x) Q'}[W=1,S_1=B,S_2=B]}{\Pr_{x Q + (1-x) Q'}[S_1=B,S_2=B]}\\ \tag{$q^B=\Pr_Q[S_1=B],q'^B=\Pr_{Q'}[S_1=B]$}
	= & \frac{x \Pr_Q [W=1,S_1 = B, S_2=B] + (1-x)\Pr_{Q'} [W=1,S_1 = B, S_2=B]}{x \Pr_Q [S_1 = B, S_2=B] + (1-x)\Pr_{Q'} [S_1 = B, S_2=B]}
\end{align*} 

We have $\Pr_Q [W=1,S_1 = B, S_2=B]= \mu (\frac{b q^B }{\mu})^2$, and $\Pr_{Q'} [W=1,S_1 = B, S_2=B]=\mu'(\frac{b q'^{B}}{\mu'})^2$. Additionally,

\begin{align*}
	\Pr_Q [S_1 = B, S_2=B] = \mu (\frac{b q^B }{\mu})^2 + (1-\mu) (\frac{(1-b) q^B }{1-\mu})^2
\end{align*}

Analogously, 

\begin{align*}
	\Pr_{Q'} [S_1 = B, S_2=B] = \mu' (\frac{b q'^B }{\mu'})^2 + (1-\mu') (\frac{(1-b) q'^B }{1-\mu'})^2
\end{align*}

It's left to represent $q^B$ and $q'^B$ by $\mu,\mu',b,p,x$. 

According to \Cref{claim:fix} and \Cref{claim:linear}, $p=b (b^2 \frac{q^B}{\mu}+\frac{q^C}{\mu})+(1-b) (b(1-b)\frac{q^B}{1-\mu})$. 

Additionally, $\mu = b q^B + q^C$ which implies that $q^C=\mu- b q^B$.

Thus, we have $q^B (\frac{b(1-b)^2}{1-\mu}-\frac{b^2-b^3}{\mu}) =p-b$, which implies that $q^B = \frac{p-b}{\frac{b(1-b)^2}{1-\mu}-\frac{b^2-b^3}{\mu}}$. 

Analogously, $q'^B = \frac{p-b}{\frac{b(1-b)^2}{1-\mu'}-\frac{b^2-b^3}{\mu'}}$.

Finally, we can substitute $q^B$ and $q'^B$. By symbolic computations aided by Matlab, we have \Cref{eq:lbf}$= (p-b)^2 \psi(\mu,\mu',b,x)$.

Thus, to have a better lower bound, we want $(p-b)^2$ to be large. However, because $q^A,q^B,q^C$, $q'^A,q'^B,q'^C\in[0,1]$, we also have restrictions on $p$. 

We consider the case $b<\mu<\mu'$. 



1) $q^A,q'^A\in [0,1]\Rightarrow$ $q^B+q^C,q'^B+q'^C\in [0,1]$ $\Rightarrow$ $(1-b) q^B\leq 1-\mu $, $(1-b) q'^B\leq 1-\mu' $. This implies that $p-b \leq \frac{1-\mu}{1-b} b(1-b)(\frac{1-b}{1-\mu}-\frac{b}{\mu})=b(1-b-b \frac{1-\mu}{\mu})=b(1-b\frac{1}{\mu})$ and  $p-b \leq b(1-b\frac{1}{\mu'})$. 

2) $q^B,q'^B\in [0,1]$ are satisfied directly with restriction 1) because both $\frac{1-\mu}{1-b}$ and $\frac{1-\mu'}{1-b}$ are less than one in the case $b<\mu<\mu'$. 

3) $q^C,q'^C\in [0,1]\Rightarrow$ $b q^B\leq \mu$, $b q'^B\leq \mu'$. Because we are in the case $b<\mu<\mu'$, the above restrictions are satisfied automatically. 

Therefore, in the case $b<\mu<\mu'$, putting the above restrictions together, $p-b\leq b(1-b\frac{1}{\mu})$.

To this end, we can substitute the optimal $p$ into \Cref{eq:lbf} and it's left to maximize

\[(b(1-b\frac{1}{\mu}))^2 \psi (\mu,\mu',b,x)\] with $b<\mu<\mu'$.

We then fix $\mu,\mu',b$ and substitute $x$ by the solution of $\frac{\partial\psi}{\partial x}(x)=0 $. Finally, we reduce the parameters to only $\mu,\mu',b$ and find that $\mu = 0.7427, \mu' = 0.9100, b=0.62$ leads to regret $>0.00144$.



\end{proof}

\claimiid*

\begin{proof}[Proof of \Cref{claim:iid}]
In the c.i.i.d. setting, $o_1=o_2$ and $z_1=z_2$. Let $o$ denote $o_1$ and $z$ denote $z_1$. 

\Cref{claim:fix} shows that $p_1 = f_1 o + (1-f_1) z$ and $p_2 = f_2 o + (1-f_2) z$. Therefore, 
$f_2 p_1 - f_1 p_2 = (f_2-f_1) z$. 

Moreover, \Cref{claim:fix} also shows that $\mu o + (1-\mu)z=\mu$. If $\mu = 1$, then $f_1=f_2=1$, which contradicts our condition $f_1\neq f_2$. Thus, $\mu\neq 1$. We can represent $z = \frac{\mu}{1-\mu} (1-o)$. 

Finally, 

\begin{align*}
	\frac{f_2 p_1 -f_1 p_2}{f_2-f_1 -p_2 + p_1} = & \frac{(f_2-f_1) z}{f_2-f_1 + o(f_1-f_2) +(f_2-f_1)z}\\
	= & \frac{z}{1-o+z}\\ \tag{$z = \frac{\mu}{1-\mu} (1-o)$}
	= & \mu 
\end{align*}

\end{proof}

\prophigh*

\begin{proof}[Proof of \Cref{prop:high}]

For all random variable $X$, let $\E_i[X]$ denote the expectation for $X$ from agent $i$'s perspective. Let $\E_{W=0}[X]=\sum_x x\Pr_Q[X=x|W=0]$ and $\E_{W=1}[X]=\sum_x x\Pr_Q[X=x|W=1]$.

In the conditionally independent setting, we have $p_1^{\ell} = \E_1[p_2^{\ell-1}]=f_1 \E_{W=1}[p_2^{\ell-1}]+ (1-f_1) \E_{W=0}[p_2^{\ell-1}]$ and $p_2^{\ell} = \E_2[p_1^{\ell-1}]=f_2 \E_{W=1}[p_1^{\ell-1}]+ (1-f_2) \E_{W=0}[p_1^{\ell-1}]$. 

Let $o_i^{\ell} = \E_{W=1}[p_i^{\ell}] $ and $z_i^{\ell} = \E_{W=0}[p_i^{\ell}] $.

$o_2^{\ell}=\E_{W=1}[p_2^{\ell}]=\sum_s \Pr[S_2=s|W=1](f_2^s \E_{W=1}[p_1^{\ell-1}]+(1-f_2^s))\E_{W=0}[p_1^{\ell-1}])$, where $\Pr[S_2=s|W=1]=\frac{q_2^s f_2^s}{\mu}$.

$z_2^{\ell}=\E_{W=0}[p_2^{\ell}]=\sum_s \Pr[S_2=s|W=0](f_2^s \E_{W=1}[p_1^{\ell-1}]+(1-f_2^s))\E_{W=0}[p_1^{\ell-1}])$, where $\Pr[S_2=s|W=0]=\frac{q_2^s (1-f_2^s)}{1-\mu}$.

Thus, we have $o_2^{\ell} = \sum_s \frac{q_2^s f_2^s}{\mu} (f_2^s o_1^{\ell-1}+(1-f_2^s))z_1^{\ell-1})$ and $z_2^{\ell} = \sum_s \frac{q_2^s (1-f_2^s)}{1-\mu} (f_2^s o_1^{\ell-1}+(1-f_2^s))z_1^{\ell-1})$.

Analogously, we have

$o_1^{\ell} = \sum_s \frac{q_1^s f_1^s}{\mu} (f_1^s o_2^{\ell-1}+(1-f_1^s))z_2^{\ell-1})$ and

$z_1^{\ell} = \sum_s \frac{q_1^s (1-f_1^s)}{1-\mu} (f_1^s o_2^{\ell-1}+(1-f_1^s))z_2^{\ell-1})$.

Additionally, based on induction, $\mu o_i^{\ell} + (1-\mu) z_i^{\ell}=\E[p_i^{\ell}]=\mu$ for $i=1,2$. 

When $o^{\ell-1}_1=z^{\ell-1}_1=\mu$, $o^{\ell}_2=\sum_s \frac{q_2^s f_2^s}{\mu}\mu = \mu$ and $z^{\ell}_2 = \sum_s \frac{q_2^s (1-f^s_2)}{1-\mu}\mu =\mu $. Analogously, when $o^{\ell}_2=z^{\ell}_2=\mu$, $o^{\ell+1}_1=z^{\ell+1}_1=\mu$. Therefore, $o^{\ell-1}_i=z^{\ell-1}_i=\mu,i=1,2$ is a fixed point. 

It's left to show the convergence of the series $\{o_i^{\ell}, z_i^{\ell}\},i=1,2$. Without loss of generality, we assume $q_i^s>0,\forall s$, because otherwise we can restrict $\Sigma$ to the set of signals which have non-zero prior probability.

Because $\sum_s \frac{q_1^s f_1^s}{\mu}=1$, $o_1^{\ell}$ is a convex combination of $\{(f_1^s o_2^{\ell-1}+(1-f_1^s))z_2^{\ell-1})\}_s$. Thus, we have $o_1^{\ell}\leq\max\{o^{\ell-1}_2,z^{\ell-1}_2\}$, and analogously, $z_1^{\ell} \geq \min\{o^{\ell-1}_2,z^{\ell-1}_2\}$. 

The equality holds only if 1) $o^{\ell-1}_2=z^{\ell-1}_2$ or 2) $\forall s$, $f^s_1=0$ or $f^s_1=1$. In case 1), $\mu = \mu o^{\ell-1}_2 + (1-\mu) z^{\ell-1}_2=o^{\ell-1}_2$. The series $\{o^{\ell}_i, z^{\ell}_i\}, i=1,2$ have already converged to $\mu$ according to the above observation. In case 2), agent 1 is a perfect forecaster which contradicts the condition. 

Therefore, when the series have not converged, the equality does not hold, $|o_1^{\ell}-z_1^{\ell}|<|o^{\ell-1}_2-z^{\ell-1}_2|$ and analogously,  $|o^{\ell-1}_2-z^{\ell-1}_2|<|o^{\ell-2}_1-z^{\ell-2}_1|$.

Therefore, as $\ell$ goes to infinite, $\lim_{\ell\rightarrow\infty} o^{\ell}_i-z^{\ell}_i=0$, which implies that $\lim_{\ell\rightarrow\infty} o^{\ell}_i=\lim_{\ell\rightarrow\infty} z^{\ell}_i= \mu,i=1,2$. Therefore, $\lim_{\ell\rightarrow\infty} p_i^{\ell} = f \lim_{\ell\rightarrow\infty} o^{\ell}_i + (1-f) \lim_{\ell\rightarrow\infty} z^{\ell}_i = \mu$. 

Thus \[\lim_{\ell\rightarrow\infty} L(a_{ae}^{\ell},Q)=0.\]

\end{proof}

\propgeneral*

\begin{proof}[Proof of \Cref{prop:general}]
	According to the definition of level $\ell$ expectation, $\mathbf{p}^{\ell}_1(s_1) = \sum_s \Pr[S_2=s|S_1=s_1] \mathbf{p}^{\ell-1}_2(s)$. Therefore, $\mathbf{p}^{\ell}_1 = \mathbf{U} \mathbf{p}^{\ell-1}_2$. Analogously, $\mathbf{p}^{\ell}_2 = \mathbf{V} \mathbf{p}^{\ell-1}_1$.
	
	Note that $\mathbf{p}^{1}_i =  \mathbf{f}_i, i=1,2$. Thus, \[ \mathbf{p}_1^{2k+1} = (\mathbf{U}\mathbf{V})^k \mathbf{f}_1 \qquad \mathbf{p}_1^{2k} = (\mathbf{U}\mathbf{V})^{k-1}\mathbf{U}\mathbf{f}_2  \] and \[ \mathbf{p}_2^{2k+1} = (\mathbf{V}\mathbf{U})^k \mathbf{f}_2 \qquad \mathbf{p}_2^{2k} = (\mathbf{V}\mathbf{U})^{k-1}\mathbf{V}\mathbf{f}_1  .\]
	
	Because both $\mathbf{U}$ and $\mathbf{V}$ are row-stochastic, both $\mathbf{U}\mathbf{V}$ and $\mathbf{V}\mathbf{U}$ are row-stochastic as well. 
	
	Thus, $(\mathbf{U}\mathbf{V}-\mathbf{I})\mathbf{v}=0$ has at least one solution, $\mathbf{v}=\mathbf{1}$ which is a column vector whose elements are all one. 
	
	Without loss of generality, we assume that $rank(\mathbf{U}\mathbf{V}-\mathbf{I})=|\Sigma|-1$. This implies that its kernel space's dimension is 1. 
	
	Therefore, $\mathbf{p}^{\ell}_1$ will converge to a vector $\mathbf{p}^*$ that is proportional to $\mathbf{1}$. 
	
	Additionally, let $\mathbf{q}_1$ denote a column vector where $\mathbf{q}_1(s)=q^s_1=\Pr_Q[S_1=s]$. 
	
	$(\mathbf{q}_1^{\top} \mathbf{U}\mathbf{V})(s'') = \sum_{s,s'}  \Pr_Q[S_1=s] \Pr_Q[S_2=s'|S_1=s] \Pr_Q[S_1=s''|S_2=s'] =\Pr_Q[S_1=s'']$. Therefore, $\mathbf{q}_1^{\top} \mathbf{U}\mathbf{V} = \mathbf{q}_1$, which implies that $\mathbf{q}_1^{\top} \mathbf{p}^{2k}_1 = \mathbf{q}_1^{\top} \mathbf{f}_1 = \mu $ for all $k$. 
	
	Thus, $\mathbf{q}_1^{\top}\mathbf{p}^* = \mu$. Combining with the fact that $\mathbf{p}^*$ is proportional to $\mathbf{1}$, we have $\mathbf{p}^*= \mu \mathbf{1}$. 
	
	When $\mathbf{p}^{\ell}_1$ converges to $\mu \mathbf{1}$, $\mathbf{p}^{\ell}_2$ converges to $\mu \mathbf{1}$ as well based on the recursive relationship. 
	
	Thus, $\lim_{\ell\rightarrow\infty} p_i^{\ell} = \mu,i=1,2$.

\end{proof}

\propmany*

\begin{proof}[Proof of \Cref{prop:many}]
When there exists $f_i\neq f_j$, \Cref{claim:iid} implies that $\hat{\mu}=\mu$, thus the regret is zero. It's left to analyze the scenarios when all agents report the same forecast.

        \begin{align*}
		L(a^n_{ciid},Q) = & \sum_s \Pr[\forall i, S_i=s] \\
		&\times (\frac{(1-\hat{\mu})^{n-1}(f^s)^n}{(1-\hat{\mu})^{n-1}(f^s)^n+\hat{\mu}^{n-1}(1-f^s)^n}\\
		&\ \ \ \ \ \ -\frac{(1-\mu)^{n-1}(f^s)^n}{(1-\mu)^{n-1}(f^s)^n+\mu^{n-1}(1-f^s)^n})^2\\
		\end{align*} where $\hat{\mu} = f^s o + (1-f^s) z$.

		If $f^s=0$, then the regret is zero. Otherwise, we rewrite the regret into the following format
		
		\begin{align*}
			= & \sum_s \Pr[\forall i, S_i=s] \\
		&\times (\frac{1}{(1+\frac{1-\hat{\mu}}{\hat{\mu}}(\frac{\hat{\mu}(1-f^s)}{(1-\hat{\mu})f^s})^n}-\frac{1}{(1+\frac{1-\mu}{\mu}(\frac{\mu(1-f^s)}{(1-\mu)f^s})^n})^2\\\end{align*}

\begin{claim}\label{claim:many}
	$ f^s- \hat{\mu} = \frac{1-o}{1-\mu}(f^s-\mu)$.
\end{claim}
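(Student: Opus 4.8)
The plan is to prove Claim~\ref{claim:many} by direct algebraic manipulation, relying only on the two identities furnished by \Cref{claim:fix} specialized to the c.i.i.d.\ setting. Recall that in this setting the single pair $o,z$ plays the role of $o_i,z_i$ for both agents, and the proxy used inside $a^n_{ciid}$ when all forecasts coincide is $\hat{\mu} = f^s o + (1-f^s) z$. The two facts I would invoke are precisely this expression for $\hat{\mu}$ and the normalization $\mu o + (1-\mu) z = \mu$, both of which are already established in \Cref{claim:fix}.

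First I would solve the normalization identity for $z$. From $\mu o + (1-\mu) z = \mu$ we get $z(1-\mu) = \mu(1-o)$, hence $z = \frac{\mu(1-o)}{1-\mu}$; note this uses $\mu \neq 1$, which holds because $\mu=\Pr_Q[W=1]\in(0,1)$ by assumption. Next I would rewrite the proxy as $\hat{\mu} = z + f^s(o-z)$, so that
\[
f^s - \hat{\mu} = f^s\bigl(1 - o + z\bigr) - z .
\]
The key simplification is to observe that the bracketed factor collapses: substituting the expression for $z$,
\[
1 - o + z = (1-o)\Bigl(1 + \tfrac{\mu}{1-\mu}\Bigr) = \frac{1-o}{1-\mu}.
\]

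Finally I would plug this back in and factor out $\frac{1-o}{1-\mu}$, using $z = \frac{\mu(1-o)}{1-\mu}$ once more:
\[
f^s - \hat{\mu} = f^s\cdot\frac{1-o}{1-\mu} - \frac{\mu(1-o)}{1-\mu} = \frac{1-o}{1-\mu}\,(f^s - \mu),
\]
which is exactly the claimed identity. There is no real obstacle here beyond bookkeeping: the entire argument is two substitutions followed by a factoring, and the only point requiring a word of justification is the division by $1-\mu$, handled by the standing assumption $\mu\in(0,1)$. I expect the statement will be used afterward to show that $f^s-\hat{\mu}$ and $f^s-\mu$ have the same sign and are comparable in magnitude, which drives the convergence of $a^n_{ciid}$'s aggregate to the optimal forecast as $n\to\infty$ even when all forecasters report the same value.
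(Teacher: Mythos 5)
Your proof is correct and follows essentially the same route as the paper's: both use the identity $\mu o + (1-\mu)z = \mu$ from \Cref{claim:fix} to express $z = \frac{\mu(1-o)}{1-\mu}$, substitute into $\hat{\mu} = f^s o + (1-f^s)z$, and factor; the only difference is cosmetic (you collect terms as $f^s(1-o+z)-z$ while the paper factors out $(1-o)$ first). Your explicit remark that division by $1-\mu$ is justified by the standing assumption $\mu\in(0,1)$ is a small point the paper leaves implicit.
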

\begin{proof}[Proof of \Cref{claim:many}]
\begin{align*}
	 f^s- \hat{\mu} = & f^s-f^s o -(1-f^s)\frac{\mu-\mu o}{1-\mu}\\
	 = & (1-o)(f^s-(1-f^s) \frac{\mu}{1-\mu}) \\
	 = & (1-o)(\frac{f^s(1-\mu)}{1-\mu}-(1-f^s) \frac{\mu}{1-\mu})\\
	 = & (1-o)\frac{1}{1-\mu} (f^s-\mu)
\end{align*}
\end{proof}

Fixing $Q$, if $f^s=\mu$, then $\hat{\mu}=\mu o + (1-\mu) z=\mu$. This implies $L(a^n_{ciid},Q)=0$. If $f^s>\mu$, \Cref{claim:many} implies $f^s>\hat{\mu}$. Then both $\frac{1-\mu}{\mu}(\frac{\mu(1-f^s)}{(1-\mu)f^s})^n$ and $\frac{1-\hat{\mu}}{\hat{\mu}}(\frac{\hat{\mu}(1-f^s)}{(1-\hat{\mu})f^s})^n$ converge to 0, which implies that $\lim_{n\rightarrow \infty} L(a^n_{ciid},Q)=0$. If $f<\mu$, \Cref{claim:many} implies $f<\hat{\mu}$. Then both $\frac{1-\mu}{\mu}(\frac{\mu(1-f^s)}{(1-\mu)f^s})^n$ and $\frac{1-\hat{\mu}}{\hat{\mu}}(\frac{\hat{\mu}(1-f^s)}{(1-\hat{\mu})f^s})^n$ go to infinite, which implies that $\lim_{n\rightarrow \infty} L(a^n_{ciid},Q)=0$ as well.

\end{proof}

\section{Strategic Agents}\label{sec:strategic}

 We provide a payment scheme that incentivizes truthful reports for both the forecasts and expectations. 

\paragraph{Payment Scheme} Given the outcome is $w$, agent $1$ will be paid $c-(w-f_1)^2-(p_1-f_2)^2$. Agent $2$ will be paid $c-(w-f_2)^2-(p_2-f_1)^2$ where $c>0$ is a constant. 

We can use any proper scoring rule to design the payment scheme. In the above scheme we pick quadratic scoring rule.

An agent's strategy is a mapping $\mathrm{s}$ from her truthful pair of forecast and expectation $(f,p)$ to a distribution over $[0,1]\times [0,1]$. When she performs the strategy, she will draw a random pair of forecast and expectation based on the distribution $\mathrm{s}(f,p)$. 

A strategy profile consists of all agents' strategies. Truth-telling defines a strategy profile that both agents tell the truth. A strategy profile and agents' beliefs consist of a perfect Bayesian equilibrium if (1) each agent’s strategy leads to optimal actions, given her beliefs and the strategies of the other agents, and (2) the beliefs are consistent. 

\begin{proposition}\label{prop:pay}
	Truth-telling is the only perfect Bayesian equilibrium in the above payment scheme. 
\end{proposition}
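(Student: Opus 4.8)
The plan is to exploit the fact that each agent's payment splits additively into a forecast term and a peer-expectation term, each governed by a strictly proper (quadratic) scoring rule, and to handle the two components in sequence. First I would observe that agent $1$'s payment $c-(w-f_1)^2-(p_1-f_2)^2$ depends on her forecast report $f_1$ only through $(w-f_1)^2$ and on her peer-expectation report $p_1$ only through $(p_1-f_2)^2$; since $f_2$ is agent $2$'s report and neither of agent $1$'s own reports enters the other's term, she may optimize $f_1$ and $p_1$ separately.

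The first step would establish that truthful forecasting is strictly dominant. Taking the expectation of $-(w-f_1)^2$ over agent $1$'s posterior for $W$, and using $W\in\{0,1\}$ so that $\E[W^2]=\E[W]$, the expected forecast score equals $2f_1\Pr[W=1\mid s_1]-f_1^2-\Pr[W=1\mid s_1]$, which is strictly concave in $f_1$ and uniquely maximized at $f_1=\Pr_Q[W=1\mid S_1=s_1]$, her true forecast. Because this term is unaffected by agent $2$'s strategy, reporting the true forecast is a strictly dominant best response for the forecast component, and symmetrically for agent $2$. Hence in every perfect Bayesian equilibrium both agents report truthful forecasts. The second step would then pin down the peer expectations: conditioning on the first step, agent $2$'s report satisfies $f_2=\Pr_Q[W=1\mid S_2=s_2]$, and by belief consistency in the perfect Bayesian equilibrium, agent $1$'s subjective distribution over $f_2$ coincides with the $Q$-induced distribution of agent $2$'s true forecast given $S_1=s_1$. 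Minimizing $\E[(p_1-f_2)^2\mid s_1]$ over $p_1$ then has the unique solution $p_1=\E_Q[f_2\mid S_1=s_1]$, which is exactly the true peer expectation; symmetrically for $p_2$. Combining the two steps shows truth-telling is a perfect Bayesian equilibrium, and the uniqueness of each best response shows it is the only one.

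I expect the main obstacle to be the endogeneity in the second step: the peer-expectation term scores $p_1$ against the \emph{reported} $f_2$ rather than against a fixed ground truth, so properness alone does not immediately determine $p_1$. The argument must first invoke the dominance from the first step to replace $f_2$ by agent $2$'s true forecast, and then use belief consistency to identify agent $1$'s subjective law over that report with the $Q$-conditional law; only after both moves does minimizing the squared error recover the true peer expectation. I would also take care to rule out alternative (possibly mixed) equilibria, which follows because the forecast best response is strictly dominant and the peer-expectation best response is the unique minimizer of a strictly convex objective, leaving no room for any deviation on a positive-measure set of signal realizations.
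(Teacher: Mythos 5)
Your proposal is correct and follows essentially the same argument as the paper's proof: first use strict properness of the quadratic scoring rule to show truthful forecasts are strictly optimal regardless of the other agent's play, then condition on truthful forecasts to pin down the peer expectations as the unique minimizers of the expected squared error. The paper states this in two sentences; your write-up simply fills in the details (strict concavity in $f_1$, belief consistency identifying agent 1's subjective law over $f_2$ with the $Q$-conditional law) of exactly that two-step structure.
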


\begin{proof}[Proof of \Cref{prop:pay}]
	Based on the property of quadratic scoring rule, it's strictly optimal for agent 1 and 2 to report their forecasts truthfully. Conditioning on that, it's also strictly optimal for agent 1 and 2 to report their expectations truthfully.
\end{proof}

\section{Many C.I.I.D. Conditionally Independent Agents: The Original Setting}\label{sec:manyregret}

\Cref{prop:manyregret} is implicitly implied by \citet{arieli2018robust}.

\begin{proposition}\label{prop:manyregret}
	For all aggregator $a$ with only agents' forecasts, there exists $Q$ such that when the number of agents is infinite, $L(a,Q)\geq \frac{3}{32}$. 
\end{proposition}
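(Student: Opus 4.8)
The plan is to mirror the lower-bound argument of \Cref{lemma:mix}: model the problem as a zero-sum game between nature and the aggregator, and exhibit two c.i.i.d.\ priors that become \emph{indistinguishable} from the aggregator's viewpoint as $n\to\infty$ while forcing opposite optimal forecasts. First I would fix a common ``confounding'' forecast value $f$ and build two binary-signal priors $Q,Q'$ as follows. Under $Q$ the signal equals a value $B$ with probability $1$ when $W=0$ and with probability $t<1$ when $W=1$, and equals a revealing value $G$ with $\Pr_Q[W{=}1\mid G]=1$ otherwise; symmetrically, under $Q'$ the signal equals $B'$ with probability $1$ when $W=1$ and $t'<1$ when $W=0$, with a revealing value $G'$ satisfying $\Pr_{Q'}[W{=}0\mid G']=1$. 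I would choose $t,t'$ so that the two confounding forecasts coincide, $\Pr_Q[W{=}1\mid B]=\Pr_{Q'}[W{=}1\mid B']=f$; this is solvable for any $\mu'<\mu$ by picking $f\in(\mu',\mu)$.

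Next I would analyze the $n\to\infty$ limits. Because the ``wrong-$W$'' probabilities $t^n,t'^n$ vanish, the event that all $n$ agents report the confounding value has probability tending to $1-\mu$ under $Q$ and to $\mu'$ under $Q'$, and on this event the optimal forecast (the $n$-agent analogue of the formula in \Cref{claim:fix}) tends to $0$ under $Q$ and to $1$ under $Q'$. The crucial observation is that the all-agree profile $(f,\dots,f)$ is the \emph{only} forecast profile with positive probability under both priors: every other profile contains a report of $1$ (revealing the $Q$-world, $W=1$) or a report of $0$ (revealing the $Q'$-world, $W=0$), so the aggregator can match the optimal forecast there with vanishing loss and these profiles contribute nothing to the lower bound.

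Finally I would invoke the game reduction of \Cref{obs:mix}. Letting nature play $Q$ with probability $x$ and $Q'$ with probability $1-x$, on the indistinguishable input $(f,\dots,f)$ any aggregator must commit to a single output $y$, so its mixed regret is at least $x(1-\mu)\,y^2+(1-x)\mu'\,(1-y)^2$ in the limit. Minimizing over $y$ gives $\dfrac{x(1-\mu)(1-x)\mu'}{x(1-\mu)+(1-x)\mu'}$, and optimizing over $x$ and over $\mu'<\mu$ makes this quantity exceed $\tfrac{3}{32}$ (for instance $x=\tfrac12$ with $\mu,\mu'$ close together already suffices). Since only two fixed priors are involved, at least one of them satisfies $\limsup_{n}L(a,Q)\ge\tfrac{3}{32}$, which is the claim. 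The main obstacle is the indistinguishability bookkeeping in the limit: I must verify that the confounding forecast value truly coincides under both priors and that every non-all-agree profile is revealing, so that the unavoidable error is isolated on the single input $(f,\dots,f)$; once this is established, the probability and posterior limits follow from $t^n,t'^n\to0$ and the remaining parameter optimization is routine calculus. This recovers the bound stated implicitly by \citet{arieli2018robust}.
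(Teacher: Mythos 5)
Your proposal is correct, but it takes a genuinely different route from the paper's proof. The paper does not build new priors at all: its point is to make explicit what \citet{arieli2018robust} implicitly showed, so it reuses their two fixed priors $Q_1,Q_2$ (under both of which the forecasts are $1/4$ and $3/4$), mixes them with weights $3/8$ and $5/8$, and takes as the confounding event the \emph{empirical} forecast profile in which a $1/4$ fraction of agents report $1/4$; by the law of large numbers this event coincides with $\{W=1\}$ under $Q_1$ and with $\{W=0\}$ under $Q_2$, the Bayesian output on it is $1/2$, and the regret computes to exactly $3/32$. You instead design fresh binary-signal priors where the confounding input is the single all-agree profile $(f,\dots,f)$ and indistinguishability is driven by $t^n,t'^n\to 0$ rather than by matching conditional signal frequencies, with every other profile perfectly revealing ($Q$ produces only reports in $\{f,1\}$ and $Q'$ only in $\{f,0\}$, so the all-$f$ profile is indeed the unique common-support input). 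Your bookkeeping is sound: $t=\frac{f(1-\mu)}{\mu(1-f)}$ and $t'=\frac{\mu'(1-f)}{f(1-\mu')}$ lie in $(0,1)$ precisely when $\mu'<f<\mu$, the limiting masses are $1-\mu$ and $\mu'$ with limiting posteriors $0$ and $1$, and minimizing over $y$ gives $\frac{x(1-\mu)(1-x)\mu'}{x(1-\mu)+(1-x)\mu'}$ as you state. What your construction buys is a strictly stronger constant: with $x=\tfrac12$ and $\mu'\to\mu$ the bound tends to $\mu(1-\mu)/2$, hence to $1/8>3/32$ near $\mu=\tfrac12$ (e.g., $\mu=0.5$, $\mu'=0.45$, $f=0.475$ already yields $\approx 0.118$), whereas the paper's construction is pinned at exactly $3/32$. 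What the paper's route buys is attribution: the proposition is stated as a result implicit in \citet{arieli2018robust}, and reproducing their exact priors is what justifies that claim. Both arguments rest on the same mixture/zero-sum reduction (\Cref{obs:mix}), including the final passage from the mixed lower bound to a single prior via $\limsup$, which you handle correctly.
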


\begin{proof}[Proof of \Cref{prop:manyregret}]
	\citet{arieli2018robust} propose the following two information structures. Nature picks $Q_1$ with probability $\frac{3}{8}$ and $Q_2$ with probability $\frac{5}{8}$.

\begin{center} 
\begin{tabular}{ |c| c c |c| c c |}
\hline
 $Q_1\approx$  & 1 & 0 & $Q_2\approx$ & 1 & 0\\ 
  \hline 
 A  & $\frac{1}{8}$ & $\frac{3}{8}$ & A & $\frac{1}{40}$ &  $\frac{3}{40}$ \\ 
 B & $\frac{3}{8}$  & $\frac{1}{8}$ & B & $\frac{27}{40}$ &  $\frac{9}{40}$ \\
 \hline  
\end{tabular}
\end{center}
 
Even when the number of agents is infinite, when $1/4$ fraction of agents forecast $\frac14$ and, $3/4$ fraction of agents forecast $\frac34$, the Bayesian aggregator cannot infer nature's choice. Let $E$ denote this event. By calculations, the Bayesian aggregator's optimal forecast is $\frac12$. 

Thus, the regret of the Bayesian aggregator is at least

\begin{align*} & \frac{3}{8} \Pr_{Q_1}[E] (\frac{1}{2} - \Pr_{Q_1}[W=1|E])^2 + \frac{5}{8} \Pr_{Q_2}[E] (\frac{1}{2} - \Pr_{Q_2}[W=1|E])^2\\
	= & \frac{3}{8} \Pr_{Q_1}[W=1]*(\frac{1}{2} - 1)^2 + \frac{5}{8} \Pr_{Q_2}[W=0](\frac12 - 0)^2\\
	= & \frac{3}{8} \frac{1}{2} (\frac{1}{2} - 1)^2 + \frac{5}{8} \frac{12}{40}(\frac12 - 0)^2\\
	= & \frac{3}{32}\approx 0.09375 \end{align*}

According to \Cref{obs:mix}, for all aggregator $a$, even when the number of agents is infinite, either $L(a,Q_1)$ or $L(a,Q_2)$ will be $\geq \frac{3}{32}\approx 0.09375$. 

\end{proof}

\section{Numerical Experiment Software}

We use Matlab (2022a) global optimization toolbox to estimate the maximum. We first generate the symbolic formulas of the corresponding reduced functions. We then run the GlobalSearch solver multiple times. For cases without symbolic formulas, we run other solvers. For function with 5 variables, we also run the grid search with 0.01 as the step-size. We pick the estimation result as the maximum over all runs. The estimation for Simple Average and Average-Prior scheme match the numerical results in \citet{arieli2018robust}.

\end{document}